\title{Reputation Building under Observational Learning}
\author{Harry PEI\footnote{Department of Economics, Northwestern University. I thank Daron Acemoglu, S. Nageeb Ali, Alp Atakan, Jie Bai, Dhruva Bhaskar, Drew Fudenberg, Olivier Gossner, Johannes Johnen, George Lukyanov, Wojciech Olszewski, Peter Norman S{\o}rensen,
Asher Wolinsky, Alex Wolitzky, and my seminar participants for helpful comments. Errors are mine.}}
\date{June 27, 2020}
\begin{document}
\newtheorem{Proposition}{\hskip\parindent\bf{Proposition}}
\newtheorem{Theorem}{\hskip\parindent\bf{Theorem}}
\newtheorem*{Theorem1}{\hskip\parindent\bf{Theorem 1'}}
\newtheorem*{Theorem3}{\hskip\parindent\bf{Theorem 3'}}
\newtheorem*{Theorem4}{\hskip\parindent\bf{Theorem 4'}}
\newtheorem{Lemma}{\hskip\parindent\bf{Lemma}}[section]
\newtheorem{Corollary}{\hskip\parindent\bf{Corollary}}
\newtheorem{Definition}{\hskip\parindent\bf{Definition}}
\newtheorem{Assumption}{\hskip\parindent\bf{Assumption}}
\newtheorem{Condition}{\hskip\parindent\bf{Condition}}
\newtheorem*{Conjecture2}{\hskip\parindent\bf{Complete Records Benchmark}}
\newtheorem*{Conjecture1}{\hskip\parindent\bf{Modeling Assumption}}
\newtheorem*{Conjecture3}{\hskip\parindent\bf{Implications of Theorems 1 and 2}}
\newtheorem*{Conjecture4}{\hskip\parindent\bf{Implication of Theorem 4}}
\newtheorem{Conjecture}{\hskip\parindent\bf{Implication}}
\maketitle
\numberwithin{equation}{section}

\noindent
I study a social learning model in which the object to learn is a strategic player's \textit{endogenous actions} rather than an \textit{exogenous state}. A patient seller faces a sequence of buyers and decides whether to build a reputation for supplying high quality products. Each buyer does not have access to the seller's complete records, but can observe all previous buyers' actions, and some informative private signal about the seller's actions. I examine how the buyers' private signals affect the speed of social learning and the seller's incentives to establish reputations. When each buyer privately observes a bounded subset of the seller's past actions, the speed of learning is strictly positive but can vanish to zero as the seller becomes patient. As a result, reputation building can lead to low payoff for the patient seller and low social welfare. When each buyer observes an unboundedly informative private signal about the seller's current-period action, the speed of learning is uniformly bounded from below and a patient seller can secure high returns from building reputations. My results shed light on the effectiveness of various policies in accelerating social learning and encouraging sellers to establish good reputations.\\


\noindent \textbf{Keywords:} social learning, reputation, speed of learning, stochastic network.\\
\noindent \textbf{JEL Codes:} C73, D82, D83
\begin{spacing}{1.5}
\section{Introduction}\label{sec1}
Recent empirical findings suggest that reputation mechanisms break down in a variety of markets in developing countries.
For example, in the
 markets for malaria drugs (Nyqvist, et al. 2018), fruits (Bai 2018), and milk powder (Bai, et al. 2019),
consumers believe that sellers are likely to supply low quality.
These pessimistic beliefs persist over time, which lower
the seller's returns from supplying high quality, and make these beliefs self-fulfilling.
These ``bad equilibria'' are at odds with the canonical reputation results in Fudenberg and Levine (1989,1992), which suggest that buyers' mistrust cannot persist and patient sellers can secure high returns from a reputation for supplying high quality.

This paper presents a reputation model, which, among other results, suggests a rationale for such reputation failures and persistent mistrust. I argue that when each consumer has \textit{limited access to a seller's past records} (e.g., observes
a bounded subset of the seller's past actions), and learns primarily from \textit{previous consumers' choices}, consumers' learning can be arbitrarily slow, which
can wipe out the patient seller's returns from building reputations. My results also suggest
policy interventions that can accelerate learning and can restore seller's incentives to build reputations.


My modeling assumptions, limited access to sellers' past records and learning from others' choices, fit into a number of retail markets
in developing economies. Limited availability of formal records may result from inadequate record-keeping technologies.
Even when official records are available, their credibility is undermined by institutional failures such as collusion between merchants and bureaucrats. As a result, information about the seller's past actions is dispersed among consumers
(e.g., each consumer observes the quality of product she bought from the seller), and is passed on to future buyers via consumers' choices and
word-of-mouth communication.\footnote{Evidence for observational learning and word-of-mouth communication
is documented in
the markets for food (Cai, Chen and Fang 2009), fertilizers (Conley and Udry 2010), drugs (Nyqvist, et al. 2018), and so on.}

Can consumers' social learning provide adequate incentives for sellers to supply high quality? How does
the speed of learning depend on
consumers' private signals?
For example, does a seller have stronger reputational incentives when each consumer observes his actions in the last $10$ periods
(i.e., after selling a low quality product, the seller will be punished by the next 10 consumers), or when each consumer observes some private signal about his current-period action? Should regulators issue quality certificates to high-quality products or inform consumers about low-quality products? Answering these questions
not only helps to understand the causes of reputation failures, but also
informs policy-makers about which information to provide to consumers in order to restore efficiency.\footnote{Some obvious solutions to the  reputation failure problem include: inform consumers about the seller's complete records, or screen out all low-quality products sold on the market. However, these policies are hard to implement in practice due to their formidable costs, and their violations of consumers' capacity constraints to process detailed information.
Policies that respect budget and attention constraints include: (1) marginal improvements in record-keeping technologies that
allow each consumer to observe
a longer history of the seller's past actions; (2) inspect a \textit{small fraction} of products currently sold on the market and
inform consumers about the quality of inspected products.}

Motivated by these questions, I study  an infinitely repeated game between a patient player $1$ (e.g., seller) and a sequence of player $2$s (e.g., consumers), arriving one in each period and each plays the game only once.
Player $1$ is either a strategic type who maximizes his discounted average payoff, or a commitment type who plays his
(pure) Stackelberg action in every period. I focus on situations in which the commitment type occurs with small but positive probability.

My modeling innovation is in the monitoring structure. In my baseline model, every player $2$ observes the \textit{entire history} of player $2$s' actions, in addition to player $1$'s actions in a \textit{bounded number of periods}. This resembles retail markets in developing countries. Given the geographic proximity, consumers can casually observe each other's choices in these localized markets. However, learning about the seller's actions (i.e., the quality and attributes of various products he sold) requires more time and effort. For example, a consumer needs to talk to his friends and learn about their personal experiences.
Due to the time costs
of these conversations  (Niehaus 2009) and consumers' limited capacity to process information (Sims 2003), it is reasonable to assume that each consumer only has a limited number of friends who are willing to spend time answering her questions, and as a result, can
learn at most a bounded number of the seller's past actions before making her decision.\footnote{Evidence for consumers' limited attention or limited capacity to process detailed information is abundant in the empirical IO literature. See for example, Heidhues, Johnen and Koszegi (2020) and the references therein.}

When player $2$ observes player $1$'s actions in the last $K$ periods, Theorem \ref{Theorem1} shows that there exist equilibria in which player $1$'s payoff is no more than his \textit{worst stage-game equilibrium payoff}.\footnote{Theorem \ref{Theorem1} applies to every game that has a pure strategy Nash Equilibrium and satisfies a generic assumption on stage-game payoffs. Appendix A.3 provides conditions under which player $1$ attains his \textit{minmax payoff} in equilibrium.} This is the case no matter how patient he is and how large $K$ is. In \textit{monotone-supermodular} games that fit into buyer-seller applications, there exist equilibria in which both players receive their respective minmax payoffs (Theorem \ref{Theorem2}).
My reputation failure result extends when each
player $2$ \textit{randomly samples}
a \textit{bounded subset} of her predecessors,
observes player $1$'s actions against the individuals in her sample, in addition to
all her predecessors' actions (Theorem \ref{Theorem3}). My findings contrast to canonical reputation results
which suggest that a patient player can secure his Stackelberg payoff in \textit{all equilibria}.

My proof constructs a class of equilibria in which the speed of learning is \textit{strictly positive},
but \textit{vanishes to zero} as player $1$ becomes arbitrarily patient. To see how slow learning causes reputation failures, take $K=1$ and consider the product choice game in Mailath and Samuelson (2001):
\begin{center}
\begin{tabular}{| c | c | c |}
  \hline
  -- & $T$ & $N$ \\
  \hline
  $H$ & $2,1$ & $-1,0$ \\
  \hline
  $L$ & $3,-1$ & $0,0$ \\
  \hline
\end{tabular}
\end{center}
Play starts from a \textit{non-trusting phase} in which $(L,N)$ is played with high probability, and enters a \textit{trusting phase} as soon as some buyer has played $T$, after which $(H,T)$ is played on the equilibrium path. In retail markets of developing countries, these phases translate into two self-fulfilling social norms: one in which buyers do not trust the seller and the seller supplies high quality with low probability, another one in which buyers trust the seller and the seller supplies high quality with high probability. The probability that a buyer plays $T$ in the non-trusting phase depends on the seller's action in the period before, which is zero if the seller supplied low quality, and is strictly positive if the seller supplied high quality (which I call \textit{the rate of trust building}).

The rate of trust building decreases with the seller's patience since the seller's incentive to exert high effort is to affect the buyers' actions in the next period (which can be observed by all future buyers and affect the seller's continuation value). When the seller is patient, he
is willing to play $H$ despite his action affects the buyer's future action with low probability. When the buyer's action becomes less responsive to the seller's action in the previous period, it
lowers the speed with which future buyers learn about the seller's type, and
 prolongs process of building reputations.

In order to confirm that the driving force behind Theorem \ref{Theorem1} is \textit{slow learning} rather than alternative mechanisms proposed in the social learning and reputation literature,
I establish three properties that apply to \textit{all equilibria}. From the perspective of social learning,
these properties highlight the new economic forces when the learning process is
endogenously controlled by a strategic long-run player.

First, player $2$s never herd on any action that does not best reply against player $1$'s Stackelberg action. Intuitively, player $1$ has no intertemporal incentive after player $2$s herd, and hence, plays a myopic best reply against that herding action.
When player $1$'s myopic best reply is his Stackelberg action, all player $2$s best reply against this Stackelberg action.
When player $1$'s best reply is not his Stackelberg action, player $2$ believes that player $1$ is the commitment type after observing him playing the Stackelberg action, and therefore, best replies against the Stackelberg action in the next period. Both conclusions contradict the presumption on player $2$s' herding.

Using similar ideas, I show that when players' stage-game payoffs are monotone-supermodular, (1) player $2$s' actions in the next $K$ periods are informative about player $1$'s current-period action unless
player $1$ is guaranteed to receive his optimal commitment payoff in the next $K$ periods; (2) player $1$'s asymptotic payoff from establishing a reputation is at least a fraction $\frac{K}{K+1}$ of his optimal commitment payoff. These findings suggest that
Theorem \ref{Theorem1} is not driven by low-payoff outcomes in the long run,
or player $2$s' actions being uninformative, which stand in contrast to models of bad reputations.

Motivated by  policy interventions such as a regulator randomly inspects a small fraction of products sold on the market and informs consumers about the quality of inspected products,
Section \ref{sec5} examines an alternative specification of player $2$s' private information, in which every player $2$ observes
a \textit{private signal} of player $1$'s \textit{current period action} (e.g., the inspect result or the absence thereof),
in addition to previous player $2$s' actions, and possibly, player $1$'s actions in the last $K \in \mathbb{N} \cup\{0\}$ periods.

Theorem \ref{Theorem4} shows that in games where player $1$'s action choice is binary, he can secure his commitment payoff in all equilibria \textit{if and only if} player $2$'s private signal is \textit{unboundedly informative}, namely, there exists a signal realization that occurs with positive probability only when player $1$ plays his Stackelberg action.
In games where player $1$ has three or more actions, Theorem \ref{Theorem4} extends when players' stage-game payoffs are monotone-supermodular and
the distribution of private signal satisfies a \textit{monotone likelihood ratio property} (or MLRP), which means that the likelihood ratio between a high signal realization and a low signal realization increases when player $1$ takes a higher action.\footnote{I show by counterexample that the monotone-supermodularity and MLRP requirements are not redundant, i.e., when player $1$ has three or more actions, player $2$'s private signal being  unboundedly informative is \textit{neither necessary nor sufficient} for player $1$ to secure high returns from building reputations when one of these conditions is violated.}

Theorem \ref{Theorem4} is reminiscent of a result in Smith and S{\o}rensen (2000), that myopic agents' actions asymptotically match the state if and only if their private signals are unboundedly informative. However, their result \textit{does not} imply that player $1$ can secure a high payoff for two reasons. First, the myopic players in my model learn about the \textit{endogenous actions} of a strategic player rather than an \textit{exogenous state}. Second, converging to a high-payoff outcome asymptotically \textit{does not} imply that a patient patient receives a high discounted average payoff. What also matters is the rate of convergence.

I establish Theorem \ref{Theorem4} by showing that in binary action games (or monotone-supermodular games with MLRP), player $2$s observing an unboundedly informative private signal guarantees a \textit{lower bound on the speed of learning}, which is \textit{independent of player $1$'s discount factor}. In particular, after observing previous player $2$s' actions but before observing her private signal, as long as player $2$ believes that she will \textit{not} best reply against the Stackelberg action with positive probability, the probability that she best replies against the Stackelberg action must be \textit{strictly higher} when player $1$ plays the Stackelberg action.
As a result, the informativeness of player $2$'s action
is uniformly bounded away from zero, and a patient player $1$ can secure his Stackelberg payoff
 by building his reputation.

In terms of policy implications, Theorems \ref{Theorem1}, \ref{Theorem2}, and \ref{Theorem3} suggest that in markets with significant adverse selection, lack of formal records, and consumers relying on observational learning,
marginal improvements in the record-keeping technology (i.e., increase $K$ to another finite number, or remove the noise in the observation of  seller's past actions) is ineffective:
the market may remain in a bad equilibrium in which learning is slow, the seller's reputational incentives are weak, and mistrust between buyers and sellers persist for a long time.

Applying to monotone-supermodular games (which include but not limited to the product choice game), Theorem \ref{Theorem4} suggests that randomly inspecting a \textit{small fraction of products} currently sold on the market and issuing quality certificates to high-quality products can restore patient seller's incentives to supply high quality. Due to the presence of observational learning,
this policy is effective
even when the certificate is noticed only by consumers who demand products in the current period.
However, the regulator needs to ensure that the certificate issued to high-quality products \textit{cannot be forged} and cannot be used on low-quality products, i.e., the quality certificate needs to be unboundedly informative about the seller's Stackelberg action.
By contrast, informing consumers only about which of the inspected products have low quality is ineffective.
The details are discussed in Section \ref{sec6}.

\paragraph{Related Literature:} This paper contributes to the social learning literature by highlighting the new economic forces when
the \textit{object to learn} is a strategic long-run player's \textit{endogenous actions} rather than an \textit{exogenous state}.
Due to the long-run player's incentive constraints and the presence of  commitment type, the myopic players \textit{never} herd on any action that does not best reply against the commitment action, and the bad equilibrium is driven by the low rate of learning.
This contrasts to the canonical social learning models of Banerjee (1992), Bikhchandani, Hirshleifer and Welch (1992) and Smith and S{\o}rensen (2000), in which inefficiencies are driven by players herding on the wrong action.\footnote{Players also learn about an exogenous state in the recent works of Rosenberg and Vieille (2019) and Harel, Mossel, Strack and Tamuz (2020). Bose, Orosel, Ottaviani and Vesterlund (2006) and Arieli, Koren and Smorodinsky (2017) provide necessary and sufficient conditions for herding to occur when the state is exogenous but the price for each action is endogenous.  In Kultti and Miettinen (2006), Mueller-Frank and Pai (2016), and Ali (2018),
the myopic players learn about an exogenous state, but can endogenously decide how many predecessors' actions to observe. A contemporary work by Logina, Lukyanov and Shamruk (2019) studies a social learning model in which every myopic player observes a private signal about a patient player's action. They show that the patient player exerts high effort when the myopic players' beliefs are intermediate and exerts low effort otherwise. The intuition is similar to career concern models, in which a patient player's reputational incentives are weaker when the public has more precise belief about his type. }

In terms of research question, I examine the effects of social learning on a patient player's \textit{discounted average payoff}. This contrasts to existing results that focus on players' asymptotic beliefs,  asymptotic rates of learning (Gale and Kariv 2003, Harel, Mossel, Strack and Tamuz 2020), and asymptotic payoffs (Rosenberg and Vieille 2019).\footnote{A separate strand of literature characterizes the optimal mechanism that maximizes myopic players' discounted average payoff in social learning models where these players move sequentially and learn about an \textit{exogenous state}. Some notable examples include Che and H\"{o}rner (2018) and
Smith, S{\o}rensen and Tian (2020).}
My results relate the patient player's guaranteed equilibrium payoff to
the \textit{minimal rate of learning} that satisfies his incentive constraints, and in particular, whether
this rate is bounded away from zero when player $1$ is arbitrarily patient. This perspective unifies my reputation failure results (Theorems \ref{Theorem1} to \ref{Theorem3}) and my positive reputation result (Theorem \ref{Theorem4}).

In the constructive proofs of my reputation failure results, the patient player's discounted average payoff is low despite his asymptotic payoff is high. This comparison highlights the distinction between \textit{discounted average payoff} and \textit{asymptotic payoff} in models where the rate of learning is endogenous.

My paper contributes to the reputation literature by establishing a reputation result when myopic players learn from each other's actions in addition to observing some informative private signals (Theorem \ref{Theorem4}). My model captures consumers' learning
in markets with asymmetric information, lack of formal records, and consumers relying on observational learning and word-of-mouth communication.
To the best of my knowledge, this has not been explored in the existing reputation literature.

My Theorems \ref{Theorem1} to \ref{Theorem3} identify a new mechanism that accounts for reputation failures in these markets. In particular, reputation fails since the speed of learning vanishes as the reputation-building player becomes patient. This differs from existing theories that are based on the uninformed player's forward-looking incentives and the lack-of identification of the informed player's actions.

In particular,
models with lack-of identification such as Ely and V\"{a}lim\"{a}ki (2003), Ely, Fudenberg and Levine (2008), and Deb, Mitchell and Pai (2020) focus on \textit{participation games}, in which the uninformed player(s) can take a \textit{non-participating action} under which the public signal is uninformative about the informed player's current period action. In Levine (2019), the signals are less informative when
 the uninformed players do not participate.
These features contrast to my model in which
the uninformed players' actions \textit{cannot} prevent her successors from observing the informed player's current period action, and
their actions are informative about the informed player's actions in the past.

Cripps and Thomas (1997) and Chan (2000) construct low-payoff equilibria when players are equally patient and the uninformed player can observe the entire history of the informed player's actions. In their models as well as other models with complete records,
\textit{uninformed player's patience} undermines reputation building while
the informed player's patience helps reputation building.
In my model, the uninformed players are myopic, but the informed player's patience decreases the rate of learning and causes reputations to fail.
I provide a more detailed explanation
in Section \ref{sub7.2}.

My paper is also related to repeated games with limited records,\footnote{See for example, the recent work of Bhaskar and Thomas (2019) and the references therein.} and more specifically, reputation models with bounded memories, which include
Liu (2011), Liu and Skrzypacz (2014), and Kaya and Roy (2020).
These papers study reputation games in which every short-run player observes a bounded sequence of the long-run player's past actions, but
\textit{cannot} observe each other's actions. Their assumption rules out the possibility of observational learning. By contrast,
my paper studies the effects of observational learning on a patient player's reputational incentives, as well as policy interventions that can accelerate observational learning and provide adequate incentives to build reputations.

\section{Baseline Model}\label{sec2}
Time is discrete, indexed by $t=0,1,2...$. A long-lived player $1$ (he, e.g., seller) with discount factor $\delta \in (0,1)$ interacts with an infinite sequence of short-lived player $2$s (she, e.g., buyers), arriving one in each period and each plays the game only once. In period $t$, players simultaneously choose their actions $a_t$ and $b_t$ from finite sets $A$ and $B$.
Players have access to a public randomization device.
Let $\xi_t$ be its realization in period $t$,
which is uniformly distributed on $[0,1]$.

Players' stage-game payoffs are $u_1(a_t,b_t)$ and $u_2(a_t,b_t)$. Let $\textrm{BR}_1: \Delta (B) \rightrightarrows 2^{A} \backslash \{\varnothing\}$ and $\textrm{BR}_2: \Delta (A) \rightrightarrows 2^{B} \backslash \{\varnothing\}$ be player $1$'s and player $2$'s best reply correspondences in the stage-game. The set of player $1$'s (pure) Stackelberg actions is
$\arg\max_{a \in A} \big\{ \min_{b \in \textrm{BR}_2(a)} u_1 (a,b) \big\}$.
I introduce two assumptions on players' stage-game payoffs:
\begin{Assumption}\label{Ass1}
$\textrm{BR}_1(b)$ is a singleton for every $b \in B$.
$\textrm{BR}_2(a)$ is a singleton for every $a \in A$.
Player $1$ has a unique pure Stackelberg action.
\end{Assumption}
A sufficient condition for
Assumption \ref{Ass1} is that $u_i(a,b) \neq u_i(a',b')$
for every $i \in \{1,2\}$ and $(a,b) \neq (a',b')$. This is satisfied for generic $(u_1,u_2)$ given that $A$ and $B$ are finite sets.
Let $a^*$ be player $1$'s (pure) Stackelberg action.
Let $b^* \in B$ be the unique element in $\textrm{BR}_2(a^*)$, which I refer to as player $2$'s \textit{Stackelberg best reply}. Let
$u_1(a^*,b^*)$ be player $1$'s \textit{Stackelberg payoff}.
\begin{Assumption}\label{Ass2}
There exists a pure-strategy Nash equilibrium in the stage-game.
\end{Assumption}
Under Assumption \ref{Ass2}, player $1$'s pure Stackelberg payoff is weakly greater than his payoff in any pure-strategy Nash equilibrium.
This assumption is satisfied in most of the games studied in the reputation literature, such as product choice games, chain store games, and coordination games. It rules out games such as rock-paper-scissors in which commitment to pure actions is not beneficial.

Player $1$ has perfectly persistent private information about his \textit{type}
$\omega \in \{\omega^s, \omega^c\}$, where $\omega^c$ stands for a \textit{commitment type} who mechanically plays the Stackelberg action $a^*$ in every period, and $\omega^s$ stands for a \textit{strategic type} who  can flexibly choose his actions in order to maximize his payoff.

Player $1$ can observe the all the actions taken in the past in addition to the current and past realizations of public randomization devices.
Let $h_1^t$ be a typical private history of the strategic-type player $1$ in period $t$, with $h_1^t \equiv \{ a_0,...,a_{t-1},b_0,...,b_{t-1},\xi_0,...,\xi_t\}$.
Let $\mathcal{H}_1^t$ be the set of $h_1^t$ and let $\mathcal{H}_1 \equiv \cup_{t=0}^{\infty} \mathcal{H}_1^t$.
Strategic-type player $1$'s strategy is
$\sigma_1 : \mathcal{H}_1 \rightarrow \Delta (A)$, with $\sigma_1 \in \Sigma_1$.

Player $2$'s prior belief attaches probability $\pi_0 \in (0,1)$ to the commitment type $\omega^c$. Her private history coincides with the public history, which consists of calendar time, all her predecessors' actions (obtained from observational learning), player $1$'s actions in the past $K \in \{1,2,3,...\}$ periods (obtained from word-of-mouth communication with player $2$s who arrive before her), and the current realization of public randomization device.\footnote{In Section \ref{sec4}, I generalize my result to settings in which each player $2$ observes a bounded \textit{stochastic} subset of player $1$'s past actions. In Section \ref{sec5}, I examine the game's outcomes when each player $2$ can also observe an informative signal about player $1$'s current-period action. The public randomization device is introduced to ease the exposition. My results also apply under alternative assumptions on the observability of the public randomization device.}
The exogenous parameter $K$ measures player $2$'s capacity to process detailed information about player $1$'s past actions
(e.g., quality and attributes of seller's products).

Formally,
let $h^t$ be the public history in period $t$, with $h^t \equiv \{b_0,b_1,...,b_{t-1},a_{\max\{0,t-K\}},...,a_{t-1},\xi_{t}\}$.
Let $\mathcal{H}^t$ be the set of $h^t$ and let $\mathcal{H} \equiv \cup_{t=0}^{\infty} \mathcal{H}^t$.
Player $2$'s strategy is $\sigma_2 : \mathcal{H} \rightarrow \Delta (B)$, with $\sigma_2 \in \Sigma_2$.
Let $\pi(h^t)$ be the probability
that player $2$'s belief at $h^t$ attaches to the commitment type,
which I refer to as player $1$'s \textit{reputation} at $h^t$.

For every strategy profile $(\sigma_1,\sigma_2)$, strategic-type player $1$'s \textit{discounted average payoff} is
\begin{equation}
\mathbb{E}_1^{(\sigma_1,\sigma_2)} \Big[   \sum_{t=0}^{\infty} (1-\delta) \delta^t u_1(a_t,b_t) \Big],
\end{equation}
where $\mathbb{E}_1^{(\sigma_1,\sigma_2)}[\cdot]$ is the expectation over histories
when player $2$s play according to $\sigma_2$ and player $1$ plays according to $\sigma_1$.
Player $2$s' (discounted average) welfare is
\begin{equation}
\mathbb{E}^{(\sigma_1,\sigma_2,\pi_0)} \Big[   \sum_{t=0}^{\infty} (1-\delta_s) \delta_s^t u_2(a_t,b_t) \Big],
\end{equation}
where $\mathbb{E}^{(\sigma_1,\sigma_2,\pi_0)}[\cdot]$ is the expectation over histories
when player $2$s play according to $\sigma_2$, player $1$ plays according to $\sigma_1$ with probability $1-\pi_0$, and plays $a_1^*$ in every period with probability $\pi_0$,
and $\delta_s \in (0,1)$ is a discount factor that a planner uses to evaluate player $2$s' welfare and can be different from $\delta$.

I use sequential equilibrium for results on reputation failures,\footnote{I adopt the definition of sequential equilibrium in P\c{e}ski (2014, page 658) for this infinite horizon game.} i.e., the existence of equilibria in which a patient player $1$ receives a low payoff. This ensures that the equilibria I construct are not driven by uninformed players' unreasonable off-path beliefs.
I use Bayes Nash equilibrium (or BNE)
for results that establish the common properties of all equilibria. This ensures the robustness of my findings against equilibrium selection.

\section{Reputation Failure under Social Learning}\label{sec3}
I show that reputation building can result in low payoffs for both players, which contrasts to
the conclusions of canonical reputation models.
Let $(a',b') \in A \times B$ be player $1$'s \textit{worst} pure-strategy Nash Equilibrium in the stage-game, which exists under Assumption \ref{Ass2}.
Let $\underline{v}_1 \equiv u_1(a',b')$, which by definition, is weakly lower than $u_1(a^*,b^*)$.
Let
\begin{equation*}
\underline{\delta} \equiv \left\{ \begin{array}{ll}
\max \Big\{
\frac{\max_{a \in A} u_1(a,b^*)-u_1(a^*,b^*)}{\max_{a \in A} u_1(a,b^*)-\underline{v}_1} , \frac{\underline{v}_1-u_1(a^*,b')}{u_1(a^*,b^*)-u_1(a^*,b')}
\Big\}  & \textrm{ if } \underline{v}_1 < u_1(a^*,b^*) \\
0 & \textrm{ if } \underline{v}_1 = u_1(a^*,b^*).
\end{array} \right.
\end{equation*}
For a given
parameter configuration $(\delta,\pi_0,K)$, let $\textrm{SE}(\delta,\pi_0,K) \subset \Sigma_1 \times \Sigma_2$ be the set of strategy profiles
that are part of some sequential equilibria.

\begin{Theorem}\label{Theorem1}
When stage-game payoffs satisfy Assumptions \ref{Ass1} and \ref{Ass2}.
For every $K \in \mathbb{N}$, there exists $\overline{\pi}_0 \in (0,1)$,
such that for every $\pi_0 \in (0,\overline{\pi}_0)$ and $\delta \geq \underline{\delta}$,
there exists $(\sigma_1^{\delta},\sigma_2^{\delta}) \in \textrm{SE}(\delta,\pi_0,K)$, such that:
\begin{equation}\label{3.1}
     \mathbb{E}_1^{(\sigma_1^{\delta},\sigma_2^{\delta})} \Big[   \sum_{t=0}^{\infty} (1-\delta) \delta^t u_1(a_t,b_t) \Big] = \underline{v}_1.
\end{equation}
\end{Theorem}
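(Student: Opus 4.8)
\section*{Proof proposal}

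The plan is to construct explicitly, for each admissible $(\delta,\pi_0,K)$, a sequential equilibrium realizing the stated payoff, built from two phases: a \emph{non-trusting phase} --- the initial phase, in which $(a',b')$ is played with probability close to one and from which player $1$ can always ``exit'' to a continuation worth $\underline{v}_1$ --- and an absorbing \emph{trusting phase}, in which $(a^*,b^*)$ is played every period. If $\underline{v}_1=u_1(a^*,b^*)$ then $\underline{\delta}=0$ and the argument simplifies, since the trusting phase and the worst-Nash continuation carry the same payoff; so assume $\underline{v}_1<u_1(a^*,b^*)$, hence $\underline{\delta}>0$.

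In the trusting phase player $1$ plays $a^*$, player $2$ plays $b^*$, and player $1$'s continuation value is $u_1(a^*,b^*)$. Any deviation by player $1$ is seen by player $2$ within $K$ periods and triggers reversion to the non-trusting phase, so the phase is incentive compatible precisely when $(1-\delta)\max_{a\in A}u_1(a,b^*)+\delta\underline{v}_1\le u_1(a^*,b^*)$, i.e.\ when $\delta$ is at least the first term defining $\underline{\delta}$. In the non-trusting phase the strategic type randomizes between $a^*$ and $a'$; player $2$ plays $b'$, which remains optimal as long as the probability she attaches to $a_t=a^*$ --- fed by the commitment prior together with the $a^*$'s she has observed among the last $K$ actions --- lies below her indifference threshold with $b^*$, and once this probability crosses the threshold player $2$ plays $b^*$ and the trusting phase begins. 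Playing $a'$ reveals the strategic type and sends play to the worst-Nash continuation; since player $2$ then plays $b'$, playing $a'$ is automatically worth $(1-\delta)u_1(a',b')+\delta\underline{v}_1=\underline{v}_1$. So it suffices to choose the \emph{rate of trust building} --- the probability with which an additional $a^*$ eventually tips player $2$ into $b^*$ --- so that playing $a^*$, which costs $u_1(a^*,b')<\underline{v}_1$ today (strict by Assumption~\ref{Ass1}, as $a'$ is player $1$'s unique best reply to $b'$) but offers a chance at the trusting-phase value $u_1(a^*,b^*)$, also nets exactly $\underline{v}_1$. Solving this indifference, the required probability lies in $[0,1]$ exactly when $\delta\geq\frac{\underline{v}_1-u_1(a^*,b')}{u_1(a^*,b^*)-u_1(a^*,b')}$, the second term defining $\underline{\delta}$; together with the trusting-phase condition this gives $\delta\geq\underline{\delta}$. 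Since along the equilibrium path player $1$ is indifferent with continuation value $\underline{v}_1$ from period $0$, his discounted average payoff equals $\underline{v}_1$, which is \eqref{3.1}. Note that the required probability --- the speed of trust building --- shrinks to zero as $\delta\to1$, which is why the construction survives for all large $\delta$ even though trust is built with positive probability.

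It remains to pin down beliefs and confirm the profile is a sequential equilibrium. Off path I attribute any surprising action of player $1$ so as never to reward him: a non-$a^*$ action in the trusting phase triggers reversion, playing $a'$ in the non-trusting phase rules out the commitment type and forecloses reputation building, and player $2$'s own deviations trigger reversion to the worst-Nash continuation. I check that these beliefs arise as limits of totally mixed strategies, and then, via the one-shot deviation principle, that no player gains from deviating; player $2$s are myopic best responders by construction, and player $1$'s incentives reduce to the phase-specific inequalities above.

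I expect the main obstacle to be the internal consistency of the non-trusting phase. Player $2$ never sees player $1$'s full record --- only the last $K$ actions and all previous player $2$s' actions --- so her posterior on the commitment type, and hence the best reply she is prescribed, must be recoverable from exactly that information through the equilibrium structure; and the strategic type's mixing probabilities and player $1$'s continuation values through the successive trust-building stages must solve a recursion that holds player $1$ indifferent at $\underline{v}_1$ while letting the value climb to $u_1(a^*,b^*)$ at the transition. Making this recursion solvable, with all mixing probabilities in $[0,1]$ and all continuation values admissible, forces the number of trust-building stages --- and hence the speed of learning --- to scale with $\delta$ in the way summarized by ``slow learning as $\delta\to1$''. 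Arranging this, handling the sliding window of $K$ observed actions when $K>1$ rather than a single action, and using the public randomization device to coordinate beliefs and to absorb integer constraints, is where I expect the real work to lie.
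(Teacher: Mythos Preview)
Your high-level plan matches the paper's: a building phase with player $1$ indifferent at value $\underline{v}_1$, an absorbing $(a^*,b^*)$ phase, and a transition rate of order $1-\delta$; the two incentive inequalities you identify are exactly the two terms of $\underline{\delta}$. But two features of your sketch would not close as stated, and the multi-stage recursion you anticipate is unnecessary. First, ``playing $a'$ \ldots\ forecloses reputation building'' is not recoverable from player $2$'s information: a player $2$ arriving more than $K$ periods later does not see that $a'$, and since earlier player $2$s play $b'$ in both your foreclosed continuation and in the ordinary building phase, she cannot distinguish the two states. Second, a pure ``posterior crosses the threshold'' trigger does not fire when $K$ is fixed and $\pi_0$ is small: observing $a^*$ in the last $K$ periods moves the posterior by a bounded amount that remains below player $2$'s indifference cutoff, so no belief-driven switch to $b^*$ ever occurs.

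The paper resolves both issues with a device you mention only in passing. The phase is a function \emph{solely} of the commonly observed history of player $2$'s actions---building phase until some $b_s=b^*$, maintenance thereafter---so it is common knowledge despite bounded memory of player $1$'s actions. The transition is coordinated by the public randomization $\xi_t$: conditional on $a_{t-1}=a^*$, if $\xi_t\le r$ then player $1$ switches to playing $a^*$ for sure (hence $b^*$ is genuinely player $2$'s best reply) and player $2$ plays $b^*$; otherwise the strategic type mixes $\tfrac{q^*}{2}a^*+(1-\tfrac{q^*}{2})a'$ and player $2$ plays $b'$. Thus $a'$ is not absorbing---it merely precludes transition next period---there is a single transition parameter $r$ rather than a ladder of stages, and $r$ solves exactly your indifference equation. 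The threshold $\overline{\pi}_0$ is chosen so that even $K$ consecutive observed $a^*$'s leave the posterior below $q^*/2$, keeping $b'$ a best reply whenever $\xi_t>r$.
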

According to Theorem \ref{Theorem1}, when the prior probability of commitment type is low,
there exist sequential equilibria in which the long-run player receives his lowest stage-game equilibrium payoff. This applies
regardless of his discount factor $\delta$ and player 2s' capacity to process information $K$.


Theorem \ref{Theorem1} demonstrates the failure of reputation effects when player $1$'s Stackelberg payoff is strictly greater than $u_1(a',b')$, i.e., player $1$ can strictly benefit from commitment in the stage game:
\begin{Condition}[Strict Benefit from Commitment]
$u_1(a^*,b^*)>\underline{v}_1$.\footnote{Condition 1 is less demanding than the lack-of-commitment condition in Cripps, Mailath, and Sameulson (2004, Assumption 3), which requires that the Stackelberg action $a^*$ does not best reply against $b^*$. My \textit{strict benefit from commitment condition} is satisfied not only in product choice games and entry deterrence games, but is also satisfied in games with conflicting interests such as chicken games, and coordination games in which player $1$ receives different payoffs from different pure strategy equilibria.}
\end{Condition}
Under the strict benefit from commitment condition, Theorem \ref{Theorem1} stands in contrast to the conclusions in Fudenberg and Levine (1989, 1992) and Gossner (2011): if player $2$s have \textit{unbounded observations} of player $1$'s past actions (i.e., $K=\infty$), or more generally, unbounded observations of noisy signals that can statistically identify player $1$'s actions, then a patient player $1$ can \textit{guarantee} his Stackelberg payoff $u_1(a^*,b^*)$ in \textit{all Bayes Nash equilibria} of the reputation game.

I argue that reputation fails because player $2$s' learning is too slow. In particular, although player $2$s' actions are informative about player $1$'s past actions, their informativeness vanishes to zero as player $1$ becomes patient.
In order to rule out alternative mechanisms, I establish some common properties of all Bayes Nash equilibria in Section \ref{sub3.3}. I show that
when player $1$ imitates the commitment type,
player $2$s never herd on actions other than $b^*$, their actions are informative about player $1$'s actions, and player $1$'s asymptotic payoff is at least a fraction $\frac{K}{K+1}$ of his Stackelberg payoff.

The logic of slow learning is reflected in my constructive proof of Theorem \ref{Theorem1} in Section \ref{sub3.4},
and is explained via
 Gossner (2011)'s entropy approach in Section \ref{sub7.1}.
For an informal illustration of the construction, the equilibrium play consists of a \textit{non-trusting phase} in which $(a',b')$ is played with high probability, and a \textit{trusting phase} in which $(a^*,b^*)$ is played for sure. Play starts from the non-trusting phase, and enters the trusting phase as soon as some player $2$ has played $b^*$. The probability with which player $2$ plays $b^*$ in the non-trusting phase
is zero if  $a'$ was played in the period before, and is a strictly positive number $r$ if $a^*$ was played in the period before. The rate of phase transition $r$ measures the informativeness of player $2$'s action about player $1$'s action,
which is also the speed with which future player $2$s learn about player $1$'s type.

In the above equilibrium, player $2$'s action converges to $b^*$ with probability $1$, but the rate of convergence $r$ vanishes to $0$ as $\delta$ goes to $1$. This low rate of learning wipes out player $1$'s returns from building a good reputation, discourages him from investing in his reputation, and makes player $2$s' pessimistic beliefs about player $1$'s actions in the early stages of the game self-fulfilling.

Two natural questions follow from Theorem \ref{Theorem1}.
First, even when $\underline{v}_1$ is strictly lower than $u_1(a^*,b^*)$, it may not equal to player $1$'s minmax payoff. Second, Theorem \ref{Theorem1} demonstrates the negative payoff consequences from the long-run player's perspective, but a more important question in many applications is the short-run players' welfare (e.g., consumers' welfare in the buyer-seller application). In order to address these concerns:
\begin{enumerate}
\item I identify a class of games (i.e., games with monotone-supermodular payoffs) that fit into buyer-seller applications,
in which $\underline{v}_1$ equals player $1$'s minmax payoff.
In Appendix A.3, I provide sufficient conditions under which player $1$'s lowest equilibrium payoff in the reputation game equals his minmax payoff, which can be strictly lower than $\underline{v}_1$.
\item I show that slow observational learning also results in low welfare for player $2$s,
regardless of the discount factor a social planner uses to evaluate different generations of player $2$s' payoffs.
In games with monotone-supermodular payoffs, there exist equilibria in which both players attain their respective minmax payoffs (Theorem \ref{Theorem2}).
\end{enumerate}

\paragraph{Attaining Long-Run Player's Minmax Payoff:} I review the notion of player $1$'s minmax payoff in Fudenberg, Kreps and Maskin (1990), which
takes player $2$s' myopia into account.
Let
\begin{equation}
    \mathcal{B}^* \equiv \{\beta \in \Delta(B) | \exists \alpha \in \Delta(A) \textrm{ s.t. } \beta \in \textrm{BR}_2(\alpha)\} \subset \Delta (B)
\end{equation}
be the set of player $2$'s mixed actions that best reply against some (mixed) actions of player $1$'s.
Since player $2$s are myopic, their actions
belong to $\mathcal{B}^*$ at every on-path history.
Player $1$'s minmax payoff is:
\begin{equation}\label{mixmin}
 \underline{v}_1^{min} \equiv  \min_{\beta \in  \mathcal{B}^*} \max_{a \in A} u_1(a, \beta).
\end{equation}
One can show that
$\underline{v}_1^{min}$
is player $1$'s
lowest payoff in any BNE of the reputation game.

I introduce the definition of \textit{monotone-supermodular games}, in which player $1$'s minmax payoff coincides with his lowest stage-game Nash Equilibrium payoff. This class of games have been a primary focus of the reputation literature and are applicable to the study of business transactions (Mailath and Samuelson 2001, Ekmekci 2011, Liu 2011, Liu and Skrzypacz 2014), capital taxation (Phelan 2006), monetary policy (Barro and Gordon 1983), and so on.
\begin{Condition}\label{con2}
Payoffs are monotone-supermodular if
there exist complete orders on $A$ and $B$:\footnote{This monotone-supermodular condition differs from the one in Pei (2020) that considers interdependent value environments. The requirement that $u_1(a,b)$ having strictly decreasing differences in $(a,b)$ is not required for the results in this section, but is needed for the results on stochastic sampling in Section \ref{sec4}.}
\begin{itemize}
  \item[1.] $u_1(a,b)$ is strictly decreasing in $a$ and is strictly increasing in $b$.
  \item[2.] $u_1(a,b)$ has non-increasing differences, and $u_2(a,b)$ has strictly increasing differences in $(a,b)$.
  \item[3.] $a^*$ is not the lowest element in $A$.
\end{itemize}
\end{Condition}
As an example,
the product choice game in the introduction satisfies monotone-supermodularity
when player $1$'s actions are ranked according to $H \succ L$, and player $2$'s actions are ranked according to $T \succ N$.
I provide economic interpretations of this condition in context of the buyer-seller application.
Let player $1$ be a seller with $a \in A$ interpreted as his effort or the quality of his product. Let each player $2$ be a buyer with $b \in B$ interpreted as the quantity she buys. Monotone-supermodularity requires that (1) it is costly for the seller to exert high effort (or equivalently, supply high quality), but he strictly benefits from buyers' purchases; (2) buyers have stronger incentives to purchase larger quantities when the seller's effort is higher (or equivalently, the seller's product quality is higher); (3) the seller receives more benefit from undercutting quality when a buyer purchases a larger quantity;
(4) exerting the lowest effort is not the seller's optimal commitment action.

Under the first and second requirements in Condition \ref{con2}, player $1$'s minmax payoff coincides with his lowest pure strategy Nash equilibrium payoff in the stage game. Under the third requirement, his minmax payoff is strictly lower than his pure Stackelberg payoff.
\begin{Lemma}\label{L3.1}
If players' payoffs are monotone-supermodular, then $\underline{v}_1=\underline{v}_1^{min}<u_1(a^*,b^*)$.
\end{Lemma}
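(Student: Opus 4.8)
The plan is to compute all three quantities explicitly and compare. Write $\underline{a}$ for the lowest element of $A$. Since $u_1(a,b)$ is strictly decreasing in $a$ (part~1 of Condition \ref{con2}), the map $a\mapsto u_1(a,\beta)$ is strictly decreasing for every $\beta\in\Delta(B)$, so $\textrm{BR}_1(\beta)=\{\underline{a}\}$ and $\max_{a\in A}u_1(a,\beta)=u_1(\underline{a},\beta)$. Consequently, in any pure-strategy Nash equilibrium $(a',b')$ of the stage game we must have $a'\in\textrm{BR}_1(b')=\{\underline{a}\}$ and then $b'\in\textrm{BR}_2(\underline{a})$, which is a singleton by Assumption \ref{Ass1}; writing $\underline{b}$ for its unique element, $(\underline{a},\underline{b})$ is the only pure-strategy Nash equilibrium, and therefore $\underline{v}_1=u_1(\underline{a},\underline{b})$.

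Next I would show $\underline{v}_1^{min}=u_1(\underline{a},\underline{b})$ as well. By the first step, $\underline{v}_1^{min}=\min_{\beta\in\mathcal{B}^*}u_1(\underline{a},\beta)$, and since $u_1(\underline{a},b)$ is strictly increasing in $b$ it suffices to prove that every $\beta\in\mathcal{B}^*$ assigns zero probability to every $b$ strictly below $\underline{b}$: then $u_1(\underline{a},\beta)\ge u_1(\underline{a},\underline{b})$, with equality attained by the degenerate distribution on $\underline{b}$, which lies in $\mathcal{B}^*$ because $\underline{b}\in\textrm{BR}_2(\underline{a})$. To prove the support claim, suppose $b$ is strictly below $\underline{b}$ and $b\in\textrm{BR}_2(\alpha)$ for some $\alpha\in\Delta(A)$. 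Since $\textrm{BR}_2(\underline{a})=\{\underline{b}\}$ is a singleton, $u_2(\underline{a},\underline{b})-u_2(\underline{a},b)>0$; strictly increasing differences of $u_2$ (part~2 of Condition \ref{con2}) then give $u_2(a,\underline{b})-u_2(a,b)\ge u_2(\underline{a},\underline{b})-u_2(\underline{a},b)>0$ for every $a\in A$, and averaging over $\alpha$ yields $u_2(\alpha,\underline{b})>u_2(\alpha,b)$, contradicting $b\in\textrm{BR}_2(\alpha)$. This establishes $\underline{v}_1=\underline{v}_1^{min}=u_1(\underline{a},\underline{b})$.

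For the strict inequality I would use part~3 of Condition \ref{con2} together with the uniqueness of the Stackelberg action. By definition $u_1(a^*,b^*)=\max_{a\in A}\min_{b\in\textrm{BR}_2(a)}u_1(a,b)$, and evaluating the objective at $a=\underline{a}$ gives $\min_{b\in\textrm{BR}_2(\underline{a})}u_1(\underline{a},b)=u_1(\underline{a},\underline{b})=\underline{v}_1^{min}$, so $\underline{v}_1^{min}\le u_1(a^*,b^*)$. If equality held, $\underline{a}$ would attain the maximum and hence be a pure Stackelberg action; since the Stackelberg action is unique (Assumption \ref{Ass1}), this forces $a^*=\underline{a}$, contradicting part~3 of Condition \ref{con2}. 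Hence $\underline{v}_1^{min}<u_1(a^*,b^*)$.

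The monotonicity bookkeeping is routine; the step that needs care is the support argument in the second paragraph, where the comparative-statics property of $u_2$ must be applied uniformly over all of $A$ and then integrated against an arbitrary mixed action $\alpha$. This is precisely what upgrades the single-action fact $\textrm{BR}_2(\underline{a})=\{\underline{b}\}$ into the uniform lower bound $\underline{b}$ on player $2$'s behavior over the whole set $\mathcal{B}^*$, which is what pins the minmax value down to $u_1(\underline{a},\underline{b})$.
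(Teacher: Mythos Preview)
Your proof is correct and follows the same logic the paper sketches in the sentence preceding the lemma: parts 1--2 of Condition~\ref{con2} force every pure Nash equilibrium and the minmax value to collapse to $u_1(\underline{a},\underline{b})$, and part~3 together with uniqueness of the Stackelberg action gives the strict inequality. The paper does not spell out a formal proof of Lemma~\ref{L3.1}; your argument is precisely the natural way to fill in those details, including the key observation that strictly increasing differences of $u_2$ rule out any $b\prec\underline{b}$ from the support of any $\beta\in\mathcal{B}^*$.
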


\paragraph{Short-Run Players' Welfare:} My next result demonstrates the failure of reputation effects from
player $2$s' perspective.
Let  $(a'',b'') \in A \times B$ be the worst pure-strategy stage-game Nash equilibrium for player $2$.
Recall that $\delta_s$ is a planner's discount factor when evaluating player $2$s' welfare. Let
\begin{equation*}
\underline{\delta}' \equiv \left\{ \begin{array}{ll}
\max \Big\{
\frac{\max_{a \in A} u_1(a,b^*)-u_1(a^*,b^*)}{\max_{a \in A} u_1(a,b^*)-u_1(a'',b'')} , \frac{u_1(a'',b'')-u_1(a^*,b'')}{u_1(a^*,b^*)-u_1(a^*,b'')}
\Big\}  & \textrm{ if } u_1(a'',b'') < u_1(a^*,b^*) \\
0 & \textrm{ if } u_1(a'',b'') = u_1(a^*,b^*)
\end{array} \right.
\end{equation*}
I show that player $2$s' welfare can be low regardless of the planner's discount factor:
\begin{Lemma}\label{L3.2}
Under Assumptions \ref{Ass1} and \ref{Ass2}. For every $K \in \mathbb{N}$, $\delta_s \in (0,1)$, and $\varepsilon>0$, there exists $\overline{\pi}_0 \in (0,1)$ such that
for every
$\pi_0 \in (0,\overline{\pi}_0)$ and $\delta \geq \underline{\delta}'$,
there exists $(\sigma_1^{\delta},\sigma_2^{\delta}) \in \textrm{SE}(\delta,\pi_0,K)$, such that:
\begin{equation}\label{3.7}
     \mathbb{E}^{(\sigma_1^{\delta},\sigma_2^{\delta},\pi_0)} \Big[   \sum_{t=0}^{\infty} (1-\delta_s) \delta_s^t u_2(a_t,b_t) \Big] \leq u_2(a'',b'')+\varepsilon.
\end{equation}
\end{Lemma}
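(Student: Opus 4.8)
The plan is to adapt the equilibrium constructed in the proof of Theorem~\ref{Theorem1}, letting player~$2$'s worst pure-strategy stage-game Nash equilibrium $(a'',b'')$ play the role that player~$1$'s worst pure-strategy stage-game Nash equilibrium $(a',b')$ plays there. Thus I would build a sequential equilibrium with a \emph{non-trusting phase} in which $(a'',b'')$ is played (the strategic type mixing to $a^*$ with a small probability in order to pool with the commitment type) and an absorbing \emph{trusting phase} in which $(a^*,b^*)$ is played, the latter entered as soon as some player~$2$ plays $b^*$; in the non-trusting phase a player~$2$ plays $b^*$ with probability $r$ if $a^*$ was played in the period before and $b''$ otherwise, with $r$ calibrated by player~$1$'s indifference between $a^*$ and $a''$. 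Since $\underline{\delta}'$ is exactly $\underline{\delta}$ with $\underline{v}_1=u_1(a',b')$ replaced by $u_1(a'',b'')$ and $u_1(a^*,b')$ replaced by $u_1(a^*,b'')$, and since $(a'',b'')$ is a pure-strategy Nash equilibrium with $u_1(a'',b'')\le u_1(a^*,b^*)$ by Assumption~\ref{Ass2}, the verification that this profile is a sequential equilibrium for all $\delta\ge\underline{\delta}'$ and all sufficiently small $\pi_0$ is essentially word-for-word that of Theorem~\ref{Theorem1}: the only properties invoked are that $(a'',b'')$ is a Nash equilibrium (so within-period incentives hold and a player~$2$ who expects $a''$ best replies with $b''$) and that $\delta\ge\underline{\delta}'$ (so that player~$1$ neither profits from deviating to $\arg\max_{a}u_1(a,b^*)$ in the trusting phase nor from skipping the costly step of building trust, and $r\le 1$).

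It then remains to show that in this equilibrium player~$2$'s welfare is close to $u_2(a'',b'')$ whenever $\pi_0$ is small, uniformly over $\delta\ge\underline{\delta}'$. I would write
\[
\mathbb{E}^{(\sigma_1^{\delta},\sigma_2^{\delta},\pi_0)}\Big[\sum_{t=0}^{\infty}(1-\delta_s)\delta_s^t u_2(a_t,b_t)\Big]=u_2(a'',b'')+\mathbb{E}^{(\sigma_1^{\delta},\sigma_2^{\delta},\pi_0)}\Big[\sum_{t=0}^{\infty}(1-\delta_s)\delta_s^t\big(u_2(a_t,b_t)-u_2(a'',b'')\big)\Big],
\]
and note that the summand vanishes on every period in which realized play is $(a'',b'')$ and is bounded by $2\max_{(a,b)}|u_2(a,b)|$ in absolute value otherwise. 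The periods on which play differs from $(a'',b'')$ are of three types: periods along the commitment-type path, of prior probability $\pi_0$; periods in the non-trusting phase in which the strategic type happened to play $a^*$, whose $\delta_s$-discounted mass is at most the mixing probability $q$ used in that phase; and periods in the trusting phase, whose $\delta_s$-discounted mass equals $\mathbb{E}[\delta_s^{\tau}]$, where $\tau$ is the (random) time at which the trusting phase is entered. Hence player~$2$'s welfare lies within $2\max|u_2|\cdot\big(\pi_0+q+\mathbb{E}[\delta_s^{\tau}]\big)$ of $u_2(a'',b'')$, and it suffices to make the three terms small by choosing $\overline{\pi}_0$ small.

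This reduces to showing that $q$ and $\mathbb{E}[\delta_s^{\tau}]$ can be driven to $0$ by shrinking $\pi_0$, \emph{uniformly} over $\delta\ge\underline{\delta}'$. The strategic type's mixing probability $q$ is constrained only by the requirement that observing $a^*$ not move player~$2$'s posterior past the threshold beyond which $b''$ ceases to best reply; I would take $q$ of order $\sqrt{\pi_0}$, small yet large relative to $\pi_0$, so that the posterior stays below that threshold while $q\to 0$. The transition rate $r$, pinned down by player~$1$'s indifference, is of order $1-\delta$ — hence uniformly bounded over $\delta\ge\underline{\delta}'$, and only smaller for more patient player~$1$s — so the per-period probability of leaving the non-trusting phase along the strategic path is $O(q)$, which forces $\mathbb{E}[\delta_s^{\tau}]\to 0$ as $\pi_0\to 0$, uniformly in $\delta$. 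The main obstacle is precisely this uniform bookkeeping: one must revisit the estimates in the proof of Theorem~\ref{Theorem1} to confirm that the total rate at which play escapes the non-trusting phase — including any auxiliary transition triggered when the reputation grows too large after a run of $a^*$'s, needed to preserve player~$2$'s incentive to play $b''$ — vanishes with $\pi_0$ while player~$1$'s indifference equation for $r$ and player~$2$'s best-reply inequality continue to hold at every $\delta\ge\underline{\delta}'$.
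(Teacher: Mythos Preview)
Your overall plan matches the paper's: replace $(a',b')$ by $(a'',b'')$ in the Theorem~\ref{Theorem1} construction, take the mixing probability on $a^*$ small, and conclude that player~$2$'s welfare is close to $u_2(a'',b'')$. Two points of divergence are worth noting.

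First, the specific calibration $q=O(\sqrt{\pi_0})$ does not survive for $K\ge 2$: after $K$ consecutive observations of $a^*$ the posterior likelihood ratio inflates by a factor of order $q^{-K}$, so $\pi_0\,q^{-K}=\pi_0^{\,1-K/2}$ fails to vanish (and diverges for $K\ge 3$). The paper sidesteps this by reversing the order of quantifiers: it first fixes a small target $q^*$ (determined by $\varepsilon$) at which $a^*$ is played \emph{unconditionally} at every reputation-building history---calibrating the strategic type's probability to $(q^*-\pi(h^t))/(1-\pi(h^t))$---and only then takes $\overline{\pi}_0$ small enough that the posterior never exceeds the threshold for $b''$. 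In particular, no ``auxiliary transition'' is needed: the reputation is simply kept below the threshold by construction.

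Second, rather than bounding the discounted mass of bad periods, the paper computes player~$2$'s welfare $V_2$ directly from the one-step recursion
\[
V_2=(1-\delta_s)\big\{q^*u_2(a^*,b'')+(1-q^*)u_2(a'',b'')\big\}+\delta_s\big\{(1-q^*r)\,V_2+q^*r\,u_2(a^*,b^*)\big\},
\]
solves for $V_2$, and sends $q^*\to 0$ to obtain $V_2\to u_2(a'',b'')$. The only place $\delta$ enters is through the product $q^*r\le q^*$, so uniformity over $\delta\ge\underline{\delta}'$ is automatic. Your mass argument would also go through once the quantifier order is fixed, but the recursion is cleaner and avoids the hazard-rate bookkeeping you flag as the main obstacle.
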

The proof is in Appendix \ref{subA.1}. Lemma \ref{L3.2} together with
Theorem \ref{Theorem1} has powerful implications on games with monotone-supermodular payoffs (Condition \ref{con2}). Let $\underline{a} \equiv \min A$ and $\{\underline{b} \} \equiv \textrm{BR}_2(\underline{a})$.
 Condition \ref{con2}
 implies that $a^* \succ \underline{a}$,
and
$(\underline{a},\underline{b})$ is a stage-game Nash equilibrium. As a result, games with monotone-supermodular payoffs automatically satisfy Assumption \ref{Ass2}.
\begin{Theorem}\label{Theorem2}
Suppose players' payoffs are monotone-supermodular and satisfy Assumption \ref{Ass1}.
For every $K \in \mathbb{N}$, $\delta_s \in (0,1)$, and $\varepsilon>0$, there exists $\overline{\pi}_0 \in (0,1)$ such that for every
$\pi_0 \in (0,\overline{\pi}_0)$ and $\delta \geq \max\{ \underline{\delta},\underline{\delta}'\}$,
there exists $(\sigma_1^{\delta},\sigma_2^{\delta}) \in \textrm{SE}(\delta,\pi_0,K)$ such that:
\begin{enumerate}
  \item player $1$'s discounted average payoff is $u_1(\underline{a},\underline{b})$,
  \item player $2$'s discounted average welfare is less than $u_2(\underline{a},\underline{b})+\varepsilon$.
\end{enumerate}
\end{Theorem}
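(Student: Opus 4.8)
The plan is to obtain Theorem~\ref{Theorem2} as a direct consequence of Theorem~\ref{Theorem1}, Lemma~\ref{L3.1}, and Lemma~\ref{L3.2}, the point being that monotone-supermodularity makes the reputation-failure construction that is worst for player $1$ coincide with the one that is worst for player $2$, so that a single equilibrium does both jobs.

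First I would pin down the relevant stage-game outcomes. Under Condition~\ref{con2}, $u_1(\cdot,b)$ is strictly decreasing, so $\textrm{BR}_1(b)=\{\underline{a}\}$ for every $b\in B$; hence the strategic player $1$ plays $\underline{a}$ in every stage-game Nash equilibrium (pure or mixed, since he is never indifferent), and player $2$ responds with her unique best reply $\underline{b}$. Therefore $(\underline{a},\underline{b})$ is the \emph{unique} stage-game Nash equilibrium, so it is at once player $1$'s worst pure Nash equilibrium $(a',b')$ and player $2$'s worst pure Nash equilibrium $(a'',b'')$. In particular $b'=b''=\underline{b}$ and $\underline{v}_1=u_1(a',b')=u_1(a'',b'')=u_1(\underline{a},\underline{b})$; substituting these identities into the definitions of $\underline{\delta}$ and $\underline{\delta}'$ gives $\underline{\delta}=\underline{\delta}'$, so the bound $\max\{\underline{\delta},\underline{\delta}'\}$ in the statement is simply $\underline{\delta}$. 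Lemma~\ref{L3.1} gives $\underline{v}_1=\underline{v}_1^{min}<u_1(a^*,b^*)$, so we are in the strict-benefit-from-commitment case, $\underline{\delta}\in(0,1)$, and, since $(\underline{a},\underline{b})$ is a pure Nash equilibrium, Assumption~\ref{Ass2} holds; thus Theorem~\ref{Theorem1} and Lemma~\ref{L3.2} both apply.

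Next I would produce one equilibrium delivering both conclusions. Apply Lemma~\ref{L3.2}: for the given $K,\delta_s,\varepsilon$ there is $\overline{\pi}_0'\in(0,1)$ such that for $\pi_0<\overline{\pi}_0'$ and $\delta\geq\underline{\delta}'=\underline{\delta}$ there is $(\sigma_1^{\delta},\sigma_2^{\delta})\in\textrm{SE}(\delta,\pi_0,K)$ whose induced player $2$s' welfare is at most $u_2(a'',b'')+\varepsilon=u_2(\underline{a},\underline{b})+\varepsilon$, which is conclusion~2. The equilibrium constructed there is organized around the non-trusting outcome $(a'',b'')=(\underline{a},\underline{b})$, which by the previous paragraph equals $(a',b')$; hence it is the same equilibrium constructed for Theorem~\ref{Theorem1} --- a non-trusting phase in which $(\underline{a},\underline{b})$ is played with high probability, a trusting phase (entered the first time some player $2$ plays $b^*$) in which $(a^*,b^*)$ is played, and a rate of trust building chosen so that the strategic player $1$'s indifference in the non-trusting phase forces his discounted-average payoff to equal $\underline{v}_1=u_1(\underline{a},\underline{b})$, i.e.\ \eqref{3.1}, which is conclusion~1. (Alternatively: in this equilibrium player $1$'s payoff is at least $\underline{v}_1^{min}=\underline{v}_1$ because $\underline{v}_1^{min}$ is his lowest BNE payoff, and at most $\underline{v}_1$ by the non-trusting-phase indifference, so it is exactly $\underline{v}_1$.) Taking $\overline{\pi}_0$ to be the minimum of $\overline{\pi}_0'$ and the threshold from Theorem~\ref{Theorem1} then gives, for every $\pi_0<\overline{\pi}_0$ and $\delta\geq\underline{\delta}=\max\{\underline{\delta},\underline{\delta}'\}$, one sequential equilibrium with both properties.

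The main obstacle is precisely the demand for a \emph{single} equilibrium that is bad for both players: in a general game the non-trusting phase that holds player $1$ down to $\underline{v}_1$ (built on $(a',b')$) differs from the one that holds player $2$ down near $u_2(a'',b'')$ (built on $(a'',b'')$), and the two need not be compatible. Condition~\ref{con2} dissolves this, via Lemma~\ref{L3.1} and the uniqueness of the stage-game equilibrium, by collapsing $(a',b')$ and $(a'',b'')$ to the common outcome $(\underline{a},\underline{b})$. The remaining checks --- that $\underline{b}$ is a best reply for player $2$ throughout the non-trusting phase once her belief places enough weight on $\underline{a}$ (which holds for small $\pi_0$, since the strategic type also plays $\underline{a}$ with high probability there), and that the welfare bound of Lemma~\ref{L3.2} is uniform over $\delta\geq\underline{\delta}$ --- are inherited from the proofs of Theorem~\ref{Theorem1} and Lemma~\ref{L3.2}; running those two arguments on the same equilibrium imposes no new constraints.
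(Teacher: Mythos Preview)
Your proposal is correct and follows essentially the same route as the paper: the paper's proof of Theorem~\ref{Theorem2} consists of the single observation that under monotone-supermodularity $(\underline{a},\underline{b})$ is the unique stage-game Nash equilibrium, so $(a',b')=(a'',b'')=(\underline{a},\underline{b})$, and the construction behind Lemma~\ref{L3.2} (with $(a'',b'')$ replaced by $(\underline{a},\underline{b})$) simultaneously delivers player~$1$'s payoff $\underline{v}_1=u_1(\underline{a},\underline{b})$ via the reputation-building-phase indifference. Your write-up spells out a bit more (the collapse $\underline{\delta}=\underline{\delta}'$, the appeal to Lemma~\ref{L3.1}), but the substance is the same.
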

Theorem \ref{Theorem2} implies
that slow observational learning leads to sequential equilibria in which \textit{both players receive
low payoffs}.
Under an additional requirement that $\underline{a}$ minimizes $u_2(a,b)$ for every $b \in B$ (i.e., the seller's lowest effort minimizes the buyer's payoff), both players' equilibrium payoffs are arbitrarily close to their respective minmax payoffs.
\subsection{Common Properties of All Equilibria}\label{sub3.3}
To appreciate the subtlety of Theorems \ref{Theorem1} and \ref{Theorem2}, and to distinguish their underlying mechanism from the existing results on social learning and reputation failures,
I establish three properties that apply to \textit{all} Bayes Nash equilibria. I present these findings in decreasing level of generality.

First, I show that player $2$s never herd on actions other than $b^*$ when they have not ruled out the commitment type. It clarifies the conceptual difference between my reputation failure result and the results in
Banerjee (1992), Bikhchandani, Hirshleifer and Welch (1992), and Smith and S{\o}rensen (2000), in which inefficiencies are caused by myopic players herding on an action that mismatches the state.
Formally, for every $\sigma_2\in \Sigma_2$, $h^t \in \mathcal{H}^t$, and $b \in B$,  I say that player $2$s \textit{herd} on action $b$ at $h^t$ if $\sigma_2(h^s)=b$ for every \textit{on-path history} $h^s \succeq h^t$.
\begin{Proposition}\label{Prop1}
Under Assumptions 1 and 2. For every Bayes Nash equilibrium $(\sigma_1,\sigma_2)$, every on-path history $h^t$, and every $b \neq b^*$.
If $\pi(h^t)>0$, then player $2$s cannot herd on $b$ at $h^t$.
\end{Proposition}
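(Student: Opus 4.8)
The plan is to argue by contradiction: suppose that in some Bayes Nash equilibrium $(\sigma_1,\sigma_2)$ there is an on-path history $h^t$ with $\pi(h^t)>0$ and an action $b\neq b^*$ on which player $2$s herd at $h^t$. The heart of the argument — and the step I expect to be the main obstacle — is the claim that, after $h^t$, the strategic type has no intertemporal incentive and therefore plays the (by Assumption~\ref{Ass1}, unique) myopic best reply $\hat a:=\textrm{BR}_1(b)$ at every on-path history weakly succeeding $h^t$. The reason is that the definition of herding makes player $2$ play $b$ at every on-path continuation of $h^t$, so along the equilibrium path from $h^t$ onward player $1$'s period-$s$ action cannot affect $b_{s+1},b_{s+2},\dots$; his continuation value at any such history is then at most $\max_{a}u_1(a,b)=u_1(\hat a,b)$, and since $\hat a$ is the \emph{unique} maximizer of $u_1(\cdot,b)$, a one-shot-deviation comparison forces player $1$ to play $\hat a$ at these histories. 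The subtlety — and why I expect this to be the delicate part — is that in a Bayes Nash equilibrium player $2$'s behavior off the equilibrium path is unconstrained, so one must rule out that the strategic type profits by steering toward such histories; the way around this is that any profitable such deviation would itself be part of player $1$'s best reply (so the target history would not be off-path), while the continuation value following any \emph{on-path} action is already capped at $u_1(\hat a,b)$.

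Granting the claim, the contradiction is obtained in two cases, as previewed in the introduction. If $\hat a=a^*$, then the strategic type (by the claim) and the commitment type both play $a^*$ at every on-path history succeeding $h^t$; hence at any on-path $h^{t+1}\succeq h^t$ player $2$'s posterior over $a_{t+1}$ is the point mass on $a^*$, so by Assumption~\ref{Ass1} her unique best reply there is $b^*\neq b$, contradicting that she herds on $b$. If $\hat a\neq a^*$, consider the on-path history $h^{t+1}$ reached from $h^t$ by $(a_t,b_t)=(a^*,b)$; it is on-path because the commitment type, present with probability $\pi(h^t)>0$, plays $a^*$ and player $2$ plays $b$ by herding, and because $K\geq1$, player $2$ at $h^{t+1}$ observes $a_t=a^*$. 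By the claim the strategic type plays $\hat a\neq a^*$ at $h^t$, so observing $a_t=a^*$ reveals the commitment type, i.e.\ $\pi(h^{t+1})=1$; player $2$'s posterior over $a_{t+1}$ is then again the point mass on $a^*$, her unique best reply is $b^*\neq b$, and herding on $b$ is contradicted. In either case no such equilibrium exists, which is the statement.

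Two remarks on the structure. Assumption~\ref{Ass1} enters exactly twice: to make $\textrm{BR}_1(b)$ a singleton, which is what lets "no intertemporal incentive" pin down player $1$'s action, and to make $\textrm{BR}_2(a^*)=\{b^*\}$ a singleton with $u_2(a^*,b^*)>u_2(a^*,b)$ for all $b\neq b^*$, which is what lets a posterior concentrated on $a^*$ force the best reply $b^*$. Also, because each player $2$ observes only player $1$'s last $K$ actions, her posterior is not a martingale in calendar time; the proof therefore works with the single history $h^{t+1}$ above, where the decisive observation $a_t=a^*$ lies inside the observation window, rather than with long-run belief convergence along the commitment path — which is why $K\geq1$, not large $K$, suffices.
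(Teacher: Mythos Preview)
Your proof follows the paper's contradiction-and-two-cases structure, and Case~2 matches the paper essentially verbatim. The one substantive difference is where you locate the contradiction in Case~1 ($\textrm{BR}_1(b)=\{a^*\}$): you argue at $h^{t+1}$ that player~2's posterior over $a_{t+1}$ is degenerate on $a^*$, whereas the paper argues already at $h^t$ that player~$2_t$'s posterior over $a_t$ is degenerate on $a^*$.

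This difference is not merely cosmetic, and it opens a small gap in your version. Player~2 at $h^{t+1}$ no longer observes $a_{t-K}$, so her posterior can assign positive weight to on-path private histories $h_1^{t+1}$ whose period-$t$ truncation has public component $h_*^t\neq h^t$ (same $b$-history, same $a_{t-K+1},\dots,a_{t-1}$, but a different $a_{t-K}$). Your ``no-intertemporal-incentive'' claim, as stated, pins down the strategic type's play only at on-path histories \emph{succeeding $h^t$}; it says nothing about $\sigma_1$ at private histories whose public truncation is such an $h_*^t$, and so it does not by itself justify the degenerate posterior at $h^{t+1}$. Arguing at period~$t$ avoids this entirely, since every private history player~$2_t$ entertains is consistent with $h^t$ by construction --- this is precisely why the paper takes care to write ``at $h^t$ and at every $h_*^t$ that differs from $h^t$ only in $\{a_0,\dots,a_{t-K}\}$''. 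The fix is one line: derive the contradiction already at $h^t$, where your claim immediately forces player~$2_t$'s unique best reply to be $b^*$.
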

\begin{proof}[Proof of Proposition \ref{Prop1}:]
Suppose toward a contradiction that player $2$s herd on $b \neq b^*$ at $h^t$, then
the strategic type has no intertemporal incentive at $h^t$
and at every  $h_*^t$ that differs from $h^t$ only in $\{a_{0},...,a_{t-K}\}$.
In equilibrium, strategic-type player $1$ plays his myopic best reply to $b$ at those histories.  Consider two cases.
First, suppose $\textrm{BR}_1(b)= \{a^*\}$, then in equilibrium,
both types of player $1$ play $a^*$ at $h^t$ and at every $h_*^t$ that differs from $h^t$ only in $\{a_{0},...,a_{t-K}\}$.
As a result, player $2_{t}$  has a strict incentive to play $b^*$ instead of $b$ at $h^t$. This contradicts the presumption that $b \neq b^*$.
Second, suppose $\textrm{BR}_1(b) \neq \{a^*\}$, then in equilibrium,
the strategic type has no incentive to play $a^*$ at $h^t$ and at every $h_*^t$ that differs from $h^t$ only in $\{a_{0},...,a_{t-K}\}$.
Since $\pi(h^t)>0$, player $2_{t+1}$'s belief attaches probability $1$ to the commitment type if she observes $a_t=a^*$, and player $1$'s actions from period $t-K+1$ to $t-1$ and player $2$'s actions from period $0$ to $t-1$ are given according to $h^t$. Therefore,
player $2_{t+1}$ plays $b^*$ following the aforementioned observation, which contradicts the presumption that they herd on $b \neq b^*$.
\end{proof}
Next, I focus on games in which $a^*$ does not best reply against any action in $\mathcal{B}^*$. This includes, but not limited to games with monotone-supermodular payoffs.
\begin{Proposition}\label{Prop2}
If $a^*$ does not best reply against any action in $\mathcal{B}^*$,
then for every Bayes Nash equilibrium and at every on-path history $h^t$ with $\pi(h^t)>0$,
\begin{itemize}
  \item \textit{either} $(b_{t+1},...,b_{t+K})$ is informative about player $1$'s action at $h^t$,\\
\textit{or} $\sigma_2 (h^{t+\tau})=b^*$ for every $h^{t+\tau} \succ h^t$ such that
  $1 \leq \tau\leq K$ and $a^*$ is played from $t$ to $t+\tau-1$.
\end{itemize}
\end{Proposition}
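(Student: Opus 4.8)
The plan is to adapt the ``no intertemporal incentive'' argument used in the proof of Proposition~\ref{Prop1}, now applied to player $1$'s single period-$t$ action. I would argue by contradiction: suppose that at some on-path $h^t$ with $\pi(h^t)>0$ neither alternative holds, i.e.\ $(b_{t+1},\dots,b_{t+K})$ is uninformative about player $1$'s action at $h^t$, yet there exist $\tau\in\{1,\dots,K\}$ and an on-path $h^{t+\tau}\succ h^t$ along which $a^*$ is played in periods $t,\dots,t+\tau-1$ and $\sigma_2(h^{t+\tau})\ne b^*$.

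The first and central step is to show that uninformativeness removes the strategic type's intertemporal incentive over $a_t$: at every private history $h_1^t$ consistent with $h^t$, the strategic type best replies \emph{myopically} to $\sigma_2(h^t)$. The reason is that $a_t$ is recorded in the public history for exactly $K$ periods and is absent from every public history from period $t+K+1$ on, so $a_t$ can affect continuation play only through the play of player $2$s in periods $t+1,\dots,t+K$; and ``uninformative'' says precisely that the distribution of $(b_{t+1},\dots,b_{t+K})$ is the same whatever $a_t$ is. Hence the continuation problem the strategic type faces after period $t$ does not depend on his choice of $a_t$, so his ranking over period-$t$ actions is governed solely by the stage payoff $u_1(\,\cdot\,,\sigma_2(h^t))$.

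Given this, the remaining steps mirror the two-case structure of the proof of Proposition~\ref{Prop1}. Since player $2$s are myopic and $h^t$ is on-path, $\sigma_2(h^t)\in\mathcal B^*$; and by hypothesis $a^*$ does not best reply against any action in $\mathcal B^*$, so $a^*\notin\textrm{BR}_1(\sigma_2(h^t))$. Combined with the previous step, the strategic type plays $a^*$ with probability zero at $h^t$ (at every consistent private history), so only the commitment type plays $a^*$ at $h^t$. Conditional on $h^t$ the event $\{a_t=a^*\}$ still has positive probability (the commitment type has probability $\pi(h^t)>0$), so each history $(h^t,b_t,a^*,\xi_{t+1})$ with $b_t$ in the support of $\sigma_2(h^t)$ is on-path and, by Bayes' rule, has reputation $1$. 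From any history with reputation $1$, player $2$ is certain she faces the commitment type, who plays $a^*$ in every period; hence she plays $b^*$, the reputation stays at $1$, and by induction every on-path continuation history along which $a^*$ keeps being played carries reputation $1$ and has $\sigma_2(\cdot)=b^*$. In particular $\sigma_2(h^{t+\tau})=b^*$, contradicting the choice of $h^{t+\tau}$.

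I expect the first step to be the real obstacle. The delicate point is upgrading ``the distribution of $(b_{t+1},\dots,b_{t+K})$ does not change with $a_t$'' to ``the strategic type's \emph{optimal} continuation value does not change with $a_t$'': one must rule out that player $1$ could profit from $a_t$ via some continuation deviation. The facts to lean on are that no public history after period $t+K$ contains $a_t$ and that player $1$ can never gain by letting his own continuation strategy depend on this payoff-irrelevant coordinate of his private history; making this precise, e.g.\ by exhibiting a value-preserving identification between the continuation problems indexed by different values of $a_t$, is where the work lies. The second and third steps are then routine.
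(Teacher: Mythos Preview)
Your proposal is correct and follows essentially the same route as the paper's proof: assume uninformativeness, deduce that the strategic type myopically best replies at $h^t$, use the hypothesis that $a^*\notin\textrm{BR}_1(\beta)$ for any $\beta\in\mathcal{B}^*$ to conclude he never plays $a^*$ there, and then infer that observing $a_t=a^*$ sends the reputation to $1$ so that $b^*$ is played for the next $K$ periods. The paper's own proof asserts the first step (``uninformative $\Rightarrow$ myopic best reply'') in one line without justification; your flagging of this as the delicate point, and your suggested fix via a value-preserving identification of continuation problems across different values of $a_t$, is exactly the right way to make it rigorous and in fact supplies more detail than the paper does.
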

Proposition \ref{Prop2} implies that as long as player $1$
imitates the commitment type, \textit{either}
he is guaranteed to receive his Stackelberg payoff $u_1(a^*,b^*)$ in the next $K$ periods,
\textit{or} player $2$s' actions in the next $K$ periods are informative signals about player $1$'s action in the current period.\footnote{Compared to Proposition \ref{Prop1}, Proposition \ref{Prop2} rules out a larger class of strategies, which includes but not limited to player $2$s herding on a pure action. For example, Proposition \ref{Prop2}  rules out situations in which player $2$s play mixed actions, or their strategies depend nontrivially on calendar time or previous player $2$s' actions, but not on their observations of player $1$'s past actions. As a consequence, Proposition \ref{Prop2} also imposes an additional requirement on players' stage-game payoffs, namely, the Stackelberg action is strictly suboptimal for player $1$ in the stage game.}
Player $1$ receives a high payoff in the first case, and information about his current-period behavior is communicated to all future player $2$s in the second case. This is reminiscent of the logic behind Fudenberg and Levine (1989, 1992)'s reputation results,
that by imitating the commitment type, the long-run player either receives a high stage-game payoff, or
can generate a public signal that is informative about his type.
It also distinguishes Theorem \ref{Theorem1}
from existing results on bad reputations
(e.g., Ely and V\"{a}lim\"{a}ki 2003), in which the short-run players can take an action (i.e., refuse to participate) that stops future short-run players from learning while giving the long-run player his minmax payoff.
\begin{proof}[Proof of Proposition 2:]
Let $(\sigma_1,\sigma_2)$ be a Bayes Nash equilibrium. Let $h^t$ be an on-path history with $\pi(h^t)>0$. Suppose $(b_{t+1},...,b_{t+K})$ is \textit{uninformative} about player $1$'s action at $h^t$, then strategic-type player $1$ plays his myopic best reply against player $2$'s action at $h^t$.

Since player $2$'s action belongs to $\mathcal{B}^*$ at every on-path history, and $a^*$ does not best reply against any action in $\mathcal{B}^*$, strategic-type player $1$ has a strict incentive not to play $a^*$ at $h^t$. Given that $\pi(h^t)>0$,
player $2$'s posterior belief attaches probability $1$ to type $\omega^c$
after observing $a^*$ at $h^t$.
For every $\tau \in \{1,2,...,K\}$, player $2$ plays $b^*$ in period $t+\tau$ when she observes $a^*$ in the past $K$ periods.
\end{proof}
In games with monotone-supermodular payoffs,
Proposition \ref{Prop2} implies
a tight lower bound on player $1$'s \textit{asymptotic payoff} when he plays $a^*$ in every period. Let $\mathbb{E}^{(a^*,\sigma_2)}$ be the expectation when
player $1$ plays $a^*$ in every period and player $2$s use strategy $\sigma_2$. The result is stated as Proposition \ref{Prop3}:
\begin{Proposition}\label{Prop3}
If stage-game payoffs are monotone-supermodular,
then for every BNE $(\sigma_1,\sigma_2)$,
\begin{equation}\label{as}
\liminf_{t \rightarrow \infty}\frac{1}{t} \mathbb{E}^{(a^*,\sigma_2)} \Big[
    \sum_{s=0}^{t-1} u_1(a_s,b_s)
    \Big] \geq \frac{K}{K+1} u_1(a^*,b^*)+\frac{1}{K+1} \min_{b \in \mathcal{B}^*} u_1(a^*,b).\footnote{Proposition \ref{Prop3} does not contradict the disappearing reputation result in Cripps, Mailath and Samuelson (2004) that the long-run player's asymptotic payoff equals to one of his payoffs in the repeated complete information game without commitment type. Notice that the LHS of (\ref{as}) is player $1$'s asymptotic payoff when he plays $a^*$ in every period, while Cripps, Mailath and Samuelson (2004)'s result examines player $1$'s asymptotic payoff under his \textit{equilibrium strategy}. }
 \end{equation}
When $\pi_0$ is small enough and $\delta$ is large enough, there exists a sequential equilibrium with strategy profile $(\sigma_1,\sigma_2)$ such that:
\begin{equation}\label{ass}
\lim_{t \rightarrow \infty}\frac{1}{t} \mathbb{E}^{(a^*,\sigma_2)} \Big[
    \sum_{s=0}^{t-1} u_1(a_s,b_s)
    \Big] =\frac{K}{K+1} u_1(a^*,b^*)+\frac{1}{K+1} \min_{b \in \mathcal{B}^*} u_1(a^*,b).
 \end{equation}
\end{Proposition}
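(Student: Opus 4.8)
The plan is to establish the two halves of Proposition~\ref{Prop3} separately: first the universal lower bound (\ref{as}) for every BNE, then the attainment (\ref{ass}) in a particular sequential equilibrium.

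\emph{Lower bound.} Fix a BNE $(\sigma_1,\sigma_2)$ and work under the measure induced by $\mathbb{E}^{(a^*,\sigma_2)}$, i.e. condition on player $1$ being the commitment type. The first observation is that every history reached under this measure is on-path and carries $\pi(h^t)>0$, since these are exactly the histories the commitment type visits with positive probability; hence Proposition~\ref{Prop2} applies at all of them. Call period $t$ \emph{resolved} if the second branch of Proposition~\ref{Prop2} holds at $h^t$; since $a^*$ is played in every period under this measure this forces $b_{t+1}=\cdots=b_{t+K}=b^*$. Call $t$ \emph{unresolved} otherwise, so that $(b_{t+1},\dots,b_{t+K})$ is then informative about $a_t$. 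Consequently the only periods $s$ with $b_s\neq b^*$ are those whose $K$ predecessors $s-K,\dots,s-1$ are all unresolved; therefore two periods $s_1<s_2\le s_1+K$ with $b_{s_1}\neq b^*$ and $b_{s_2}\neq b^*$ would force the block $s_1-K,\dots,s_2-1$ — of length at least $K+1$ — to consist entirely of unresolved periods.

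\emph{Key step.} I would then show that along the $a^*$-path there is no run of $K+1$ consecutive unresolved periods, so that a resolved period occurs at least once in every $K+1$ consecutive periods; by the previous paragraph this makes the periods with $b_s\neq b^*$ at least $K+1$ apart, hence of asymptotic density at most $\tfrac1{K+1}$. To rule out a long run of unresolved periods I would push the argument behind Proposition~\ref{Prop2} together with monotone-supermodularity and the myopia of player~$2$s: at an unresolved period the strategic type strictly prefers not to play myopically, so (payoffs being strictly decreasing in $a$) he plays strictly above his myopic best reply only for intertemporal reasons, and those reasons can bind only because the next $K$ player~$2$s' behaviour genuinely tracks the current action. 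Chaining this across $K+1$ consecutive periods, and using that each of those actions is directly observed by the following $K$ player~$2$s while equalling $a^*$ on this path, yields a contradiction. Given the density bound, $u_1(a_s,b_s)=u_1(a^*,b_s)\ge\min_{b\in\mathcal{B}^*}u_1(a^*,b)$ whenever $b_s\neq b^*$ (valid since $b_s\in\mathcal{B}^*$ at every on-path history) and $u_1(a^*,b_s)\ge u_1(a^*,b^*)$ otherwise; averaging and taking $\liminf$ gives (\ref{as}).

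\emph{Attainment.} For (\ref{ass}) I would construct, for $\pi_0$ small and $\delta$ large, an explicit sequential equilibrium refining the one behind Theorem~\ref{Theorem1}. Partition time into blocks of $K+1$ periods; in the first $K$ periods of each block player~$2$ plays $b^*$ and the strategic type plays $a^*$, sustained by reversion to the worst stage-game equilibrium after an observed deviation, while the last period of each block is a ``test'' period in which player~$2$ plays $\underline{b}\in\arg\min_{b\in\mathcal{B}^*}u_1(a^*,b)$ and the strategic type randomizes between $a^*$ and the lowest action, the mixing probability pinned down so that player~$2$'s belief keeps her exactly indifferent in that period (and, as in Theorem~\ref{Theorem1}, so that learning proceeds slowly enough for the block structure to persist on the commitment type's path). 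The strategic type's indifference at test periods and sequential rationality of all moves follow from the same discount-factor bookkeeping as in Theorem~\ref{Theorem1}. Along the $a^*$-path the realized play is then exactly $K$ copies of $b^*$ followed by one copy of $\underline{b}$ per block, so the Cesàro limit in (\ref{ass}) equals the right-hand side of (\ref{as}). The main obstacle is the key step of the lower bound: Proposition~\ref{Prop2} alone only says a period with $b_s\neq b^*$ must be preceded by $K$ unresolved periods, which is vacuous unless one also rules out $K+1$ unresolved periods in a row — turning ``player~$1$ has intertemporal incentives'' into a quantitative statement and extracting a contradiction from the short-run players' direct observation of $a^*$ is where the real content lies, and it is also what produces the exact constant $\tfrac{1}{K+1}$.
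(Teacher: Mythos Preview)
Your lower-bound argument has a genuine gap at the key step. The claim that along the $a^*$-path there can be no run of $K+1$ consecutive unresolved periods is simply false. Take the sequential equilibrium constructed in the proof of Theorem~\ref{Theorem1}: during the reputation-building phase the strategic type plays $a^*$ with probability $q^*/2>0$ at every history, and player~$2$ in period $t+1$ plays $b^*$ only with probability $r\in(0,1)$ even after observing $a_t=a^*$. Hence the second branch of Proposition~\ref{Prop2} fails at every such $h^t$, so every period of the reputation-building phase is ``unresolved'' in your terminology; and that phase lasts a random, unbounded number of periods along the commitment-type path. In particular $b_s\neq b^*$ can hold for arbitrarily many consecutive periods, so your density argument collapses. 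No amount of ``chaining intertemporal incentives across $K+1$ periods'' will rescue this, because at an unresolved history the strategic type may well be mixing on $a^*$ with positive probability, and observing $a^*$ there does \emph{not} send the posterior to~$1$.

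The paper's argument does not attempt a pathwise density bound. It partitions on-path histories into $\mathcal{H}_0$, where the total-variation distance $\|y_t(\cdot|a^*,h^t)-y_t(\cdot|a,h^t)\|$ falls below a threshold $\Delta$ for some $a\prec a^*$ (so the strategic type strictly avoids $a^*$, the posterior jumps to~$1$ after $a^*$, and $b^*$ is forced for the next $K$ periods), and $\mathcal{H}_1$, where that distance exceeds $\Delta$ for every $a\prec a^*$. On $\mathcal{H}_1$, Pinsker's inequality together with Gossner's entropy bound shows that the \emph{expected number} of periods with $\sigma_2(h^t)\prec b^*$ is at most $-\log\pi_0/(2\gamma^2\Delta^2)$, which is finite for fixed $\delta$; hence such periods have asymptotic density zero and contribute nothing to the $\liminf$. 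The $\tfrac{1}{K+1}$ loss comes only from the $\mathcal{H}_0$ blocks, via a disjointness argument on the shifted families $\mathcal{H}_0^\tau$, $\tau=0,\dots,K$. Your approach omits this entropy step entirely, and that is where the actual content of the lower bound lies.

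For the attainment half, your block construction is in the right spirit, but the paper's is simpler and avoids the calibration you allude to: the strategic type plays $\underline{a}$ \emph{pure} at periods divisible by $K+1$ and $a^*$ otherwise, and player~$2$ plays $\underline{b}$ and $b^*$ accordingly. Since player~$2$ at period $(K+1)m$ observes only $a_{(K+1)m-K},\dots,a_{(K+1)m-1}$, none of which is a prior test period, her posterior stays at $\pi_0$ and no mixing or ``slow learning'' tuning is required.
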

The proof is in Appendix \ref{subA.2}. Inequality (\ref{as}) implies that by imitating the commitment type, player $1$'s asymptotic payoff is at least a fraction $\frac{K}{K+1}$ of his Stackelberg payoff, with the RHS of (\ref{as}) converging to $u_1(a^*,b^*)$ as $K$ goes to infinity.
This lower bound is tight in the sense that when the prior probability of commitment type is below some cutoff and player $1$'s discount factor is above some cutoff,
the RHS of (\ref{ass}) is attained in an equilibrium where the stage-game equilibrium outcome $(\underline{a},\underline{b})$ is played when calendar time is divisible by $K+1$, and the Stackelberg outcome $(a^*,b^*)$ is played when calendar time is not divisible by $K+1$.
Intuitively, when player $1$ plays $a^*$ in a period that is divisible by $K+1$, player $2$s in the next $K$ periods are convinced that player $1$ is committed and will play $b^*$ in response (as long as they do not observe actions other than $a^*$).

Proposition \ref{Prop3} contrasts to Theorem \ref{Theorem1} which suggests that
no matter how large $K$ is,
player $1$'s \textit{discounted average payoff} equals his lowest stage-game equilibrium payoff.
This comparison not only highlights the importance of player $1$'s discount factor in models where the speed of learning is endogenous, but also
shows that Theorem \ref{Theorem1} is driven by \textit{slow convergence} to the high-payoff outcome, rather than low-payoff outcomes in the long run.

\subsection{Proof of Theorem 1}\label{sub3.4}
Recall the definitions of $(a',b')$ and $(a^*,b^*)$.
If $b'=b^*$, then according to Assumption \ref{Ass1}, $a'=a^*$ and payoff $\underline{v}_1$ is attained in an equilibrium where $(a^*,b^*)$ is played at every history.

In what follows, I consider the interesting case in which $b' \neq b^*$.  Assumption \ref{Ass1} implies that:
\begin{equation}\label{3.2}
    u_1(a^*,b^*)\underbrace{>}_{\textrm{P1's pure Stackelberg action is unique}} u_1(a',b')\underbrace{>}_{\textrm{$(a',b')$ is a strict equilibrium}} u_1(a^*,b').
\end{equation}
Let $q^* \in (0,1)$ be such that $b'$ is player $2$'s best reply against player $1$'s mixed action $q^* \circ a^* + (1-q^*) \circ a'$.
Recall that $\overline{\pi}_0\in (0,1)$ is the upper bound on the prior probability of commitment type, which is chosen according to:
\begin{equation*}
    \frac{\overline{\pi}_0}{1-\overline{\pi}_0} = \Big(\frac{q^*}{2-q^*}\Big)^{K+1}.
\end{equation*}
For every $\pi_0 \in (0,\overline{\pi}_0)$ and $\delta$ large enough, I construct the following \textit{three-phase equilibrium} in which strategic-type player $1$ attains payoff $\underline{v}_1$. The current phase of play depends only on the history of player $2$s' actions, which is commonly observed by both players. In period $t$,
\begin{enumerate}
  \item play is in the  \textit{reputation building phase}  if there exists no $s<t$ such that $b_s=b^*$;
  \item play is in the \textit{reputation maintenance phase} if (1) there exists $s<t$ such that $b_s=b^*$, and (2)
  $b_{\tau}=b^*$ for all
  $\tau \in \{s^*+1,...,t-1\}$, where $s^*$ is the smallest $s \in \mathbb{N}$ such that $b_s=b^*$;
  \item play is in the \textit{punishment phase} if (1) there exists $s<t$ such that $b_s=b^*$, and (2) there exists $\tau \in \{s^*+1,...,t-1\}$ such that $b_{\tau} \neq b^*$, where $s^*$ is the smallest $s \in \mathbb{N}$ such that $b_s=b^*$.
\end{enumerate}
Play starts from the reputation building phase, and gradually reaches the reputation maintenance phase. Play reaches the punishment phase only at off-path histories.\footnote{The punishment phase only occurs off the equilibrium path when player 1 knows player 2's sample and players have access to a public randomization device. Punishment occurs \textit{on the equilibrium path} either when player 2's sample is stochastic and is not observed by player 1 (Section \ref{sec4}), or players do not have access to a public randomization device.}

\paragraph{Equilibrium Strategies:} Let $r \in (0,1)$ be defined via the following equation:
\begin{equation}\label{3.13}
    (1-\delta) u_1(a^*,b')+\delta r u_1(a^*,b^*)
    +\delta (1-r) u_1(a',b')
    =u_1(a',b').
\end{equation}
One can verify that when $\delta$ is large enough, $r$ is strictly between $0$ and $1$, and converges to $0$ as $\delta$ goes to $1$.
At every history $h^t$ of the \textit{reputation-building phase},
\begin{itemize}
\item If (1) $t=0$, or (2) $t \geq 1$ and $a_{t-1} \neq a^*$, or (3) $t \geq 1$, $a_{t-1}=a^*$ and $\xi_t>r$, then
player $2$ plays $b'$ and strategic-type player $1$ plays $a^*$ with probability $q^*/2$ and plays $a'$ with probability $1-q^*/2$.
\item If $t  \geq 1$, $a_{t-1}=a^*$ and $\xi_t \leq r$, then player $1$ plays $a^*$ and player $2$ plays $b^*$.
\end{itemize}
At every history of the \textit{reputation maintenance phase},
\begin{itemize}
  \item If $a_{t-1}=a^*$, then player $1$ plays $a^*$ and player $2$ plays $b^*$.
  \item If $a_{t-1}\neq a^*$, then strategic-type player $1$ plays $a'$ and player $2$ plays $b'$.
\end{itemize}
At every history of the \textit{punishment phase}, player $1$ plays $a'$ and player $2$ plays $b'$.

\paragraph{Incentive Constraints:} I verify players' incentives constraints. To start with, when $\delta>\underline{\delta}$,
\begin{equation*}
u_1(a^*,b^*) \geq    (1-\delta) \max_{a \in A} u_1(a,b^*) +\delta u_1(a',b').
\end{equation*}
This implies that player $1$ has an incentive to play $a^*$ in the reputation-maintenance phase. Next, given that player $1$'s continuation value is $u_1(a^*,b^*)$ in the reputation-maintenance phase and is $u_1(a',b')$ in the reputation-building phase,
(\ref{3.13}) implies that player $1$ is indifferent between $a'$ and $a^*$ in the reputation-building phase. Since $a'$
is a strict best reply against $b'$ in the stage game, player $1$
strictly prefers $a'$ to actions other than $a^*$ and $a'$ in the repeated game.

Next, I verify player $2$'s incentive to play $b'$ in the reputation-building phase by
showing that player $1$'s reputation is at most $q^*/2$.
If player $1$ has played actions other than $a^*$ in at least one of the last $K$ periods, then player $2$'s belief attaches probability $0$ to the commitment type.
Suppose $b^*$ has never been played before (i.e., play remains in the reputation-building phase) and $a^*$ was played in the last $K$ periods, then
 $\pi_t$ satisfies the following equation derived from Bayes Rule:
\begin{equation}\label{3.14}
\frac{\pi_t}{1-\pi_t} \Big/ \frac{\pi_0}{1-\pi_0}
    = \frac{\Pr^{(a^*,\sigma_2^{\delta})}(a^*,...,a^*)}{\Pr^{(\sigma_1^{\delta},\sigma_2^{\delta})}(a^*,...,a^*)} \cdot
    \frac{\Pr^{(a_1^*,\sigma_2^{\delta})}(b',...,b',\xi_t | a^*,...,a^*)}{\Pr^{(\sigma_1^{\delta},\sigma_2^{\delta})}(b',...,b', \xi_t | a^*,...,a^*)},
\end{equation}
where $\Pr^{(\sigma_1^{\delta},\sigma_2^{\delta})} (\cdot)$ is the probability measure induced by
strategy profile $(\sigma_1^{\delta},\sigma_2^{\delta})$,
and $\Pr^{(a^*,\sigma_2^{\delta})} (\cdot)$ is the probability measure when player $1$ plays $a^*$ in every period and player $2$s' strategy is $\sigma_2^{\delta}$.

Since the strategic type plays $a^*$ with probability $q^*/2$ in every period of the reputation building phase, we have:
\begin{equation}\label{3.15}
    \frac{\Pr^{(\sigma_1^{\delta},\sigma_2^{\delta})}(a^*,...,a^*|\omega^c)}{\Pr^{(\sigma_1^{\delta},\sigma_2^{\delta})}(a^*,...,a^*|\omega^s)} \leq \Big(\frac{q^*}{2-q^*}\Big)^{-K}.
\end{equation}
In addition, $b'$ occurs with weakly lower probability
when player $1$ is the commitment type, which implies that:
\begin{equation}\label{3.16}
    \frac{\Pr^{(\sigma_1^{\delta},\sigma_2^{\delta})}(b',...,b', \xi_t | a^*,...,a^*,\omega^c)}{\Pr^{(\sigma_1^{\delta},\sigma_2^{\delta})}(b',...,b', \xi_t | a^*,...,a^*,\omega^s)} \leq 1.
\end{equation}
Since $\frac{\pi_0}{1-\pi_0} \leq \frac{\overline{\pi}_0}{1-\overline{\pi}_0}=\Big(\frac{q^*}{2-q^*}\Big)^{K+1}$,
(\ref{3.14}), (\ref{3.15}) and (\ref{3.16}) together imply that
$\pi_t \leq \frac{q^*}{2}$. As a result, the unconditional probability with which player $2$ believes that player $1$ plays $a^*$ is at most $\frac{q^*}{2} +(1-\frac{q^*}{2}) \frac{q^*}{2} < q^*$ at every history of the reputation-building phase, which
verifies player $2$'s incentive to play $b'$.

\section{Reputation Failure under Stochastic Sampling}\label{sec4}
In many applications of interest, each consumer communicates with her neighbors in a social network, learns about the seller's actions against them before making a decision.\footnote{A potential concern is that in practice, when consumer $i$ communicates with consumer $j$, consumer $j$ not only tells consumer $i$ about her personal experience, but also tells consumer $i$ what she learnt from other consumers. If this is the case, then consumers may observe the seller's actions in unboundedly many periods despite each consumer only communicates with a bounded number of predecessors.
This concern is alleviated once consumers' communication costs and capacity constraints are taken into account. For example, (1) due to the cost of communication, a sender is willing to tell a receiver at most a bounded number of the seller's past actions, (2) due to consumers' capacity constraints to process detailed information, a receiver stops sampling other consumers after she has learnt $K$ actions of the seller's in previous periods. Both constraints are relevant for the leading application on retail markets.}
Alternatively, consumers \textit{stochastically} sample their predecessors to learn about their personal experiences.
Compared to the baseline model,
a common feature in these scenarios
is that the seller
\textit{cannot} observe who does each buyer sample or
the realized social network among buyers.
As a result, buyers are \textit{privately learning} about the seller's type and are \textit{privately monitoring} the seller's past actions.

Motivated by these practical concerns,
this section studies games with monotone-supermodular stage-game payoffs,\footnote{In Appendix \ref{subB.1}, I further generalize Theorem \ref{Theorem3} by relaxing the monotone-supermodularity assumption on payoffs. I identify the assumptions that are needed on players' payoffs. The general version of my results apply not only to monotone-supermodular games (e.g., product choice game, capital taxation game, and trust game), but also to common interest games, coordination games, chicken games, and so on.}
and generalizes the insights of Theorem \ref{Theorem1}
when each player $2$ samples a \textit{bounded stochastic subset} of her predecessors and observes player $1$'s actions against them.

Formally, let
$\mathcal{N} \equiv \{\mathcal{N}_t\}_{t=1}^{\infty}$ be a stochastic network among player $2$s, with $\mathcal{N}_t \in \Delta \big(2^{\{0,1,...,t-1\}}\big)$.
The realization of $\mathcal{N}_t$ is denoted by $N_t \subset \{0,1,...,t-1\}$, which
is privately observed by player $2$ who arrives in period $t$ and is unbeknownst to player $1$ and other player $2$s.
Player $2$'s \textit{private history} consists of $N_t$, her predecessors' actions from period $0$ to $t-1$, and player $1$'s actions in periods belonging to $N_t$, i.e.,
\begin{equation}\label{4.3}
 h_2^t \equiv \Big\{N_t,
 b_0,b_1,...,b_{t-1}, \Big(a_s\Big)_{s \in N_t}, \xi_t
 \Big\}.
\end{equation}
Player $1$'s private history
is the same as in the baseline model, i.e., $h_1^t \equiv \{ a_0,...,a_{t-1},b_0,...,b_{t-1},\xi_0,...,\xi_t\}$.
I introduce a regularity condition on the stochastic network $\mathcal{N}$:
\begin{Assumption}\label{Ass4}
For every $s \neq t$, $\mathcal{N}_s$ and $\mathcal{N}_t$ are independent random variables.
There exist $K \in \mathbb{N}$ and $\gamma \in (0,1)$ such that $\Pr \big( |\mathcal{N}_t| \leq K \big) =1$ and $\Pr \big(
  t-1 \in \mathcal{N}_t
  \big) \geq \gamma$ for every $t \geq 1$.
\end{Assumption}
The first part of Assumption \ref{Ass4}
requires different player $2$s'
neighborhoods to be independent. My result extends when each player $2$'s sampling process depends on her predecessors' actions, i.e.,
$\mathcal{N}_t$ depends on $\{b_0,...,b_{t-1}\}$, as long as for every $s<t$, $\mathcal{N}_s$ and
$\mathcal{N}_t$ are independent conditional on $\{b_0,...,b_{s-1}\}$.
This independence assumption is standard in the observational learning literature, which is trivially satisfied when the network is deterministic (Banerjee 1992), and is also assumed in models with random networks such as
Banerjee and Fudenberg (2004) and
Acemoglu, Dahleh, Lobel and Ozdaglar (2011).
It implies that each player $2$ can learn about player $1$'s type only through player $1$'s actions against her neighbors $\{a_s\}_{s \in N_t}$
and her predecessors' actions $\{b_s\}_{s \leq t-1}$, but \textit{cannot}
obtain additional information from the realization of $\mathcal{N}_t$.

The second part $\Pr \big( |\mathcal{N}_t| \leq K \big) =1$ requires the existence of a \textit{uniform upper bound} on the number of predecessors each player $2$ can sample. Similar to the baseline model, $K$ is interpreted as
a constraint on buyers' capacity to process detailed information about the seller's action, or a constraint induced by the costs of communication.
The third part
$\Pr \big(
  t-1 \in \mathcal{N}_t
  \big) \geq \gamma$ requires  a \textit{uniform lower bound} on the probability with which
each player $2$ observes player $1$'s action against her
 immediate predecessor.


Let $\textrm{SE}(\delta,\pi_0,\mathcal{N})$ be the set of sequential equilibrium strategy profiles in a repeated game with social network $\mathcal{N}$, discount factor $\delta$, and prior belief $\pi_0$. Theorem \ref{Theorem3} extends Theorem \ref{Theorem1} to reputation games with stochastic network monitoring:
\begin{Theorem}\label{Theorem3}
If stage-game payoffs are monotone-supermodular and satisfy Assumption \ref{Ass1}, and the stochastic network $\mathcal{N}$ satisfies Assumption \ref{Ass4},
then there exist $\overline{\pi}_0 \in (0,1)$ and $\underline{\delta} \in (0,1)$,
such that for every $\pi_0 \in (0,\overline{\pi}_0)$ and $\delta > \underline{\delta}$,
there exists $(\sigma_1^{\delta},\sigma_2^{\delta}) \in \textrm{SE}(\delta,\pi_0,\mathcal{N})$, such that:
\begin{equation}\label{4.5}
     \mathbb{E}_1^{(\sigma_1^{\delta},\sigma_2^{\delta})} \Big[   \sum_{t=0}^{\infty} (1-\delta) \delta^t u_1(a_t,b_t) \Big] =\underline{v}_1.
\end{equation}
\end{Theorem}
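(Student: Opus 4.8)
The plan is to adapt the three-phase construction from the proof of Theorem~\ref{Theorem1} (Section~\ref{sub3.4}) to the private and imperfect monitoring generated by the stochastic network. As before, the current phase is a function only of the public record of player~$2$s' actions: play begins in a \emph{reputation-building phase}, moves to a \emph{reputation-maintenance phase} the first time some player~$2$ plays $b^*$, and moves to a \emph{punishment phase} the first time some player~$2$ plays an action other than $b^*$ after the maintenance phase has been entered. On the maintenance and punishment phases strategic-type player~$1$ plays $a^*$ and $a'$ respectively, with player~$2$ playing her Stackelberg best reply and $b'$; on the building phase strategic-type player~$1$ randomizes between $a^*$ and $a'$ essentially as in Section~\ref{sub3.4}, and player~$2$ plays $b^*$ only when her sample $N_t$ contains her immediate predecessor $t-1$, she observes $a_{t-1}=a^*$, and the public randomization device returns a value below a cutoff $r$; otherwise she plays $b'$. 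Because different player~$2$s privately observe different bounded subsets of player~$1$'s past actions, the continuation strategies in the maintenance and punishment phases must treat a player~$2$ who samples a period in which player~$1$ played an action other than the one prescribed for that period as evidence of a deviation, so that punishments can be \emph{triggered on the equilibrium path} even though strategic-type player~$1$ follows his prescription.

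First I would pin down the rate of trust building $r$. The essential change from the baseline is that playing $a^*$ in the building phase now triggers the transition only with probability $\Pr(t-1\in\mathcal N_{t+1})\cdot r\ge \gamma r$ rather than with probability $r$, since the next buyer must actually draw period $t$ into her sample; so $r$ is defined by the modified indifference equation for the strategic type between $a^*$ and $a'$ in the building phase, where both actions must deliver the continuation value $\underline v_1=u_1(a',b')$. Using Lemma~\ref{L3.1}, which gives $\underline v_1=\underline v_1^{min}<u_1(a^*,b^*)$ and hence strict slack between the maintenance-phase and punishment-phase continuation payoffs, I would show that $r\in(0,1)$ for all $\delta$ close enough to $1$ and that $r\to 0$ as $\delta\to 1$. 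Since the game starts in the building phase and player~$1$ is indifferent there, this is exactly what makes his equilibrium payoff equal $\underline v_1$; the fact that the speed of learning $\gamma r$ vanishes as $\delta\to 1$ is the mechanism behind the result, just as in Theorem~\ref{Theorem1}.

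Next I would verify incentives. For player~$1$: in the building phase he is indifferent between $a^*$ and $a'$ by construction of $r$, and strictly prefers both to every other action because $a'$ is a strict stage-game best reply to $b'$ and, by monotone-supermodularity (Condition~\ref{con2}, including the strictly decreasing differences of $u_1$), this strict preference is uniform over the small probability that player~$2$ plays $b^*$ and over the small probability that the current period turns out to be the transition period; in the maintenance phase he plays $a^*$ because any deviation is observed by the very next buyer with probability at least $\gamma$ (Assumption~\ref{Ass4}), which triggers the punishment phase and a permanent payoff of $\underline v_1<u_1(a^*,b^*)$, and this threat dominates the bounded one-shot gain once $\delta$ is large enough; in the punishment phase $a'$ is a myopic best reply to $b'$. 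For player~$2$s: the crucial step is bounding player~$1$'s reputation in the building phase. Since the strategic type plays $a^*$ with probability at most $q^*/2$ in each building-phase period, any sample of at most $K$ actions all equal to $a^*$ raises the likelihood ratio in favor of the commitment type by a factor of at most $(q^*/2)^{-K}$ up to the sampling weights, so choosing $\overline\pi_0$ so that $\pi_0/(1-\pi_0)$ lies below the corresponding threshold keeps the posterior $\pi(h^t)\le q^*/2$; hence player~$2$'s belief that $a_t=a^*$ is strictly below $q^*$ and $b'$ is her strict best reply. In the maintenance phase she believes $a_t=a^*$ and plays $b^*$; and when she privately observes player~$1$ having played an action below $a^*$ in a period where $a^*$ was prescribed, her posterior attaches probability one to the strategic type (the commitment type never deviates), she anticipates the punishment phase, and by monotone-supermodularity a lower action by player~$1$ calls for a lower action by player~$2$, so playing $b'$ is optimal. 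Because these posteriors are pinned down at every history, including off path, the profile is a sequential equilibrium.

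The main obstacle, relative to Theorem~\ref{Theorem1}, is the private and imperfect monitoring. Player~$1$ no longer observes which buyers sampled which periods, so he cannot perfectly coordinate his play with the buyers' information: he must reason about the distribution of what buyers have privately observed rather than about a single commonly-known state, deviations are detected only stochastically, and the punishment phase can be reached along the equilibrium path. The heart of the argument is therefore to show that (i) any deviation by player~$1$ is detected with probability uniformly bounded below, which is precisely what $\Pr(t-1\in\mathcal N_t)\ge\gamma$ buys, so that the maintenance-phase incentive constraint holds for $\delta$ near $1$ with a slack coming from $\underline v_1<u_1(a^*,b^*)$; and (ii) on the equilibrium path every buyer, regardless of which bounded subset she samples, classifies each sampled period as on- or off-path using only the public record of player~$2$s' actions, so that the probability of reverting from the maintenance phase to the punishment phase is a second-order effect that is already built into the calibration of $r$ and does not disturb the conclusion that player~$1$'s discounted average payoff is exactly $\underline v_1$. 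Once these two facts are established, all remaining incentive checks reduce to one-shot comparisons essentially identical to those in Section~\ref{sub3.4}.
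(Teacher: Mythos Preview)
Your construction has a genuine gap in the maintenance phase. You prescribe that the strategic type plays $a^*$ at every maintenance-phase history, and that a player~2 who privately samples a past deviation responds with $b'$; you justify this by invoking monotone-supermodularity, ``a lower action by player~1 calls for a lower action by player~2.'' But player~2 is myopic: her best reply depends only on her belief about player~1's \emph{current-period} action, not on what she sampled from the past. Under your prescription player~1 plays $a^*$ in the current period regardless of his own past play, so a player~2 who has just observed $a_{t-1}\neq a^*$ still assigns probability one to $a_t=a^*$ and strictly prefers $b^*$. She cannot be induced to trigger the punishment, and player~1's maintenance-phase incentive therefore unravels. The paper's fix is essential rather than cosmetic: after playing $a_{t-1}\neq a^*$ the strategic type switches to the mixed action $p^*a^*+(1-p^*)a''$, which makes a player~2 who observed the deviation exactly indifferent between $b^*$ and $b''$; her mixing probability is then calibrated (equations (B.4)--(B.5), Lemma~B.2) so that player~1 is in turn indifferent between $a^*$ and $a''$ at that history.

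There is a parallel gap in the building phase, which the paper singles out as the main technical difficulty. Porting the constant mixing probability $q^*/2$ and the public-randomization cutoff from Section~3.4 does not work once player~2 may fail to sample period $t-1$: at a history with $\xi_t\le r$ and $t-1\notin N_t$, player~1 plays $a^*$ for sure whenever $a_{t-1}=a^*$, and since player~2 cannot rule this out her belief that $a_t=a^*$ need not stay below $q^*$. More fundamentally, because different player~2s hold different private posteriors, no single mixing probability for player~1 can render all of them indifferent---and indifference is exactly what is needed for the transition-triggering randomization. The paper instead makes player~1's building-phase mixing depend on his private history: for $t\le M$ it is calibrated so that a hypothetical \emph{full-history} observer assigns probability exactly $q^*$ to $a^*$ (equation (B.18)), which automatically makes every coarser observer indifferent; this belief-free device fails once the full-history posterior along $(a^*,\ldots,a^*)$ exceeds $q^*$, so for $t>M$ the mixing probability depends on the entire action profile $\chi^t\in\{0,1\}^t$ and is chosen to solve the linear system (B.19), one equation per private history $h_2^t$. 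Lemma~B.4 shows this system is underdetermined once $2^t>2^K\sum_{j\le K}\binom{t}{j}$, and Lemma~B.5 delivers a solution with all entries in $[\eta q^*,1-\eta q^*]$. Your proposal does not engage with this machinery.
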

The proof is in Appendix B. Several economically interesting features in the construction of Section \ref{sub3.4} extend.
First, play starts from an \textit{active learning phase} in which player $1$ receives a low payoff, and gradually enters a \textit{reputation maintenance phase} in which learning stops and player $1$ receives a high payoff. Second, the transition probability between the two phases depends on player $1$'s discount factor, which vanishes to zero as $\delta \rightarrow 1$. Third,
the asymptotic play converges to the Stackelberg outcome $(a^*,b^*)$ with probability close to $1$. However, the speed with which play converges to this high-payoff phase is low. This wipes out a patient player $1$'s gains from building reputations and
discourages him from playing the Stackelberg action and investing in his reputation.

A major technical challenge stems from \textit{private monitoring} and \textit{private learning}, both of which arise when the seller does not observe the realized network. The belief-free approach to construct equilibria in Ely, H\"{o}rner and Olszewski (2005),
H\"{o}rner and Lovo (2009), and H\"{o}rner, Lovo and Tomala (2011)
does not apply, since player $2$s are myopic in my model and have no intertemporal incentive. In equilibria where nontrivial learning takes place, player $2$s' actions are sensitive to their posterior beliefs about player $1$'s type, making it hard to sustain belief-free incentives.

To illustrate the idea of my proof, consider the product choice game in the introduction. Let $q^*$ be
the minimal probability with which $H$ needs to be played in order to provide player $2$ an incentive to play $T$.
Let $\{a_0,...,a_{t-1},b_0,...,b_{t-1}\}$ be a \textit{complete history} in period $t$.

First, consider providing belief-free incentives such that (1) conditional on each \textit{complete history}, player $2$ believes that $H$ will be played with probability $q^*$, and (2) each player $2$ mixes between $T$ and $N$ with probability that makes player $1$ indifferent between $H$ and $L$. Under this arrangement,
both $T$ and $N$ are player $2$'s best replies, regardless of her belief about player $1$'s type and private history.

However, this belief-free arrangement is feasible in period $t$ only if after observing
\textit{each} complete history in period $t$,
player $2$'s posterior belief attaches probability less than $q^*$ to the commitment type.
Since myopic player $2$s play $N$ in the active learning phase, the probability with which player $1$ plays $H$ is at most $q^*$. Therefore,
a hypothetical observer's posterior belief attaches probability arbitrarily close to $1$ to the commitment type after observing a sufficiently long string of $H$.
This implies the existence of a cutoff calendar time, such that the above belief-free arrangement is feasible only if calendar time is below this cutoff.

In light of this observation, I use the following \textit{belief-based construction} when calendar time is above the aforementioned cutoff.
In particular,
player $1$'s action depends on his private history, which
is chosen such that each player $2$ is indifferent under her posterior belief about player $1$'s private history.
This is equivalent to establish the existence of solution to a system of linear equations, in which the number of player $1$'s private histories is the number of free variables, and the number of player $2$'s private histories is the number of linear constraints.
In period $t$, the number of free variables is $2^t$. Given that
each player $2$'s
sample size is at most $K$, the number of constraints is at most
$2^K \sum_{j=0}^K {t \choose j}$.
An important observation is that the linear system is under-determined \textit{if and only if} $t$ is large relative to $K$.
This explains why the belief-free construction is used only when calendar time is below the cutoff, and the belief-based construction is used only when calendar time is above the cutoff.

\section{Private Signals about Current Period Action}\label{sec5}
This section studies a variant of the baseline model in which every player $2$ observes \textit{an informative private signal} about player $1$'s action \textit{in the current period}, in addition to the entire history of her predecessors' actions, and possibly, player $1$'s actions in the last $K \in \{0\} \cup \mathbb{N}$ periods (or player $1$'s actions against her neighbors in some stochastic network that satisfies Assumption \ref{Ass4}).

I provide necessary and sufficient conditions under which the patient player can secure his Stackelberg payoff in all equilibria (Theorems \ref{Theorem4} and 4').
I also discuss policy implications based on the comparison between the results in this section and the ones in Sections \ref{sec3} and \ref{sec4}.

Players move \textit{sequentially} in the stage game. In period $t$, player $1$ chooses $a_t \in A$ after observing $h_1^t$. Before choosing $b_t \in B$, player $2$ observes a noisy private signal about $a_t$ denoted by $s_t \in S$, the entire history of player $2$s' past actions, and potentially, player $1$'s actions in the last $K$ periods. Formally, let
    $h_2^t \equiv \{s_t,b_0,...,b_{t-1},a_{t-K},...,a_{t-1}\}$.
    Player $1$'s private history remains the same as in the baseline model.
    Players' strategies and strategy sets are defined accordingly.

Let $f(\cdot|a_t)$ be the distribution of $s_t$, with
$\mathbf{f} \equiv \{f(\cdot|a)\}_{a \in A}$.
Let $\textrm{NE}(\delta,\pi_0,K, \mathbf{f})$ be the set of BNEs, and
\begin{equation}
    \underline{V}_1(\delta, \pi_0,K,\mathbf{f}) \equiv  \inf_{(\sigma_1,\sigma_2) \in \textrm{NE}(\delta,\pi_0,K,\mathbf{f})}
    \mathbb{E}_1^{(\sigma_1,\sigma_2)} \Big[   \sum_{t=0}^{\infty} (1-\delta) \delta^t u_1(a_t,b_t) \Big].
\end{equation}
be player $1$'s worst BNE payoff. Let $\textrm{SE}(\delta,\pi_0,K, \mathbf{f})$ be the set of sequential equilibrium strategy profiles.

\paragraph{Games with $|A|=2$:}
I start from games in which player $1$'s action choice is binary (for example, the product choice game in the introduction), while $|B|$ and $|S|$ can be arbitrary. I introduce the notions of boundedly informative signals and unboundedly informative signals, and characterize the set of $\mathbf{f}$ such that a patient player $1$ can secure his Stackelberg payoff in all BNEs.
\begin{Condition}[Boundedly Informative \textit{vs} Unboundedly Informative Signals]
For given $a \in A$, I say that $\mathbf{f}$ is \textit{unboundedly informative} about $a$ if there exists $s \in S$ such that $f(s|\widetilde{a})>0$ if and only if $\widetilde{a}=a$. Otherwise, $\mathbf{f}$ is \textit{boundedly informative} about $a$.
\end{Condition}
\begin{Theorem}\label{Theorem4}
Suppose $|A|=2$.
\begin{itemize}
  \item[1.] If $\mathbf{f}$ is unboundedly informative about $a^*$, then for every $K \in \mathbb{N} \cup\{0\}$ and $\pi_0 >0$,
  \begin{equation}
 \liminf_{\delta \rightarrow 1}   \underline{V}_1(\delta, \pi_0,K,\mathbf{f})  \geq u_1(a^*,b^*).
\end{equation}
  \item[2.] If $\mathbf{f}$ is boundedly informative about $a^*$ and players' stage-game payoffs satisfy Assumptions \ref{Ass1} and \ref{Ass2}, then for every $K \in \mathbb{N} \cup\{0\}$, there exists $\overline{\pi}_0 \in (0,1)$ such that for every $\pi_0 \in (0,\overline{\pi}_0)$
and $\delta$ large enough,
there exists $(\sigma_1^{\delta},\sigma_2^{\delta}) \in \textrm{SE}(\delta,\pi_0,K,\mathbf{f})$, such that:
\begin{equation}
     \mathbb{E}_1^{(\sigma_1^{\delta},\sigma_2^{\delta})} \Big[   \sum_{t=0}^{\infty} (1-\delta) \delta^t u_1(a_t,b_t) \Big] = \underline{v}_1.
\end{equation}
\end{itemize}
\end{Theorem}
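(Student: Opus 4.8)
For part~1 the plan is the standard reputation strategy for player~$1$---imitate the commitment type, i.e.\ play $a^*$ in every period---combined with a lower bound on the speed of learning that does not depend on $\delta$. Fix a Bayes Nash equilibrium; let $\mathbb{P}^{dev}$ be the measure induced when player~$1$ plays $a^*$ forever and the player~$2$s follow their equilibrium strategies, and let $\mathbb{P}^{eq}$ be the equilibrium measure. Let $\bar s\in S$ be a signal that reveals $a^*$, so $\lambda\equiv f(\bar s|a^*)>0$ while $f(\bar s|a)=0$ for the other action $a$. The argument rests on three facts. (i) On the filtration generated by the public record, $\mathbb{P}^{dev}$ equals $\mathbb{P}^{eq}$ conditioned on the commitment type, so Gossner's (2011) entropy inequality bounds the sum over $t$ of the expected relative entropy between the true and the predicted distribution of period-$t$ public information by $-\ln\pi_0$; when player~$1$'s deviation is deterministic this per-period term equals $\ln(1/\rho_t)$, where $\rho_t$ is the public probability that $a_t=a^*$. (ii) An \emph{improvement inequality}: since $|A|=2$, the set of signal realizations after which $b^*$ is player~$2$'s best reply is an upper set in the likelihood-ratio order, so moving from the equilibrium signal distribution to $f(\cdot|a^*)$---which dominates it in that order---only raises the probability she best-replies to $a^*$, and the revealing signal $\bar s$ forces this increase to be at least $(1-\rho_t)\lambda$. (iii) A \emph{saturation} fact: there is a threshold $\bar\rho<1$, depending only on $\mathbf{f}$ and $u_2$, such that whenever $\rho_t>\bar\rho$ every realization of $s_t\sim f(\cdot|a^*)$ drives player~$2$'s posterior past the cutoff for $b^*$, so under $\mathbb{P}^{dev}$ she plays $b^*$ for sure.

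Combining these: by Pinsker's inequality a small period-$t$ entropy term forces $(1-\rho_t)\lambda$, hence $1-\rho_t$, to be small; so every period with $\rho_t\le\bar\rho$ adds at least a fixed amount $\varepsilon_0\equiv 2\big((1-\bar\rho)\lambda\big)^2$ to the entropy sum, whence the expected number of such periods is at most $(-\ln\pi_0)/\varepsilon_0$, independent of $\delta$. By (iii), in every other period $b_t=b^*$ under $\mathbb{P}^{dev}$, so imitating the commitment type yields player~$1$ at least $u_1(a^*,b^*)-(1-\delta)(-\ln\pi_0)\varepsilon_0^{-1}(\max u_1-\min u_1)$, and the same bound holds for his worst equilibrium payoff; letting $\delta\to1$ proves part~1. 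The crux, I expect, is the improvement inequality (ii): it is precisely where binary action sets are used, because only then does player~$2$'s decision reduce to a scalar threshold on the posterior of $a^*$ to which likelihood-ratio monotonicity of signals transfers directly; with three or more actions one must instead impose monotone-supermodularity and MLRP, as in Theorem~$4'$. A secondary, more mechanical difficulty is the bookkeeping for $K\ge1$: the relevant filtration is generated by the public record (player~$2$s' actions together with the rolling windows of player~$1$'s actions), not by player~$2$s' actions alone, and one must check that the argument still yields the saturation conclusion for player~$2$'s window-conditioned belief; for $K=0$ this is immediate.

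For part~2 the plan is to transplant the three-phase construction of Theorems~\ref{Theorem1} and~\ref{Theorem3}. The equilibrium has a reputation-building (active-learning) phase in which $(a',b')$ is played with high probability and player~$1$'s continuation value is $\underline v_1=u_1(a',b')$, a reputation-maintenance phase in which $(a^*,b^*)$ is played, and an off-path punishment phase with value $\underline v_1$; the probability $r$ of leaving the building phase is chosen to make player~$1$ exactly indifferent between $a^*$ and $a'$, and satisfies $r\to0$ as $\delta\to1$, so that learning is arbitrarily slow. Bounded informativeness about $a^*$ enters through $\sup_s f(s|a^*)/f(s|a)<\infty$: after observing previous player~$2$s' actions and any single private signal, a player~$2$ with a sufficiently pessimistic prior on the commitment type still assigns probability below the cutoff for $b^*$ to $a_t=a^*$, and so optimally keeps playing $b'$ rather than triggering a premature transition. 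For $K\ge1$ the construction of Section~\ref{sub3.4} carries over almost verbatim: player~$2$ uses her observation of $a_{t-1}$ and the public randomization device to decide when to play $b^*$, ignores her private signal on the relevant histories, and playing $a'$ keeps play in the building phase with probability one, which pins player~$1$'s value at $u_1(a',b')$. The case $K=0$ is, I expect, the main obstacle: there player~$1$'s current action influences the continuation only through the distribution of player~$2$'s signal and hence of $b_t$, so the building phase must be arranged with player~$2$'s action signal-contingent and player~$1$'s reputation accumulating gradually, a design paralleling the belief-based construction of Section~\ref{sec4}. One finishes by verifying sequential rationality and the consistency of off-path beliefs, as in the proof of Theorem~\ref{Theorem1}.
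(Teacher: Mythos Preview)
Your overall architecture for both parts is right, but you have inverted the difficulty in each case.

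\textbf{Part 1.} For $K=0$ your sketch is essentially the paper's argument. For $K\ge 1$, however, the heterogeneous-window issue is not ``secondary'' or ``mechanical''---it is the heart of the proof. The object you call ``the filtration generated by the public record (player~$2$s' actions together with the rolling windows)'' is \emph{not} a filtration: period-$t$ player~$2$ knows $a_{t-K}$ while period-$(t+1)$ player~$2$ does not, so the public histories $h^t$ are not nested and Gossner's entropy inequality cannot be applied to them directly. The paper instead works with the nested filtration generated by $(b_0,b_1,\ldots)$ alone, decomposing the posterior likelihood ratio as a product of $b$-increments times one window term. The crucial step you are missing is then to show that the informativeness of $b_t$ about $\omega$ from period-$t$ player~$2$'s perspective (who sees the window $(a^*,\ldots,a^*)$) transfers to future player~$2$s who do not see that window. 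The paper's device: if, conditional on $(b_0,\ldots,b_{t-1})$, period-$t$ player~$2$ still plays $b^*$ with probability bounded away from one after observing the all-$a^*$ window, then the $b$-filtration probability of that window must itself be bounded away from zero---otherwise observing it would push her posterior on the commitment type near one and she would play $b^*$ for sure. This lower bound on the window probability is what lets one bound $g(t\mid h^s)$ for $s>t$ and close the argument. Your facts (ii) and (iii) are correct as stated only once one has conditioned on the realized window, and the work is precisely in removing that conditioning.

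\textbf{Part 2.} Here you have the easy and hard cases reversed. For $K\ge 1$ you are right that the three-phase construction of Theorem~\ref{Theorem1} carries over with the single modification that the strategic type plays $a^*$ with probability $\widehat q = q^*/\bigl(q^*+(1-q^*)l^*(\mathbf f)\bigr)$, where $l^*(\mathbf f)=\max_s f(s|a^*)/f(s|a')<\infty$; this guarantees that even the most favorable signal together with an all-$a^*$ window leaves player~$2$'s posterior below the cutoff for $b^*$. But $K=0$ is \emph{not} the main obstacle---it is trivial. With $K=0$ the strategic type plays $a'$ in every period and every player~$2$ plays $b'$ in every period. Because $\mathbf f$ is boundedly informative about $a^*$, for $\pi_0$ small enough no signal realization can move player~$2$'s posterior on $a_t=a^*$ above the cutoff, so $b'$ is her best reply after every signal; and since $b'$ is then constant, player~$1$'s deviation to $a^*$ affects neither the current-period action of player~$2$ nor any future play, while strictly lowering his stage payoff. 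No belief-based construction is needed.
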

Theorem \ref{Theorem4} suggests that in games where $|A|=2$ and player $1$'s Stackelberg payoff is strictly greater than $\underline{v}_1$,
player $1$ can secure his Stackelberg payoff \textit{if and only if} player $2$'s private signal about his action is unboundedly informative.
The sufficiency part of this result applies to \textit{all} monitoring structures about player $1$'s past actions. For example,  player $1$ can secure his Stackelberg payoff even when player $2$s cannot observe any of his past actions (i.e., $K=0$), observe noisy signals about his past actions, or observe his past actions according to any arbitrary stochastic network $\mathcal{N}$.

Intuitively, player $2$ observing an unboundedly informative private signal about $a_t$ guarantees a \textit{lower bound on the speed of observational learning} that \textit{does not} depend on player $1$'s discount factor.
This accelerates the process of reputation building, and makes it worthwhile for player $1$ to establish a good reputation. When $s_t$ is boundedly informative about $a^*$, the minimal speed of learning vanishes to zero as $\delta$ goes to $1$.
Similar to Theorem \ref{Theorem1},
the prolonged learning process wipes out player $1$'s gains from building reputations, no matter how patient he is.

The requirement of unboundedly informative signals in Theorem \ref{Theorem4} is reminiscent of a well-known result in Smith and S{\o}rensen (2000),\footnote{My result
does not follow from Mirrlees (1976) who shows that in principal-agent models, the principal can implement the first best outcome when there exists a signal realization that occurs with zero probability when the agent takes the first-best action and occurs with positive probability otherwise.
This is because in my model, the rewards and punishments to player $1$ are dictated by future player $2$s' behaviors. Depending on the equilibrium being played, there are \textit{multiple ways} in which the signal realizations can be mapped into player $1$'s continuation payoffs. When $\delta$ is close to $1$, it is unclear whether player $1$ has an incentive to play $a^*$ and attain his Stackelberg payoff $u_1(a^*,b^*)$ in \textit{all equilibria}.} that
in canonical social learning models, myopic players'
actions are asymptotically efficient
\textit{if and only if} their private signals are unboundedly informative about the payoff-relevant state.
Theorem \ref{Theorem4} is conceptually different from their result for two reasons.

First, as suggested by Theorem \ref{Theorem1} and Proposition \ref{Prop3}, converging to a high-payoff outcome asymptotically is \textit{insufficient} for player $1$ to receive a high discounted average payoff, no matter how close his discount factor is to $1$. This is demonstrated by the constructed equilibria in the proof of Theorem \ref{Theorem1}, in which
a patient player's discounted average payoff is low despite his asymptotic payoff is high.

Second, the short-run players in my model learn about the \textit{endogenous actions} of a strategic long-run player, while in Smith and S{\o}rensen (2000), they learn about  an \textit{exogenous state}.
Even when $\mathbf{f}$ is unboundedly informative about $a^*$, there is no guarantee that $b_t$ is informative about $a_t$, or $b_t$ is informative about player $1$'s type in periods where player $1$ receives a low stage-game payoff. An example is provided later in this section, in which
$\mathbf{f}$ is unboundedly informative about $a^*$, but
$b_t$ is uninformative about player $1$'s type despite $b^*$ is played with probability bounded away from $1$.

In addition, a technical difficulty arises when $K \geq 1$. Player $2$s in my model can entertain \textit{heterogenous beliefs} about the informativeness of previous  player 2s' actions, while in
Smith and S{\o}rensen (2000), the informativeness of each public signal is common knowledge.
Intuitively, each player 2 privately observes  player 1's actions in the last $K$ periods, which are not observed by other player 2s.
Due to the potential serial correlation in player 1's equilibrium actions,
the informativeness of $b_t$ about $a_t$ can be different under the private beliefs of different player 2s.

My proof (Appendix \ref{secC}) addresses these concerns by making use of
 three observations.
First, regardless of player $2$'s prior belief about player $1$'s action, she
has a \textit{strict incentive} to play $b^*$
after observing the signal realization that  occurs only when $a=a^*$.
This highlights the role of unboundedly informative signals, which contrasts to
boundedly informative signals
and the baseline model in which player $2$s do not receive any private signal about player $1$'s current-period action.

Second, when player $1$'s action choice is binary and $\mathbf{f}$ is unboundedly informative about $a^*$, the following likelihood ratio that measures the informativeness of player $2$'s action
    \begin{equation*}
            \frac{\Pr(b_t=b^*|a_t=a^*)}{\Pr(b_t=b^*|a_t\neq a^*)}
    \end{equation*}
is bounded away from $1$ whenever the ex ante probability with which $b_t=b^*$ is bounded away from $1$.
As a result, the informativeness of $b_t$ about $a_t$ is bounded from below according to the private belief of player $2$ who arrives in period $t$.

Third, if the ex ante probability that $b_t=b^*$ is bounded away from $1$ and $b_t$ is informative about player $1$'s type according to period $t$ player $2$'s private belief,
then the informativeness of $b_t$ about $\omega$ is also uniformly bounded from below according to the private beliefs of \textit{all} future player $2$s.

For an intuitive explanation,  suppose player $2$ who arrives in period $t$ observes that $a^*$ has been played in the last $K$ periods,
and before observing the realization of $s_t$,
 believes that she will play $b^*$ with probability bounded away from $1$. Then the probability with which $(a_{t-K},...,a_{t-1})=(a^*,...,a^*)$ under strategic-type player $1$'s equilibrium strategy is bounded away from zero. Otherwise,
 after
 period $t$ player $2$ observes $(a_{t-K},...,a_{t-1})=(a^*,...,a^*)$, her posterior belief attaches probability close to $1$ to the commitment type, and the probability that she plays $b^*$ in period $t$ cannot be bounded away from $1$.
Therefore, the probability with which player $2$ in period $s \neq t$ believes that $(a_{t-K},...,a_{t-1})=(a^*,...,a^*)$ occurring with very low probability is uniformly bounded from above.

The above reasoning suggests that in all equilibria,
if strategic-type player $1$ deviates from his equilibrium strategy and imitates the commitment type,
then in every period $t \in \mathbb{N}$, \textit{either} player $2$ plays $b^*$ with probability close to $1$,
\textit{or} every player $2$ who arrives after period $t$ attaches probability bounded from below to period $t$ player $2$'s true private history.
As a result, as long as the informativeness of $b_t$ about $a_t$ is bounded away from $0$ according to period $t$ player $2$'s private belief,
it is uniformly bounded away from $0$ according to \textit{all} future player $2$s' private beliefs.


\paragraph{Games with $|A| \geq 3$:} In games where player $1$ has three or more actions, the equivalence between guaranteeing
high returns from  building reputations
and unboundedly informative private signal about the Stackelberg action breaks down. To illustrate, consider the following $2 \times 3$ game:
\begin{center}
\begin{tabular}{| c | c | c |}
\hline
  - & $b^*$ & $b'$  \\
  \hline
  $\overline{a}$ & $1,4$ & $-2,0$ \\
  \hline
  $a^*$ & $2,1$ & $-1,0$  \\
  \hline
    $\underline{a}$ & $3,-2$ & $0,0$ \\
  \hline
\end{tabular}
\end{center}
Let $S \equiv \{\overline{s},s^*,\underline{s}\}$, with $f(s^*|a^*)=2/3$, $f(\underline{s}|a^*)=1/3$, $f(\overline{s}|\overline{a})=1$, $f(\overline{s}|\underline{a})=1/3$, and $f(\underline{s}|\underline{a})=2/3$.

One can verify that players' stage-game payoffs
satisfy Assumptions \ref{Ass1} and \ref{Ass2}, and
are monotone-supermodular when the order on player $1$'s actions is
$\overline{a} \succ a^* \succ \underline{a}$, and
the order on player $2$s' actions is  $b^* \succ b'$.
Player $1$'s  Stackelberg action is $a^*$, his Stackelberg payoff is $2$, and
$\mathbf{f}$ is unboundedly informative about $a^*$.

I construct a sequential equilibrium in which player $1$'s payoff is bounded away from $2$.  Strategic-type player $1$ plays a mixed action that depends only on player $2$'s posterior belief about his type. If player $2$'s posterior belief assigns probability $\pi$ to the commitment type, then the strategic-type player $1$ plays $\alpha (\pi) \in \Delta (A)$, which is pinned down by:
\begin{equation*}
   (1-\pi) \circ \alpha(\pi) +\pi \circ a^* = 0.5 \circ a^* + 0.25 \circ \overline{a}+ 0.25 \circ \underline{a}.
\end{equation*}
Player $2$ plays $b^*$ if $s_t \in \{s^*,\overline{s}\}$ and plays $b'$ if $s_t=\underline{s}$.

Player $1$'s payoff under this strategy profile is $1$, which is strictly below his pure Stackelberg payoff $2$.
This strategy profile is an equilibrium since player $1$'s expected stage-game payoff
is $1$ no matter which action he plays, and his continuation payoff is independent of his action in the current period. Player $2$ has a strict incentive to play $b^*$ after observing $\overline{s}$ or $s^*$,
and has an incentive to play $b'$ after observing  $\underline{s}$.
Conditional on each type of player $1$, the probability
with which player $2$ plays $b^*$ is $2/3$, i.e., $b_t$ is uninformative about player $1$'s type.

In this example, $b_t$ is uninformative about player $1$'s type despite $\mathbf{f}$ is unboundedly informative about $a^*$
and the ex ante probability with which $b_t=b^*$ is bounded away from $1$.
This is driven by the  \textit{heterogeneity} in player $2$'s incentive to play $b^*$ against different actions of player $1$'s.
In particular, player $2$ has a stronger incentive to play $b^*$ when player $1$ plays $\overline{a}$ than when player $1$ plays $a^*$.
As a result, player $2$ has an incentive to play $b^*$ following a signal realization $\overline{s}$ that occurs with lower probability under $a^*$, and has an incentive to play $b'$ following a signal realization $\underline{s}$ that occurs with higher probability under $a^*$. This concern never arises when $|A|=2$ given there is only one action in $A$ that is not the Stackelberg action. However, it can happen in games where $|A| \geq 3$.

Motivated by the buyer-seller applications, I focus on games with \textit{monotone-supermodular payoffs} and extend Theorem \ref{Theorem4} to games in which player $1$ can have any number of actions, and
the signal distribution $\mathbf{f}$
satisfies a standard \textit{monotone likelihood ratio property} (or MLRP).
\begin{Condition}[MLRP]
Given a complete order on $A$,
$\mathbf{f}$ has MLRP if there exists a complete order on $S$, such that:
\begin{equation}\label{4.22}
    \frac{f(s|a)}{f(s'|a)} \geq  \frac{f(s|a')}{f(s'|a')} \textrm{ for every }  a \succ a' \textrm{ and } s \succ s'.
\end{equation}
\end{Condition}
In applications to retail markets where a patient seller chooses the quality he supplies and each buyer along a sequence chooses whether to trust the seller after observing an informative private signal about the seller's action in the current period, MLRP requires the buyers' private signal realizations to be ranked such that a higher signal realization indicates that the product is of higher quality.

When player $1$'s actions are ranked according to $\overline{a} \succ a^* \succ \underline{a}$,\footnote{Under alternative complete orders on $A$, players' payoffs are not monotone-supermodular.}
the previous $2 \times 3$ example violates MLRP regardless of the complete order on $S$.
This is because $\overline{s}$ occurs with strictly positive probability under $\overline{a}$ and $\underline{a}$, but occurs with zero probability under $a^*$. As a result, there exist bad equilibria in which player $2$'s actions are uninformative about player $1$'s action,  patient player $1$ receives a low payoff from building reputations, and has weak incentives to invest in his reputation.
My characterization result for games with $|A| \geq 3$ is stated as Theorem 4':
\begin{Theorem4}
Suppose players' payoffs are monotone-supermodular and
$\mathbf{f}$ satisfies MLRP.
\begin{itemize}
  \item[1.] If $\mathbf{f}$ is unboundedly informative about $a^*$, then for every $K \in \mathbb{N} \cup\{0\}$ and $\pi_0 >0$,\\
   \begin{equation}
   \liminf_{\delta \rightarrow 1}   \underline{V}_1(\delta,\pi_0,K,\mathbf{f})  \geq u_1(a^*,b^*).
   \end{equation}
  \item[2.] If $f(\cdot|a)$ has full support for every $a \in A$  and players' payoffs satisfy Assumption \ref{Ass1}, then for every $K \in \mathbb{N} \cup\{0\}$, there exists $\overline{\pi}_0 \in (0,1)$ such that for every $\pi_0 \in (0,\overline{\pi}_0)$
and $\delta$ large enough,
there exists $(\sigma_1^{\delta},\sigma_2^{\delta}) \in \textrm{SE}(\delta,\pi_0,K,\mathbf{f})$, such that:
\begin{equation}
     \mathbb{E}_1^{(\sigma_1^{\delta},\sigma_2^{\delta})} \Big[   \sum_{t=0}^{\infty} (1-\delta) \delta^t u_1(a_t,b_t) \Big] = \underline{v}_1.
\end{equation}
\end{itemize}
\end{Theorem4}
Theorem 4' suggests that in games with monotone-supermodular payoffs and the signal distribution satisfies MLRP, $\mathbf{f}$ being unboundedly informative about $a^*$ is sufficient and almost necessary for player $1$ to secure his Stackelberg payoff in all equilibria of the reputation game. The $2\times 3$ game example earlier in this section demonstrates why the MLRP requirement is indispensable. In Appendix \ref{subC.4}, I use an example to explain why the full support condition in statement 2 of Theorem 4' cannot be replaced by bounded informativeness when player $1$ has three or more actions.

The proof is similar to that of Theorem \ref{Theorem4}, with the differences explained in Appendix \ref{subC.3}.
For statement 1, the key is to show that for every prior belief about player $1$'s action $\alpha \in \Delta (A)$ with $a^* \in \textrm{supp}(\alpha)$, and every best reply
$\beta: S \rightarrow \Delta (B)$
of player $2$'s against $\alpha$, if the ex ante probability with which $b_t=b^*$ is bounded away from $1$, then the relative entropy between the distribution over $b$ induced by $(\alpha,\beta)$ and that induced by $(a^*,\beta)$ is bounded away from $0$.

I explain the role of unbounded informativeness and MLRP in deriving this reputation result.
Since players' payoffs are monotone-supermodular and $\mathbf{f}$ satisfies MLRP, player $2$ has an incentive to play $b^*$ only when the realization of $s$ belongs to some interval $[\underline{s},\overline{s}]$.
Since $\mathbf{f}$ is unboundedly informative, there exists $s^*$ such that $f(s^*|a)>0$ if and only if $a=a^*$, i.e.,
the probability that $s \neq s^*$ is strictly lower when player $1$ plays $a^*$ compared to any other action.
MLRP also implies that the likelihood ratio between $s \in (s^*,\overline{s}]$ and $s>\overline{s}$ is decreasing in $a$,
and the likelihood ratio between $s \in [\underline{s},s^*)$ and $s<\underline{s}$ is increasing in $a$.
As a result, for any $\alpha' \in \Delta (A\backslash\{a^*\})$ and $\beta: S \rightarrow \Delta (B)$
that is non-decreasing
in $s$ and is not constantly $b^*$ in the support of $f(\cdot|\alpha')$, the probability with which player $2$ plays $b^*$ is strictly higher
when player $1$ plays $a^*$ relative to $\alpha'$.
This implies that player $2$'s action is informative about player $1$'s type as long as
her ex ante probability of playing $b^*$ is bounded away from $1$.

\section{Policy Implications}\label{sec6}
Theorems \ref{Theorem4} and 4'  shed light on the effectiveness of various policies in accelerating social learning and encouraging sellers to supply high quality products.

To illustrate, consider a regulator who faces a budget constraint in every period and can inspect at most a fraction $\epsilon \in (0,1)$ of products currently sold on the market. In another word, a seller's product quality in any given period is known to the market with probability at most $\epsilon$.

Theorem \ref{Theorem4} suggests that such a policy is effective in accelerating learning when the regulator issues quality certificates to the inspected products that have high quality. Given the presence of observational learning (i.e., consumers observing each other's actions), this policy restores patient sellers' incentives to build reputations even when only the consumers who demand products in the current period can notice this certificate.
By contrast, informing consumers only about the low-quality products that are screened out
cannot effectively accelerate learning, and the market may remain in a bad equilibrium with slow learning and weak incentives to establish good reputations.

Broadly speaking,
my model and policy implications fit into retail markets in developing countries, in which there is significant asymmetric information, lack of formal records, and persistent mistrust between buyers and sellers. As substitutes for official records, consumers acquire information from their peers, such as observing others' choices and learning about others' experiences through word-of-mouth communication.

An example in which my modeling assumptions fit is the watermelon retail market studied in Bai (2018). My results suggest an explanation to the findings in her field experiment. To start with, in the baseline setting prior to policy interventions, buyers' choices respond slowly to their acquaintances' past experiences,
and sellers exert little effort on sorting to improve the quality of their melons.\footnote{The watermelon retail market is highly localized, with most consumers coming from the same neighborhood and knowing each other. The seller can improve the quality of his melons by exerting effort on sorting when procuring melons from the wholesale market.
Bai (2018)'s structural estimation results rule out several alternative explanations, such as reputation fails due to the seller's impatience, noisy monitoring of the seller's actions, and so on.} My Theorems \ref{Theorem1} and \ref{Theorem3} suggest a rationale for her observations of slow learning, persistent mistrust, and weak incentives to build good reputations.

She then conducts a randomized control trial that provides different sellers with different branding technologies.
She finds that among the group of sellers who are
provided with novel laser-cut labels, most of them exert high effort on sorting and
trust is gradually built between these sellers and their buyers.
Among the group of sellers who are provided with standard
sticker labels that can be counterfeited, the outcomes are similar to the baseline setting in which sellers are reluctant to build reputations and consumers' skepticism about these sellers' product quality persists over time.

Theorems \ref{Theorem4} and 4' suggest an explanation to the different outcomes under these treatments:
The laser-cut labels cannot be easily forged, and consumers believe that they can only be used on high-quality products. This intervention increases buyers' responsiveness to the seller's actions, guarantees a lower bound on the speed of observational learning regardless of the buyers' prior belief about the seller's action.
The sticker labels are ineffective since they can be counterfeited. When
there is widespread mistrust between buyers and sellers (i.e., buyers' prior belief attaches low enough probability to the commitment type), and buyers entertain the adverse belief that the strategic-type seller is likely to exert low effort, they have a rationale for not trusting the seller
no matter which quality label they observe.
The low responsiveness of their actions
leads to
slow observational learning, which results in low returns from building reputations.

\section{Concluding Remarks}\label{sec7}
This paper examines a long-run player's incentive to build reputations when his opponents have limited observations of his past actions, and learn primarily from other myopic players' actions. One can also view it as a social learning model in which the object to learn is the endogenous action of a strategic long-run player instead of an exogenous state.

My results relate the \textit{speed of learning} and the patient player's \textit{guaranteed returns from a good reputation} to the myopic players' private
signals. In particular, observing an unboundedly informative signal about the patient player's current-period action guarantees a lower bound on the speed of learning and a high return from  building reputations, while observing his actions in a bounded number of previous periods can result in slow learning and low returns from building reputations.

I conclude by  applying
Gossner (2011)'s arguments to my model, which
reconcile the findings in Theorem \ref{Theorem1}, Theorem \ref{Theorem4}, and the reputation results in Fudenberg and Levine (1989, 1992) and Gossner (2011).
I explain why observational learning leads to an uninformative payoff lower bound despite
player $2$s' actions are informative about player $1$'s past actions.
I also explain the conceptual differences between the low-payoff equilibria in my model with the ones in
reputation models with two equally patient players, such as
Cripps and Thomas (1997) and Chan (2000).

\subsection{Relative Entropy \& Value of Reputations}\label{sub7.1}
Recall that Gossner (2011) establishes the following upper bound on the expected sum of Kullback-Leibler divergence (hereafter, KL divergence) between the distribution over public signals generated by the commitment type and the distribution over public signals generated by players' equilibrium strategies:
  \begin{equation}\label{3.8}
    \mathbb{E}^{(a^*,\sigma_2)}\Big[
    \sum_{t=0}^{\infty}
    d\Big(
    y_t(\cdot|a^*)
    \Big|\Big|
    y_t(\cdot)
    \Big)
      \Big]
    \leq -\log \pi_0,
  \end{equation}
where $\pi_0$ is the prior probability of the commitment type,
$y_t(\cdot)$ is  the distribution over period $t$ public signals according to players' equilibrium strategies, $y_t(\cdot|a^*)$ is the distribution over period $t$ public signals when player $2$ plays her equilibrium strategy and player $1$ plays $a^*$ in every period,
and $d(\cdot\|\cdot)$ is the relative entropy between the two probability distributions. I call
$d\big(
    y_t(\cdot|a^*)
    \big|\big|
    y_t(\cdot)
    \big)$
player $2$'s \textit{one-step-ahead prediction error} in period $t$.

Inequality (\ref{3.8}) applies to my setting once we take $y_t$ to be the distribution of
$(b_{t+1},...,b_{t+K})$. According to Proposition \ref{Prop2}, $(b_{t+1},...,b_{t+K})$ is informative about $a_t$ unless
player $1$'s average payoff in the next $K$ periods is at least
$u_1(a^*,b^*)$.

The difference arises when deriving the \textit{lower bound on player $1$'s discounted average payoff} from inequality (\ref{3.8}).
In the models of Fudenberg and Levine (1989, 1992) and Gossner (2011), if the public signals can statistically identify player $1$'s actions and when player $2$ does not have a strict incentive to play $b^*$,
then
$d\big( y_t(\cdot|a^*)\big\| y_t(\cdot) \big)$ is bounded from below by a strictly positive number.
Therefore, as long as player $1$ imitates the commitment type,
the expected number of periods in which player $2$s' myopic best reply is not $b^*$
is bounded from above. As player $1$ becomes patient, the payoff consequence of this bounded number of periods goes
to $0$. As a result, a patient player $1$ is guaranteed to receive his optimal commitment payoff.

In my model where players' stage-game payoffs are monotone-supermodular, the value of $d\big( y_t(\cdot|a^*)\big\| y_t(\cdot) \big)$ is \textit{strictly positive}
whenever player $2$ does not have a strict incentive to play $b^*$, and player $1$'s average payoff from period $t$ to $t+K$ is less than
    $\frac{K}{K+1} u_1(a^*,b^*)+\frac{1}{K+1} \min_{b \in \mathcal{B}^*} u_1(a^*,b)$.

However, the lower bound on $d\big( y_t(\cdot|a^*)\big\| y_t(\cdot) \big)$ depends on $\delta$, and vanishes to $0$ as
$\delta \rightarrow 1$.
Intuitively, future player $2$s' actions are responsive to player $1$'s past actions in order to provide player $1$ an incentive to play actions other than his myopic best reply (for example, player $1$'s Stackelberg action).
When player $1$ becomes more patient, he puts more weight on his continuation value relative to his stage-game payoff. Therefore, he is willing to sacrifice his stage-game payoff even when his action affects player $2$s' future actions with low probability.
This endogenously reduces the informativeness of player $2$s' actions, lowers the speed of learning,
increases the amount of time required for player $1$ to establish a reputation, which in turn, wipes out player $1$'s returns from building reputations.

In the constructed equilibrium in Section \ref{sub3.3},
player $2$'s one-step-ahead prediction error in the reputation building phase is:
\begin{equation}\label{3.9}
    \log \Big(1+ (1-q^*) (1-\delta)\Big).
\end{equation}
Using the Taylor's expansion, (\ref{3.9})
is of magnitude $1-\delta$ when $\delta$ is close to $1$.
As a result, when player $1$ imitates the commitment type, the expected number of periods with which player $2$'s belief about player $1$'s action being far away from $a^*$ goes to infinity as $\delta \rightarrow 1$. As predicted by Theorem \ref{Theorem1},
the negative payoff consequence of such periods offsets the benefits from building reputations.

In the proofs of Theorem \ref{Theorem4} and 4', the key step is to show that
player $2$'s one-step-ahead prediction error is bounded away from zero in all periods where $b_t=b^*$
occurs with probability bounded away from $1$. Since the lower bound on the one-step-ahead prediction error is uniformly bounded away from zero for all values of $\delta$, the expected number of periods where player $2$ does not play $b^*$ is bounded. This suggests that a patient player $1$ can guarantee his Stackelberg payoff by building a reputation.

\subsection{Reputation Models with Two Equally Patient Players}\label{sub7.2}
Cripps and Thomas (1997) study reputation games
between an informed player and an \textit{equally patient} uninformed player.
They focus on \textit{common interest games}, and assume that
both players can perfectly observe each other's actions in the past.
When the prior probability of commitment type is sufficiently low, they construct a sequential equilibrium in which both players' payoffs are arbitrarily close to their minmax payoffs. In the following example, suppose with small but positive probability, player $1$ is a commitment type that plays $H$ in every period,
\begin{center}
\begin{tabular}{| c | c | c |}
  \hline
  -- & $A$ & $B$ \\
  \hline
  $H$ & $1,1$ & $-\varepsilon,-\varepsilon$ \\
  \hline
  $L$ & $-\varepsilon,-\varepsilon$ & $0,0$ \\
  \hline
\end{tabular}
\end{center}
there \textit{exist} equilibria in which
both players' discounted average payoffs are arbitrarily close to $0$.

In the active learning phase of their constructed equilibrium, player $1$ plays $H$ with probability close to $1$, i.e., learning is slow. However, player $2$ does not play her myopic best reply against $H$.
The reason is: player $2$ fears that
once she plays $A$ while player $1$ reveals rationality (i.e., by playing $L$),
the inefficient outcome $(L,B)$ will be played in all future periods.
Similar to the equilibrium constructed in the proof of Theorem \ref{Theorem1},
the asymptotic play also converges to the Stackelberg outcome although the patient player's discounted average payoff is low.
The above finding is generalized by Chan (2000) to all games except for
(1) games where player $1$ has a strictly dominant action, and (2) games with strictly conflicting interests.

Compared to the reputation failure results of Cripps and Thomas (1997) and Chan (2000) that hinge on  the uninformed player's patience, I show that learning can be arbitrarily slow and reputation effects can fail even when the uninformed players are myopic.

In terms of how players' patience affects the informed player's guaranteed payoff from building reputations,
a general lesson from models with \textit{unbounded records} is that
the informed player's patience helps reputation building, while the uninformed player's patience undermines reputation building. Indeed, when player $2$ becomes more patient, the conditions under which player $1$ can secure his commitment payoff becomes more stringent. For example, the commitment payoff theorem requires no condition on stage-game payoffs in long-run short-run models (e.g., Fudenberg and Levine 1989, 1992), requires the game to have conflicting interest in long-run medium-run models  (e.g., Schmidt 1993), and requires the game to have strictly conflicting interest in long-run long-run models (e.g., Chan 2000, Cripps, Dekel and Pesendorfer 2005).
This contrasts to my model in which the \textit{informed player's patience} is self-defeating since it endogenously lowers the speed of social learning. This causes reputations to fail
even when the uninformed players are myopic.

In addition,
my reputation failure result also applies to games in which Chan (2000)'s folk theorem result fails. This includes games with
strictly conflicting interests, such as:
\begin{center}
\begin{tabular}{| c | c | c |}
  \hline
  -- & $Out$ & $In$ \\
  \hline
  $F$ & $2,0$ & $0,-1$ \\
  \hline
  $A$ & $2,0$ & $1,1$ \\
  \hline
\end{tabular}
\end{center}
When there exists a commitment type that  plays
$F$ in every period,
Cripps, Dekel and Pesendorfer (2005)
show that player $1$ can secure his Stackelberg payoff $2$ in a model with equally patient players.

By contrast, my Theorem \ref{Theorem1} suggests that when player $2$s are myopic, have unlimited observation of other player $2$s' actions, but have bounded observations about player $1$'s past actions,
there exist equilibria in which  player $1$'s payoff equals his minmax payoff $1$.\footnote{The example in Cripps, Dekel and Pesendorfer (2005) violates Assumption 1 since both $F$ and $A$ are player $1$'s stage-game best replies against player $2$'s action $Out$. Nevertheless, one can use the same argument in the proof of Theorem \ref{Theorem1} to find equilibria in which player $1$'s discounted average payoff is $1$. The details are available upon request.}

\newpage
\appendix
\section{Proofs in Section 3}\label{secA}
\subsection{Proof of Lemma 3.2 and Theorem 2}\label{subA.1}
I show Lemma \ref{L3.2}. For Theorem \ref{Theorem2}, notice that $(\underline{a},\underline{b})$ is the unique Nash Equilibrium of the stage game, one can obtain a constructive proof to Theorem \ref{Theorem2} by replacing $(a'',b'')$ with $(\underline{a},\underline{b})$.

First, consider the case in which $u_1(a'',b'')=u_1(a^*,b^*)$. Assumption \ref{Ass1} implies that $(a'',b'')=(a^*,b^*)$, and the discounted average welfare of player $2$ equals $u_2(a'',b'')$ in a pooling equilibrium in which $(a^*,b^*)$ is played at every on-path history.

Next, consider the nontrivial case in which $u_1(a'',b'')<u_1(a^*,b^*)$. Consider a similar equilibrium as the proof of Theorem \ref{Theorem1} except for two differences: first, replace $(a',b')$ with $(a'',b'')$, and second, calibrate the probability with which strategic-type player $1$ playing $a^*$ in the reputation-building phase such that the \textit{unconditional probability} of $a^*$ equals $q^*$ at every reputation-building phase history. This is feasible given that player $1$ knows player $2$'s belief about his type at every on-path history.
Let $V_2$ be player $2$'s discounted average welfare in the reputation-building phase (with discount factor $\delta_s$):
\begin{equation}\label{A.1}
    V_2=(1-\delta_s) \Big\{
    q^* u_2(a^*,b'') +(1-q^*) u_2(a'',b'')
    \Big\}
    +\delta_s \Big\{
    (1-q^*) V_2 +q^* (1-r) V_2 +q^* r u_2(a^*,b^*)
    \Big\},
\end{equation}
where
\begin{equation}\label{A.2}
    r \equiv \frac{1-\delta}{\delta} \frac{u_1(a'',b'')-u_1(a^*,b'')}{u_1(a^*,b^*)-u_1(a'',b'')}
\end{equation}
is the transition probability between phases that makes strategic-type player $1$ indifferent between playing $a^*$ and $a''$ in the reputation-building phase. Equation (\ref{A.1}) yields:
\begin{equation}\label{A.3}
    V_2 \Big\{
    1-\delta_s (1-q^*) -\delta_s q^* (1-r)
    \Big\}
    =\delta_s q^* r u_2(a^*,b^*)
    +(1-\delta_s) \Big(
    q^* u_2(a^*,b'')+(1-q^*) u_2(a'',b'')
    \Big).
\end{equation}
Since $q^*$ can be arbitrarily low, when $q^*$ converges to $0$, (\ref{A.3}) reduces to $V_2(1-\delta_s)=u_2(a'',b'') (1-\delta_s)$, which implies that $V_2$ is arbitrarily close to $u_2(a'',b'')$ as $q^*$ and $\overline{\pi}_0$ become arbitrarily small.

\subsection{Proof of Proposition 3}\label{subA.2}
\paragraph{Lower Bound:} Consider the strategic-type's payoff when he deviates and imitates the commitment type.
For every $\beta \in \Delta (B)$ and $a \prec a^*$, MSM implies that $u_1(a^*,\beta)< u_1(a,\beta)$. Let $h^t \equiv \{a_{s},b_s\}_{s=0}^{t-1}$.
For every $t \in \mathbb{N}$ and $a \in A$, let $E_t(a,b^t)$ be the event that (1) player $1$ plays $a$ in period $t$, (2) player $1$ has played $a^*$ from period $t-K+1$ to $t-1$, (3) player $1$ plays according to $\sigma_1$ starting from period $t+1$, and (4) the history of player $2$'s actions until period $t$ is $b^t \equiv (b_0,...,b_{t-1})$.
For every $\tau \in \{1,2,...,K\}$ and $h^t \equiv (a^*,...,a^*,b^t)$,
let $y_t^{\tau}(\cdot|a,h^t) \in \Delta (B)$ be the distribution of $b_{t+\tau}$
conditional on event $E_t(a,b^t)$, and
let $y_t (\cdot|a,h^t) \in \Delta (B^K)$ be the distribution of
$(b_{t+1},...,b_{t+K})$ conditional on event $E_t(a,b^t)$.
Let $\overline{u}_1$ and $\underline{u}_1$ be player $1$'s highest and lowest feasible stage-game payoffs, respectively, and let
$||\cdot||$ be the total variation norm. If
\begin{equation}\label{A.4}
    ||y_t(\cdot|a^*,h^t)-y_t(\cdot|a,h^t)|| \leq \frac{1-\delta}{2\delta (\overline{u}_1-\underline{u}_1)} \Big(
    u_1(a,\beta)- u_1(a^*,\beta)
    \Big),
\end{equation}
then the strategic-type player $1$ has a strict incentive to play $a$ instead of $a^*$ at $h^t$ as well as at every history $h_*^t$ that differs from $h^t$ only in terms of $\{a_0,...,a_{t-K}\}$. The latter is because the distribution of $\{b_{t+1},...,b_{t+K}\}$ does not depend on
$\{a_0,...,a_{t-K}\}$ since they cannot be observed by players $2_{t+1}$ to $2_{t+K}$.
Let
\begin{equation}\label{A.5}
    \Delta \equiv \frac{1-\delta}{2K\delta (\overline{u}_1-\underline{u}_1)} \min_{\beta \in \Delta (B), a \prec a^*} \Big\{ u_1(a,\beta)- u_1(a^*,\beta) \Big\}.
\end{equation}
Since
\begin{equation*}
    ||y_t^{\tau}(\cdot|a^*,h^t)-y_t^{\tau}(\cdot|a,h^t)||  \leq ||y_t(\cdot|a^*,h^t)-y_t(\cdot|a,h^t)|| \leq \sum_{s=1}^{K} ||y_t^{s}(\cdot|a^*,h^t)-y_t^{s}(\cdot|a,h^t)||,
\end{equation*}
inequality (\ref{A.4}) holds when
$||y_t^{\tau}(\cdot|a^*,h^t)-y_t^{\tau}(\cdot|a,h^t)|| \leq \Delta$ for every $\tau \in \{1,2,...,K\}$.
Let $\mathcal{H}^{(a^*,\sigma_2)}$ be the set of public histories that occur with positive probability
when player $1$ plays $a^*$ in every period and player $2$ plays $\sigma_2$. I partition $\mathcal{H}^{(a^*,\sigma_2)}$ into two subsets, $\mathcal{H}^{(a^*,\sigma_2)}_0$ and $\mathcal{H}^{(a^*,\sigma_2)}_1$:
\begin{enumerate}
  \item[1.] If there exists $a \prec a^*$ such that $||y_t^{\tau}(\cdot|a^*,h^t)-y_t^{\tau}(\cdot|a',h^t)|| \leq \Delta$ for every $\tau$, then
  $h^t \in \mathcal{H}_0^{(a^*,\sigma_2)}$.
  \item[2.] If for every $a \prec a^*$, there exists $\tau$ such that
   $||y_t^{\tau}(\cdot|a^*,h^t)-y_t^{\tau}(\cdot|a',h^t)|| \geq \Delta$, then $h^t \in \mathcal{H}_1^{(a^*,\sigma_2)}$.
\end{enumerate}
For every $h^t \in \mathcal{H}_0^{(a^*,\sigma_2)}$, the strategic type has a strict incentive not to play $a^*$ at $h^t$, which means that player $2$ attaches probability $1$ to the commitment type after observing $a^*$ at $h^t$. For every $\tau \in \{1,2,...,K\}$, every on-path history $h^{t+\tau} \succ h^t$ such that $a^*$ has been played from period $t$ to $t+\tau-1$, player $2$ has a strict incentive to play $b^*$ at $h^{t+\tau}$.
This in addition to the fact that player $2$ plays an action at least as large as $b'$ at every on-path history implies that for every $h^t \in \mathcal{H}_0^{(a^*,\sigma_2)}$, we have:
\begin{equation}\label{A.6}
    \frac{1}{K+1} \mathbb{E}^{(a^*,\sigma_2)} \Big[
    \sum_{s=t}^{t+K} u_1(a_s,b_s)
   \Big| h^t \Big] \geq \frac{K}{K+1} u_1(a^*,b^*)+\frac{1}{K+1} u_1(a^*,b').
\end{equation}

For every $h^t\in \mathcal{H}_1^{(a^*,\sigma_2)}$, there exists a constant $\gamma >0$ such that for every $\alpha \in \Delta (A)$ such that $b \prec b^*$ best replies against $\alpha$, we have $||y_t(\cdot|a^*,h^t)-y_t(\cdot|\alpha,h^t)|| \geq \gamma \Delta$. The Pinsker's inequality implies that
\begin{equation}\label{A.7}
    d\Big( y_t(\cdot|\alpha,h^t) \Big\| y_t(\cdot|a^*,h^t) \Big) \geq 2\gamma^2 \Delta^2.
\end{equation}
for every such $\alpha \in \Delta (A)$.
For every equilibrium $(\sigma_1,\sigma_2)$ and every $\tau \in \{0,1,...,K\}$,
\begin{equation}\label{A.8}
    \mathbb{E}^{(a^*,\sigma_2)} \Big[
    \sum_{s=0}^{\infty} d \Big(
     y_{s(K+1)+\tau}(\cdot|\sigma_1(h^{s(K+1)+\tau}),h^{s(K+1)+\tau})
    \Big\| y_{s(K+1)+\tau}(\cdot|a^*,h^{s(K+1)+\tau})
    \Big)
    \Big] \leq -\log \pi_0.
\end{equation}
Inequalities (\ref{A.7}) and (\ref{A.8}) together imply that:
\begin{equation}\label{A.9}
    \mathbb{E}^{(a^*,\sigma_2)} \Big[
    \sum_{s=0}^{\infty} \mathbf{1}\Big\{ h^{s(K+1)+\tau} \in \mathcal{H}_1^{(a^*,\sigma_2)} \textrm{ and }
    \sigma_2(h^{s(K+1)+\tau}) \prec b^*
    \Big\}
    \Big] \leq -\frac{\log \pi_0}{2\gamma^2 \Delta^2}
\end{equation}
I derive a lower bound for $\liminf_{t \rightarrow \infty}\frac{1}{t} \mathbb{E}^{(a^*,\sigma_2)} \Big[\sum_{s=0}^{t-1} u_1(a_s,b_s)\Big]$ using inequalities (\ref{A.6}) and (\ref{A.9}).
For every $\tau \in \{0,1,...,K\}$, let
 \begin{equation*}
\mathcal{H}_0^{\tau}   \equiv  \Big\{
    h^t \Big| \exists h^{s(K+1)+\tau} \in \mathcal{H}_0^{(a^*,\sigma_2)} \textrm{ such that } h^t \succeq h^{s(K+1)+\tau} \textrm{ and } t \in [s(K+1),s(K+1)+K]
    \Big\},
\end{equation*}
let
\begin{equation*}
    \mathcal{H}_1^{\tau} \equiv \Big\{
    h^{s(K+1)+\tau} \in \mathcal{H}_1^{(a^*,\sigma_2)} \Big| s \in \mathbb{N}
    \Big\},
 \end{equation*}
and let  $\mathcal{H}^{\tau} \equiv \mathcal{H}_0^{\tau}  \cup \mathcal{H}_1^{\tau}$. By definition,
 $\mathcal{H}^{(a^*,\sigma_2)} = \bigcup_{\tau=0}^{K}    \mathcal{H}^{\tau}$. An important observation is that for every
 $\tau,\tau' \in \{0,1,...,K\}$ with
 $\tau \neq \tau'$,
\begin{equation}\label{A.10}
 \mathcal{H}_1^{\tau} \cap \mathcal{H}_1^{\tau'}=\{\varnothing\} \textrm{ and }   \mathcal{H}_0^{\tau} \cap \mathcal{H}_0^{\tau'}=\{\varnothing\}.
\end{equation}
The former is straightforward.
For the latter, suppose toward a contradiction that $h^t \in \mathcal{H}_0^{\tau} \cap \mathcal{H}_0^{\tau'}$ with $\tau < \tau'$,
there exist $h^s$ and $h^{s+\tau'-\tau}$ such that
$h^{t} \succsim h^{s+\tau'-\tau}  \succ h^s$, $h^{s} \in \mathcal{H}_0^{\tau}$,
$t-s \leq K$,
and $s-\tau$ is divisible by $K+1$. On one hand $h^s \in \mathcal{H}_0^{\tau}$ and
$\tau'-\tau \leq K$
implies that $\sigma_1(h^{s+\tau'-\tau}) = a^*$.
On the other hand $h^{s+1} \in \mathcal{H}_0^{\tau'}$ implies that $\sigma_1(h^{s+\tau'-\tau}) \neq a^*$. This leads to a contradiction.

For every $\tau \in \{0,1,...,K\}$, inequality (\ref{A.6}) implies that player $1$'s expected average payoff at histories in $\mathcal{H}_0^{\tau}$ is at least the RHS of (\ref{as}). Since $\mathcal{H}_0^{\tau} \cap \mathcal{H}_0^{\tau'}=\{\varnothing\}$ for every $\tau \neq \tau'$, it implies that
player $1$'s expected average payoff at histories in $\bigcup_{\tau=0}^{K} \mathcal{H}_0^{\tau}$ is at least the RHS of (\ref{as}). For every $\tau \in \{0,1,...,K\}$,  (\ref{A.9}) implies that player $1$'s expected average payoff at histories belonging to set
    $\mathcal{H}_1^{\tau} \Big\backslash
\bigcup_{s=0}^{K} \mathcal{H}_0^{s}$
is at least $u_1(a^*,b^*)$. Since $\mathcal{H}_1^{\tau} \cap \mathcal{H}_1^{\tau'}=\{\varnothing\}$ for every $\tau \neq \tau'$, it implies that
 player $1$'s expected average payoff at histories in
 $\bigcup_{s=0}^{K} \mathcal{H}_1^{s} \Big\backslash
\bigcup_{s=0}^{K} \mathcal{H}_0^{s}$
is at least $u_1(a^*,b^*)$. The two parts imply that
\begin{equation*}
    \liminf_{t \rightarrow \infty}\frac{1}{t} \mathbb{E}^{(a^*,\sigma_2)} \Big[
    \sum_{s=0}^{t-1} u_1(a_s,b_s)
    \Big] \geq \frac{K}{K+1} u_1(a^*,b^*)+\frac{1}{K+1}  u_1(a^*,b').
\end{equation*}

\paragraph{Tightness of Lower Bound:}  When payoffs are monotone-supermodular, $(a',b')$ is the unique stage-game Nash equilibrium. Let $\overline{\pi}_0$ be the largest real number in $(0,1)$ such that $b'$ best replies against the mixed action $\overline{\pi}_0 \circ a^* +(1-\overline{\pi}_0) \circ a'$. Consider the following construction when $\pi_0 \in (0,\overline{\pi}_0)$.
At every on-path history (the set of on-path histories can be derived recursively),
\begin{itemize}
  \item if $t$ is divisible by $K+1$, then
player $1$
plays $a'$ and player $2$ plays $b'$ in period $t$;
  \item if $t$ is not divisible by $K+1$, then
 player $1$
plays $a^*$ and player $2$ plays $b^*$ in period $t$.
\end{itemize}
I partition off-path histories into three subsets. For every period $t$ public history such that:
\begin{itemize}
  \item (1) there exists no $r<t$, such that $b_r \neq b^*$ and $r$ is not divisible by $K+1$; (2) there exists no $s<t$ such that
$b_s \neq b'$ and $s$ is divisible by $K+1$; (3) player $2$ observes player $1$ playing an off-path action in period $t-1$,
then players play $(a^*,b^*)$
 if $t$ is divisible by $K+1$, and play $(a',b')$ if $t$ is not divisible by $K+1$.
  \item (1) there exists no $r<t$, such that $b_r \neq b^*$ and $r$ is not divisible by $K+1$, but (2) there exists $s<t$ such that $b_s \neq b'$ and $s$ is divisible by $K+1$. If $t-1$ is divisible by $K+1$, $b_{t-1}=b^*$ while $a_{t-1} \neq a^*$, then play $(a',b')$ in period $t$. If $t-1$ is divisible by $K+1$, $b_{t-1}=b^*$ while $a_{t-1}=a^*$, then play $(a^*,b^*)$ in period $t$ if and only if $\xi_t >1/2$ and play $(a',b')$ in period $t$ otherwise. If $t-1$ is not divisible by $K+1$, or $b_{t-1} \neq b^*$, then play $(a^*,b^*)$ if $t$ is not divisible by $K+1$ and play $(a',b')$ if $t$ is divisible by $K+1$.
  \item there exists $r<t$, such that $b_r \neq b^*$ and $r$ is not divisible by $K+1$, then play $(a',b')$ in all subsequent periods.
\end{itemize}
Player $1$'s time-average payoff from playing $a^*$ in every period equals the RHS of (\ref{as}). I verify players' incentive constraints.
Since $b^*$ best replies to $a^*$ and $b'$ best replies to $a'$, player $2$'s incentive constraints are satisfied. In what follows, I verify player $1$'s incentives. At every on-path history $h^t$,
\begin{itemize}
  \item If $t+1$ not divisible by $K+1$ and $t$ is not divisible by $K+1$, then the strategic type's continuation value from playing $a^*$ in period $t$ is at least
      \begin{equation}\label{A.11}
           V \equiv   \frac{u_1(a',b')+\delta u_1(a^*,b^*)+\delta^2 u_1(a^*,b^*)+...+\delta^K u_1(a^*,b^*)}{1+\delta+...+\delta^K},
      \end{equation}
      while his continuation value from playing any other action is $u_1(a',b')$. This verifies his incentive to play $a^*$ when $\delta$ is above some cutoff.
  \item If $t+1$ not divisible by $K+1$ and $t$ is divisible by $K+1$, then the strategic type's continuation values from playing $a^*$ and $a'$ are the same, equal $V$, while his continuation value from playing other actions is $u_1(a',b')$. He has a strict incentive to play $a'$ since $a'$ best replies to $b'$.
  \item If $t+1$ is divisible by $K+1$, then the strategic type's continuation value from playing $a^*$ in period $t$ is at least $V$. If he deviates and plays $a_t$, then consider his incentive in period $t+1$ at off-path history $(h^t,a_t,b_t=b^*)$.

      Since player $2$ plays $b^*$ in period $t+1$ after observing player $1$'s deviation in period $t$, player $1$'s continuation value from playing $a^*$ in period $t+1$ is at least $\frac{1}{2}V+\frac{1}{2} u_1(a',b')$. This is because player $2$ will play $b^*$ with probability $1/2$ in period $t+2$, after which player $1$ will be forgiven for his deviation. Player $1$'s continuation value from playing actions other than $a^*$ in period $t+1$ is $u_1(a',b')$. Therefore, he has a strict incentive to play $a^*$ in period $t+1$ following his deviation in period $t$, and his continuation value in period $t$ when he deviates is strictly lower than $V$.
\end{itemize}

\subsection{Sufficient Conditions for Attaining Minmax Payoff}\label{subA.3}
Focusing on games in which $\underline{v}_1^{min}<\underline{v}_1$, I identify
a sufficient condition under which player $1$'s lowest equilibrium payoff in the reputation game coincides with his minmax payoff. Recall the definition of $\mathcal{B}^*$ in (3.2) and minmax payoff in (3.3).
Action $\beta \in \mathcal{B}^*$ is
player $2$'s \textit{minmax action} if and only if player $1$'s payoff from best replying against $\beta$ equals
$\underline{v}_1^{min}$.
\begin{Condition}
There exists a minmax action $\beta \in \mathcal{B}^*$ such that:
\begin{itemize}
  \item[1.] $b^* \notin \textrm{supp}(\beta)$,
  \item[2.] there exists $\alpha \in \Delta (A)$ with $a^* \in \textrm{supp}(\alpha)$ such that
  $\beta$ best replies against $\alpha$,
  \item[3.] $u_1(\alpha,\beta) \geq u_1(a^*,\beta)$.
\end{itemize}
\end{Condition}
Condition 3 requires that first, there exists a minmax action that excludes the Stackelberg best reply $b^*$
in its support. Second, $\beta$ best replies against a (potentially mixed) action of player $1$'s that includes the Stackelberg action in its support. Third, playing the Stackelberg action against $\beta$ yields player $1$ a weakly lower payoff compared to playing $\alpha$.
The first and third part of this condition is satisfied when the Stackelberg action is costly for player $1$ regardless of player $2$'s action, and player $2$s' Stackelberg best reply is always beneficial to player $1$.
The second part of condition 3 is satisfied for generic $(u_1,u_2)$, since it only requires $\beta$ to be a strict best reply against some $\alpha \in \Delta (A)$.

I provide an example in which
$\underline{v}_1^{min}< \underline{v}_1$ and players' stage-game payoffs
satisfy Condition 3 and Assumptions 1 and 2:
\begin{center}
\begin{tabular}{| c | c | c | c |}
  \hline
  -- & $L$ & $C$ & $R$\\
  \hline
  $U$ & $1,1$ & $0,0$ & $-2,0$ \\
  \hline
  $M$ & $2,0$ & $0,0$ & $-1,1$ \\
  \hline
  $B$ & $0,0$ & $1/2,1/2$ & $0,0$ \\
  \hline
\end{tabular}
\end{center}
Action $R$ belongs to $\mathcal{B}^*$ since $R$ is a strict best reply against $M$, and therefore, best replies against any mixed action in which the probability of $M$ is close to $1$.
Player $1$'s worst stage-game Nash equilibrium payoff is $1/2$, his pure Stackelberg payoff is $1$, and his minmax payoff is $0$.
When player $2$ plays her minmax action $R$, player $1$'s payoff from playing $M$ is strictly greater than his payoff from playing $U$.
The following theorem extends Theorem \ref{Theorem1}, with proof in Appendix \ref{subA.3}:
\begin{Theorem1}
When players' stage-game payoffs satisfy Assumptions \ref{Ass1} and \ref{Ass2} and Condition 3.
For every $K \in \mathbb{N}$, there exists $\overline{\pi}_0 \in (0,1)$,
such that for every $\pi_0 \in (0,\overline{\pi}_0)$ and $\delta \geq \underline{\delta}$,
there exists $(\sigma_1^{\delta},\sigma_2^{\delta}) \in \textrm{SE}(\delta,\pi_0,K)$, such that:
\begin{equation*}
     \mathbb{E}_1^{(\sigma_1^{\delta},\sigma_2^{\delta})} \Big[   \sum_{t=0}^{\infty} (1-\delta) \delta^t u_1(a_t,b_t) \Big] = \underline{v}_1^{min}.
\end{equation*}
\end{Theorem1}
\begin{proof}
When $u_1(a^*,b^*)=\underline{v}_1^{min}$, $(a^*,b^*)$ is the unique pure strategy Nash Equilibrium of the stage game. An equilibrium that attains payoff $\underline{v}_1^{min}$ is that both players play $(a^*,b^*)$ at every history.

Next, I focus on the interesting case in which $u_1(a^*,b^*)>\underline{v}_1^{min}$.
Recall the definitions of $\alpha$ and $\beta$ in Condition 3.
Let $q^*$ be the probability that $\alpha$ attaches to $a^*$. Let $\overline{\pi}_0 \in (0,1)$ be small enough such that:
\begin{equation}
    \frac{\overline{\pi}_0}{1-\overline{\pi}_0} \leq \Big(\frac{q^*}{2-q^*}\Big)^{K+1}.
\end{equation}
For every $\pi_0 \in (0,\overline{\pi}_0)$ and $\delta$ large enough, I construct the following \textit{three-phase equilibrium} in which strategic-type player $1$ attains payoff $\underline{v}_1$. The current phase of play depends only on the history of player $2$s' actions, which are commonly observed by both players. In period $t$,
\begin{enumerate}
  \item play is in the  \textit{reputation building phase}  if there exists no $s<t$ such that $b_s=b^*$;
  \item play is in the \textit{reputation maintenance phase} if (1) there exists $s<t$ such that $b_s=b^*$, and (2) there exists no $\tau \in \{s^*+1,...,t-1\}$ such that $b_{\tau} \neq b^*$, where $s^*$ is the smallest $s$ with $b_s=b^*$;
  \item play is in the \textit{punishment phase} if (1) there exists $s<t$ such that $b_s=b^*$, and (2) there exists $\tau \in \{s^*+1,...,t-1\}$ such that $b_{\tau} \neq b^*$, where $s^*$ is the smallest $s$ such that $b_s=b^*$.
\end{enumerate}
Play starts from the reputation building phase, and gradually reaches the reputation maintenance phase. Play reaches the punishment phase only at off-path histories.

\paragraph{Equilibrium Strategies:} At every history $h^t$ of the \textit{reputation-building phase},
\begin{itemize}
\item If $t=0$, then player $2$ plays $\beta$ and strategic type player $1$ plays $\alpha_0$ that satisfies:
\begin{equation}\label{A.11}
 (1-\pi_0)   \alpha_0 +\pi_0 a^*=\alpha,
\end{equation}
Such $\alpha_0$ exists given that $\pi_0 <q^*/2$, which have been assumed in (A.11).
\item If $t  \geq 1$ and $\xi_t  > r(a_{t-1})$, then player $2$ plays $\beta$ and strategic type player $1$ plays $\alpha(h^t)$ that satisfies:
\begin{equation}\label{A.12}
 (1-\pi(h^t))   \alpha(h^t) +\pi(h^t) a^*=\alpha,
\end{equation}
where $\pi(h^t)$ is the probability player $2$'s belief at $h^t$ attaches to the commitment type.
Such $\alpha(h^t)$ exists given that $\pi(h^t) <q^*/2$, which I will verify by induction by the end of this proof.
\item If $t  \geq 1$ and $\xi_t  \leq r(a_{t-1})$, then player $2$ plays $b^*$ and strategic type player $1$ plays $a^*$.
\end{itemize}
The transition probability to the reputation maintenance phase is a function of player $1$'s action in the previous period $r: A \rightarrow [0,1]$, which is pinned down by the following equation:
\begin{equation}\label{A.13}
    (1-\delta) u_1(a,\beta)+\delta r(a) u_1(a^*,b^*)
    +\delta (1-r(a)) \max_{a \in A} u_1(a ,\beta)
    =\max_{a \in A} u_1(a ,\beta).
\end{equation}
Given that $u_1(a^*,b^*)>\underline{v}_1^{min}$, one can verify that (1) for every $a,a' \in A$, $r(a) \geq r(a')$ if and only if
$u_1(a,\beta) \leq u_1(a',\beta)$, and
(2) for every $a \in A$,
$r(a)$ is strictly between $0$ and $1$
when $\delta$ is large enough, and
as $\delta \rightarrow 1$, the value of $r(a)$ converges to $0$.

At every history of the \textit{reputation maintenance phase}, If $a_{t-1}=a^*$, then player $1$ plays $a^*$ and player $2$ plays $b^*$.
If $a_{t-1}\neq a^*$, then player $2$ plays $\beta$ and strategic-type player $1$ plays $\alpha$, after which the continuation play enters the \textit{punishment phase}. Player $1$'s continuation value in the last period of the reputation maintenance phase equals $\underline{v}_1^{min}$.\footnote{This can be achieved for example, by repeating the strategies in the reputation-building phase.}

\paragraph{Incentive Constraints:} I verify players' incentives constraints. To start with, when $\delta$ is large enough,
\begin{equation*}
u_1(a^*,b^*) \geq    (1-\delta) \max_{a \in A} u_1(a,b^*) +\delta \underline{v}_1^{min}.
\end{equation*}
This implies that player $1$ has an incentive to play $a^*$ in the reputation maintenance phase when $\delta$ is large enough. Next, given that player $1$'s continuation value is $u_1(a^*,b^*)$ in the reputation maintenance phase and is $\underline{v}_1^{min}$ in the reputation building phase,
(\ref{A.13}) implies that
player $1$ is indifferent between all of his actions in the reputation building phase.

Next, I verify player $2$'s incentive to play $\beta$ at the reputation building phase by showing that
(1) player $1$'s reputation at every history of the reputation building phase is less than $q^*/2$, and (2) $\alpha(h^t)$ defined in (\ref{A.12}) attaches probability at least $q^*/2$ to action $a^*$. According to (A.11), player $1$'s reputation in period $0$ is less than $q^*/2$ and according to (\ref{A.11}), $\alpha(h^0)$ attaches probability more than $q^*/2$ to action $a^*$.

Suppose the conclusion holds for all reputation-building phase histories $h^s$ with $s < t$. If $h^t$ belongs to the reputation building phase,
then $b^*$ has never been played before  given that player $2$ plays $\beta$ in every period of the past given that $b^* \notin \textrm{supp}(\beta)$.
Player $2$'s posterior belief about player $1$ being committed is $0$ unless
$a^*$ was played in the last $\min \{K,t\}$ periods, in which case her posterior belief
$\pi_t$ satisfies the following equation:
\begin{equation}\label{A.14}
\frac{\pi_t}{1-\pi_t} \Big/ \frac{\pi_0}{1-\pi_0}
    = \frac{\Pr^{(a^*,\sigma_2^{\delta})}(a^*,...,a^*)}{\Pr^{(\sigma_1^{\delta},\sigma_2^{\delta})}(a^*,...,a^*)} \cdot
    \frac{\Pr^{(a_1^*,\sigma_2^{\delta})}(b_0,...,b_{t-1},\xi_t | a^*,...,a^*)}{\Pr^{(\sigma_1^{\delta},\sigma_2^{\delta})}(b_0,...,b_{t-1}, \xi_t | a^*,...,a^*)},
\end{equation}
where $\Pr^{(\sigma_1^{\delta},\sigma_2^{\delta})} (\cdot)$ is the probability measure induced by
strategy profile $(\sigma_1^{\delta},\sigma_2^{\delta})$,
and $\Pr^{(a^*,\sigma_2^{\delta})} (\cdot)$ is the probability measure when player $1$ plays $a^*$ in every period and player $2$s' strategy is $\sigma_2^{\delta}$.

Since the strategic type plays $a^*$ with probability $q^*/2$ in the reputation building phase, we have:
\begin{equation}\label{A.15}
    \frac{\Pr^{(\sigma_1^{\delta},\sigma_2^{\delta})}(a^*,...,a^*|\omega^c)}{\Pr^{(\sigma_1^{\delta},\sigma_2^{\delta})}(a^*,...,a^*|\omega^s)} \leq \Big(\frac{q^*}{2-q^*}\Big)^{-\min\{t,K\}} \leq \Big(\frac{q^*}{2-q^*}\Big)^K.
\end{equation}
Since $u_1(\alpha,\beta) \geq u_1(a^*,\beta)$,
we have:
\begin{equation*}
    r(a^*) \geq \mathbb{E}[r(\widetilde{a})|\alpha].
\end{equation*}
This implies that for every $(b_0,...,b_{t-1})$ with $b_s \in \textrm{supp}(\beta)$ for every $s$, we have:
\begin{equation}\label{A.16}
    \frac{\Pr^{(\sigma_1^{\delta},\sigma_2^{\delta})}(b_0,...,b_{t-1}, \xi_t | a^*,...,a^*,\omega^c)}{\Pr^{(\sigma_1^{\delta},\sigma_2^{\delta})}(b_0,...,b_{t-1}, \xi_t | a^*,...,a^*,\omega^s)} \leq 1.
\end{equation}
Since $\frac{\pi_0}{1-\pi_0} \leq \frac{\overline{\pi}_0}{1-\overline{\pi}_0}=\Big(\frac{q^*}{2-q^*}\Big)^{K+1}$,
(\ref{A.14}), (\ref{A.15}) and (\ref{A.16}) together imply that
$\pi_t \leq \frac{q^*}{2}$. As a result, the probability with which player $2$ believes player $1$ playing $a^*$ is at most $\frac{q^*}{2} +(1-\frac{q^*}{2}) \frac{q^*}{2} \leq q^*$. This completes the verification of player $2$s' incentives.
\end{proof}

\section{Proof of Theorem 3}\label{secB}
In Appendix \ref{subB.1}, I relax the monotone-supermodularity condition on payoffs, and identify weaker sufficient conditions for my result.
My sufficient conditions are satisfied not only in games with monotone-supermodular payoffs, but also in coordination games, common interest games, and many other games studied in the reputation literature. I state a Theorem 3' that generalizes Theorem \ref{Theorem3} to a larger class of payoff structures.
The proof of Theorem 3' is in Appendices \ref{subB.2} and \ref{subB.3}.

\subsection{Relax Monotone-Supermodularity}\label{subB.1}
Recall that $a^*$ is player $1$'s pure Stackelberg action and $b^*$ is player $2$'s unique best reply against $a^*$. Let
$a''$ be the unique element in $\textrm{BR}_1(b^*)$.
If $a'' \neq a^*$, then $b^* \notin \textrm{BR}_2(a'')$. This is because $u_1(a'',b^*)>u_1(a^*,b^*)$, and $b^*$ best replying against $a''$ implies that committing to $a''$ yields player $1$ a strictly higher payoff, contradicting the definition of $a^*$.
Let $p^*$ be the largest $p \in [0,1]$ such that:
\begin{equation*}
\{b^*\} \neq \textrm{BR}_2 \big(     p a_1^* +(1-p) a'' \big).
\end{equation*}
This suggests the existence of $b'' \neq b^*$ such that $b'' \in \textrm{BR}_2 \big(     p^* a_1^* +(1-p^*) a'' \big)$. My first requirement is:
\begin{equation}\label{B.1}
    u_1(a'',b'') \geq u_1(a^*,b'').
\end{equation}
Recall that $(a',b')$ is player $1$'s worst stage-game Nash equilibrium, which is strict under Assumption \ref{Ass1}. When $a' \neq a^*$,
$b' \notin \textrm{BR}_2(a^*)$. This is because otherwise, $b' \in \textrm{BR}_2(a^*)$, and given that $b^* \in \textrm{BR}_2(a^*)$, Assumption \ref{Ass1} implies that $b^*=b'$. As a result, $(a',b^*)$ is a stage-game Nash Equilibrium, and Assumption \ref{Ass1} implies that $u_1(a',b^*)>u_1(a^*,b^*)$, i.e., player $1$ obtains strictly higher payoff by committing to $a'$ compared to committing to $a^*$. This contradicts the presumption that $a^*$ is player $1$'s pure Stackelberg action.
Let $q^*$ be the smallest $q \in [0,1]$ such that:
 \begin{equation*}
\{b'\} \neq \textrm{BR}_2 \big(     q a_1^* +(1-q) a' \big).
\end{equation*}
This suggests the existence of $b^{**} \neq b'$ such that $b^{**} \in \textrm{BR}_2 \big(     q^* a_1^* +(1-q^*) a' \big)$. My second and third requirements are:
\begin{equation}\label{B.2}
    u_1(a',b^{**}) \geq u_1(a^*,b^{**}),
\end{equation}and
\begin{equation}\label{B.3}
    u_1(a'',b^*)-u_1(a^*,b^*) \geq u_1(a',b^{**})-u_1(a^*,b^{**}).
\end{equation}
I introduce two classes of games, starting from games with strict lack-of-commitment.
\begin{Definition}
$(u_1,u_2)$ is a game with strict lack-of-commitment if $(a^*,b^*)$ is not a Nash Equilibrium, and players' payoffs
satisfy (\ref{B.1}), (\ref{B.2}), and (\ref{B.3}).
\end{Definition}
Under Assumption \ref{Ass1}, the requirement that $(a^*,b^*)$ is not a Nash Equilibrium implies that $a^* \neq a'$ and $a^* \neq a''$, i.e., the Stackelberg action is \textit{strictly suboptimal} for player $1$ both when player $2$ plays her Stackelberg best reply $b^*$ and when she plays her Nash equilibrium action $b'$. Actions $a'$ and $a''$ are player $1$'s best replies to these player $2$'s actions, where $a'$ and $a''$ can potentially coincide.
Inequality (\ref{B.1}) requires that player $1$ benefits from deviating to $a''$ not only when player $2$ plays $b^*$, but also when player $2$ plays $b''$, her best reply when she faces uncertainty about whether player $1$ will play $a^*$ or $a''$.
Inequality (\ref{B.2}) requires that player $1$ benefits from deviating to $a'$ not only when player $2$ plays $b'$, but also when player $2$ plays $b^{**}$, her best reply when she faces uncertainty about whether player $1$ will play $a^*$ or $a'$.
Inequality (\ref{B.3}) requires that player $1$'s benefit from cheating is larger when player $2$ plays her Stackelberg best reply.
Lemma \ref{LB.1} shows that games with strict lack-of-commitment contains games with monotone-supermodular payoffs.
\begin{Lemma}\label{LB.1}
When a game's stage-game payoffs are monotone-supermodular, then it is a game with strict lack-of-commitment.
\end{Lemma}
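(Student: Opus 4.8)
The plan is to use the fact that monotone-supermodularity forces player $1$'s stage-game best reply to be the lowest action $\underline{a}\equiv\min A$ regardless of player $2$'s choice; this collapses $a'$, $a''$, and the unique pure Nash equilibrium onto $\underline{a}$, after which each of (\ref{B.1})--(\ref{B.3}) follows from strict monotonicity of $u_1$ in $a$ together with its non-increasing differences, once player $2$'s best replies have been ordered by a Topkis argument. Concretely, first I would observe that since $u_1(a,b)$ is strictly decreasing in $a$, $\textrm{BR}_1(b)=\{\underline{a}\}$ for every $b\in B$; hence $a''=\underline{a}$, every pure Nash equilibrium has player $1$ playing $\underline{a}$, and so the unique pure Nash equilibrium is $(\underline{a},\underline{b})$ with $\{\underline{b}\}\equiv\textrm{BR}_2(\underline{a})$, i.e.\ $(a',b')=(\underline{a},\underline{b})$. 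Part 3 of Condition \ref{con2} gives $\underline{a}\prec a^*$, so $a^*\notin\textrm{BR}_1(b^*)$; since $b^*\in\textrm{BR}_2(a^*)$ while $a^*\notin\textrm{BR}_1(b^*)$, the profile $(a^*,b^*)$ is not a Nash equilibrium.

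Next I would order player $2$'s best replies. Because $u_2(a,b)$ has strictly increasing differences, $\alpha\mapsto\textrm{BR}_2(\alpha)$ is nondecreasing for the first-order stochastic dominance order on $\Delta(A)$ (apply Topkis to $\alpha\mapsto\mathbb{E}_{a\sim\alpha}[u_2(a,b)]$, whose differences in $(\alpha,b)$ inherit monotonicity from those of $u_2(a,b)$ in $a$). I would first record that $\underline{b}\prec b^*$: if $\underline{b}=b^*$, then committing to $\underline{a}$ yields player $1$ the payoff $u_1(\underline{a},b^*)>u_1(a^*,b^*)$, contradicting that $a^*$ is the (unique) Stackelberg action; together with $\underline{b}=\textrm{BR}_2(\underline{a})\preceq\textrm{BR}_2(a^*)=b^*$ this gives $\underline{b}\prec b^*$. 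It follows that the set of $p$ with $\textrm{BR}_2(pa^*+(1-p)\underline{a})\neq\{b^*\}$ is nonempty (it contains $p=0$), closed, and contained in $[0,1)$, so $p^*$ is well defined and $p^*<1$; likewise $q^*$ is well defined with $q^*>0$, since $\textrm{BR}_2(\underline{a})=\{b'\}$. At these boundary parameters the best-reply set may be multi-valued, so I would argue with the strong set order: from $pa^*+(1-p)\underline{a}\preceq a^*$ at $p=p^*$ and $b''\in\textrm{BR}_2(p^*a^*+(1-p^*)\underline{a})$, $\max\{b'',b^*\}\in\textrm{BR}_2(a^*)=\{b^*\}$, hence $b''\preceq b^*$; symmetrically, from $\underline{a}\preceq q^*a^*+(1-q^*)\underline{a}\preceq a^*$ one gets $b'=\underline{b}\prec b^{**}\preceq b^*$.

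Finally I would verify the three inequalities. Since $a'=a''=\underline{a}\prec a^*$ and $u_1$ is strictly decreasing in $a$, $u_1(a'',b'')=u_1(\underline{a},b'')>u_1(a^*,b'')$, which is (\ref{B.1}), and $u_1(a',b^{**})=u_1(\underline{a},b^{**})>u_1(a^*,b^{**})$, which is (\ref{B.2}). For (\ref{B.3}), using $a'=a''=\underline{a}$ the claim reduces to $u_1(\underline{a},b^*)-u_1(a^*,b^*)\geq u_1(\underline{a},b^{**})-u_1(a^*,b^{**})$, which holds with equality if $b^{**}=b^*$ and, if $\underline{b}\prec b^{**}\prec b^*$, is exactly the non-increasing-differences inequality for $u_1$ applied to $\underline{a}\prec a^*$ and $b^{**}\prec b^*$. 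I expect the one genuinely delicate step to be the middle one: securing $\underline{b}\prec b^*$ (so that $p^*,q^*,b'',b^{**}$ really exist) and controlling $\textrm{BR}_2$ at the boundary parameters $p^*,q^*$, where multi-valuedness means the ordering $b''\preceq b^*$ and $b^{**}\preceq b^*$ must be obtained from the strong set order rather than from a single-valued monotone selection; everything else is bookkeeping off strict monotonicity and non-increasing differences of $u_1$.
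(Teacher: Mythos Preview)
Your proposal is correct and follows essentially the same route as the paper: identify $a'=a''=\underline{a}$ from strict monotonicity of $u_1$ in $a$, deduce (\ref{B.1}) and (\ref{B.2}) from $a^*\succ\underline{a}$, then use monotone comparative statics on $\textrm{BR}_2$ (via the strictly increasing differences of $u_2$) to obtain $b^{**}\preceq b^*$ and conclude (\ref{B.3}) from the non-increasing differences of $u_1$. The paper's own proof is terser and simply asserts $b^*\succeq b^{**}$ from the increasing-differences property, whereas you are more careful about well-definedness of $p^*,q^*,b'',b^{**}$ and about handling $\textrm{BR}_2$ at the boundary via the strong set order; this extra care is welcome but does not constitute a different argument.
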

\begin{proof}[Proof of Lemma B.1:]
Since $u_1(a,b)$ is strictly decreasing in $a$, we have $a'=a''=\underline{a}$.
Since $a^* \neq \underline{a}$, we have $a^* \neq a'$ and $a^* \neq a''$.
Since $a^* \succ a'$ and $a^* \succ a''$, we obtain (\ref{B.1}) and (\ref{B.2}).
By construction, $p^* \geq q^*$. Since $u_2(a,b)$ has strictly increasing differences in $a$ and $b$, we have $b^* \succeq b^{**}$.
Given that $a^* \succ \underline{a}=a'=a''$ and $u_1(a,b)$ has strictly decreasing differences in $a$ and $b$, which yields (\ref{B.3}).
\end{proof}
Next, I define generalized coordination games.
\begin{Definition}
$(u_1,u_2)$ is a general coordination game if  $(a^*,b^*)$ is a Nash Equilibrium.
\end{Definition}
When player $1$'s Stackelberg outcome is a Nash Equilibrium, either $a^*=a'$ or $a^*=a''$ or both. This can be further categorized into two subclasses: trivial games in which the pure Stackelberg outcome coincides with the worst pure strategy Nash equilibrium (for example, prisoner's dilemma), games that have at least two pure-strategy Nash Equilibria. When $(a^*,b^*)$ is a Nash Equilibrium, it must be player $1$'s favorite Nash Equilibrium, while $(a',b')$ is player $1$'s least favorite Nash Equilibrium. This includes for example, battle of sexes, chicken games, and common interest coordination games in which different equilibria can be Pareto ranked.
Theorem 3' generalizes Theorem \ref{Theorem3} to games with strict lack-of-commitment and generalized coordination games.
\begin{Theorem3}
If the monitoring structure $\mathcal{N}$ satisfies Assumption \ref{Ass4} and the stage game satisfies Assumption \ref{Ass1}, and is
either a game with strict lack-of-commitment or a generalized coordination game,
then there exists $\overline{\pi}_0 \in (0,1)$,
such that for every $\pi_0 \in (0,\overline{\pi}_0)$ and $\delta$ large enough,
there exists $(\sigma_1^{\delta},\sigma_2^{\delta}) \in \textrm{SE}(\delta,\pi_0,\mathcal{N})$, such that:
\begin{equation*}
     \mathbb{E}_1^{(\sigma_1^{\delta},\sigma_2^{\delta})} \Big[   \sum_{t=0}^{\infty} (1-\delta) \delta^t u_1(a_t,b_t) \Big] =\underline{v}_1.
\end{equation*}
\end{Theorem3}

\subsection{Proof of Theorem 3': Games with Strict Lack-of-Commitment}\label{subB.2}
The constructed equilibrium consists of three phases: a \textit{reputation building phase}, a \textit{reputation maintenance phase}, and a \textit{punishment phase}, which
\textit{depends only} on the history of player $2$'s actions that is commonly observed by both players.
In period $t$,
\begin{itemize}
  \item play is in the reputation building phase if $t=0$ or $(b_0,...,b_{t-1})=(b',...,b')$;
  \item play is in the reputation maintenance phase if first, there exists $s \leq t-1$ such that $b_s=b^{**}$, and second,
  $(b_{s^*+1},...,b_{t-1})=(b^*,...,b^*)$ where
 $s^*$ is the smallest $s \in \mathbb{N}$ such that $b_s=b^{**}$.
  \item play is in the punishment phase if first, there exists $s \leq t-1$ such that $b_s=b^{**}$, and second,
  $(b_{s^*+1},...,b_{t-1}) \neq (b^*,...,b^*)$ where
 $s^*$ is the smallest $s \in \mathbb{N}$ such that $b_s=b^{**}$.
\end{itemize}
Play starts from the reputation building phase, and eventually ends up in the reputation maintenance phase or the punishment phase. Different from the construction in Theorem \ref{Theorem1}, the punishment phase is reached with strictly positive probability due to private monitoring and private learning. In what follows, I describe players' strategies and verify their incentive constraints in each of the three phases. By the end of this proof, I verify the promise keeping condition for player $1$.

\paragraph{Punishment Phase:} At every punishment phase history, player $1$ plays $a'$ and player $2$ plays $b'$. As will become clear after describing the reputation maintenance phase, play never reaches the punishment phase conditional on player $1$ being committed.
This
implies the existence of an assessment that is consistent with the equilibrium strategy profile and attaches probability $1$ to the strategic type at every punishment-phase history. This verifies players' incentive constraints in the punishment phase.

\paragraph{Reputation Maintenance Phase:} Let $s^*$ be the smallest $s$ such that $b_s=b^{**}$. In period $t \geq s^* +2$,
\begin{itemize}
  \item If $a_{t-1}=a^*$, then strategic-type player $1$ plays $a^*$.
  \item[] If $a_{t-1} \neq a^*$, then strategic-type player $1$ plays $p^* a_1^* +(1-p^*) a''$.
  \item If $t-1 \notin N_t$ or $a_{t-1} = a^*$, then player $2$ plays $b^*$.
  \item[] If $t-1 \in N_t$ and $a_{t-1} \neq a^*$, then player $2$ plays $\widetilde{\beta}_t b^* +(1-\widetilde{\beta}_t) b''$.
Let $\beta$ be the probability with which player $2$ plays $b^*$ conditional on $a_{t-1} \neq a^*$ but unconditional on realization of $\mathcal{N}_t$. It is also the unconditional probability that play remains in the reputation maintenance phase in period $t+1$ given that $a_{t-1} \neq a^*$. One can compute $\beta$ and
player $1$'s continuation value in period $t$ when $a_{t-1} \neq a^*$, denoted by $V_1$,
by solving the following system of quadratic equations:
   \begin{equation}\label{B.4}
    V_1=(1-\delta) u_1(a^*,\beta b^* +(1-\beta)b'') +\delta \beta u_1(a^*,b^*)
    +\delta (1-\beta_t) u_1(a',b'),
    \end{equation}
    \begin{equation}\label{B.5}
   V_1 = (1-\delta) u_1(a'',\beta b^* +(1-\beta)b'') +\delta \beta V_1
    +\delta (1-\beta_t) u_1(a',b').
  \end{equation}
 To understand (\ref{B.4}) and (\ref{B.5}), note that
 $u_1(a^*,b^*)$ is player $1$'s continuation value in period $t+1$ when $a_t=a^*$ and play remains in the reputation maintenance phase, and $u_1(a',b')$ is player $1$'s continuation value in period $t+1$ when play reaches the punishment phase.
Player $2$'s mixing probability conditional on $t-1 \in N_t$ and $a_{t-1} \neq a^*$ satisfies:
      \begin{equation}\label{B.beta}
      1-  \beta = (1-\widetilde{\beta}_t) \Pr(t-1 \in N_t).
      \end{equation}
 Since $\Pr(t-1 \in N_t)$ is uniformly bounded from below by $\gamma>0$, $\widetilde{\beta}_t \in (0,1)$ when $\beta>1-\gamma$. Lemma \ref{LB.2} verifies that $\beta$ converges to $1$ as $\delta \rightarrow 1$, i.e., $\beta>1-\gamma$ when $\delta$ is large enough.
\end{itemize}
In period $s^* +1$,
\begin{itemize}
  \item If $a_{s^*}=a^*$,
  then strategic-type player $1$ plays $a^*$.
  \item[] If $a_{s^*} \neq a^*$ and
  \begin{equation}\label{B.6}
      \xi_{s^*} > \overline{\xi} \equiv \frac{u_1(a',b^{**})-u_1(a^*,b^{**})}{\displaystyle u_1(a'',b^*)-u_1(a^*,b^*) +\frac{1-\beta}{\beta} \Big(u_1(a'',b'')-u_1(a^*,b'')\Big)},
  \end{equation}
  then strategic-type player $1$ plays $a^*$.
  \item[] If $a_{s^*} \neq a^*$ and $\xi_{s^*} \leq \overline{\xi}$,
  then strategic-type player $1$ plays $p^* a_1^* +(1-p^*) a''$.
  \item If $s^* \notin N_{s^*+1}$, or $a_{s^*} = a^*$, or $\xi_{s^*}$ satisfies (\ref{B.6}), then player $2$ plays $b^*$.
  \item[] If $s^* \in N_{s^*+1}$, $a_{s^*} \neq a^*$, and $\xi_{s^*}$ does not satisfy (\ref{B.6}), then player $2$ plays $\widetilde{\beta} b^* +(1-\widetilde{\beta}) b''$, where $\widetilde{\beta}$ can be solved via (\ref{B.4}), (\ref{B.5}) and (\ref{B.beta}).
\end{itemize}
To verify players' incentive constraints in this phase, I only need to verify their incentives from period $s^*+2$ and onwards. This is because the equilibrium play in period $s^*+1$ is a randomization between the two automaton states from period $s^*+2$ and onwards.

I start from verifying player $2$s' incentives. According to the definition of $p^*$ in Appendix \ref{subB.1}, player $2$ is indifferent between $b^*$ and $b''$ when her belief about player $1$'s action is $p^* a_1^* +(1-p^*) a''$.
In the constructed strategy for player $2$s in the reputation maintenance phase,
\begin{itemize}
  \item Player $2$ randomizes between $b^*$ and $b'$ in period $t$
only when she has observed $a_{t-1} \neq a^*$, after which her posterior belief attaches probability $0$ to the commitment type, and therefore, believes that player $1$'s action is $p^* a_1^* +(1-p^*) a''$.
  \item Player $2$ plays $b^*$ at other histories. This is incentive compatible since $b^*$ best replies against $p a_1^* +(1-p) a''$
for every $p \in [p^*,1]$, and
given that the strategic type plays $p^* a_1^* +(1-p^*) a''$,
the unconditional probability with which player $1$ plays $a^*$ is between $p$ and $1$.
\end{itemize}
Next, I verify player $1$'s incentives. I show that first, when $a_{t-1}=a^*$, player $1$ has an incentive to play $a^*$. This requires:
\begin{equation*}
    u_1(a^*,b^*) \geq \max_{a \neq a^*} \Big\{(1-\delta) u_1(a,b^*)
    +\delta
    V_1
    \Big\},
\end{equation*}
where $V_1$ solves (\ref{B.4}) and (\ref{B.5}). Intuitively, $V_1$
is player $1$'s continuation value in the reputation maintenance phase conditional on his previous period action is not $a^*$.
Given that $a''$ is player $1$'s stage-game best reply against $b^*$, the above inequality reduces to:
\begin{equation}\label{B.7}
    u_1(a^*,b^*)-V_1 \geq \frac{1-\delta}{\delta} (u_1(a'',b^*)-u_1(a^*,b^*)).
\end{equation}
Deduct the RHS of (\ref{B.4}) from the RHS of (\ref{B.5}), we obtain:
\begin{equation}\label{B.8}
    \delta \beta (u_1(a^*,b^*)-V_1)
    =(1-\delta) \beta (u_1(a'',b^*)-u_1(a^*,b^*))
    +(1-\delta) (1-\beta) (u_1(a'',b'')-u_1(a^*,b''))
\end{equation}
Equation (\ref{B.8}) implies that $\beta>0$. This is because otherwise, $u_1(a'',b'')-u_1(a^*,b'')=0$, and given the presumption that $a^* \neq a''$, this violates Assumption \ref{Ass1}. According to (\ref{B.1}),
$u_1(a'',b'')-u_1(a^*,b'') \geq 0$, and therefore, (\ref{B.8}) implies (\ref{B.7}).

Next, I show that when $a_{t-1} \neq a^*$, player $1$ has an incentive to mix between $a^*$ and $a''$ in period $t$.
Since $\mathcal{N}_t$ is independent of $\{\mathcal{N}_{s}\}_{s=0}^{t-1}$,
and $\beta$ is the probability with which player $2$ plays $b^*$ in period $t$ \textit{conditional on} $a_{t-1} \neq a^*$ but \textit{unconditional on} the realization of $\mathcal{N}_t$,
player $1$ believes that player $2$ plays $b^*$ with probability $\beta$ conditional on $a_{t-1} \neq a^*$.
Equations (\ref{B.4}) and (\ref{B.5})
imply that player $1$ is indifferent between $a^*$ and $a''$. What remains to be shown is that player $1$ prefers $a''$ to actions other than $a^*$ and $a''$. This hinges on the following lemma, implying that $\beta$ is close to $1$ when $\delta$ is close to $1$.
\begin{Lemma}\label{LB.2}
For every $\gamma \in (0,1)$, there exists $\underline{\delta} \in (0,1)$, such that
for every $\delta>\underline{\delta}$, there exists
$\beta \in (1-\gamma,1)$ that solves (\ref{B.4}) and (\ref{B.5}).
\end{Lemma}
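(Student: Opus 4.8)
The plan is to collapse the quadratic system (\ref{B.4})--(\ref{B.5}) in $(V_1,\beta)$ to a single scalar equation in $\beta$ and then invoke the intermediate value theorem. First I would solve (\ref{B.4}) for $V_1$: using $u_1(a^*,\beta b^*+(1-\beta)b'')=\beta u_1(a^*,b^*)+(1-\beta)u_1(a^*,b'')$ and collecting terms yields the closed form $V_1=\beta u_1(a^*,b^*)+(1-\beta)\big[(1-\delta)u_1(a^*,b'')+\delta u_1(a',b')\big]$, so that $u_1(a^*,b^*)-V_1=(1-\beta)C_1(\delta)$ with $C_1(\delta)\equiv u_1(a^*,b^*)-(1-\delta)u_1(a^*,b'')-\delta u_1(a',b')$. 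Substituting this into (\ref{B.8}) (the difference of (\ref{B.5}) and (\ref{B.4})) and writing $x\equiv 1-\beta$, $C_2\equiv u_1(a'',b^*)-u_1(a^*,b^*)$, $C_3\equiv u_1(a'',b'')-u_1(a^*,b'')$, the system becomes equivalent to $g(x)=0$, where $g(x)\equiv -\delta C_1(\delta)x^2+\big[\delta C_1(\delta)+(1-\delta)(C_2-C_3)\big]x-(1-\delta)C_2$ is a downward-opening parabola in $x$.

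Next I would record the relevant signs. We have $C_2>0$ because, under Assumption \ref{Ass1}, $a''$ is the unique stage-game best reply to $b^*$ and $a''\neq a^*$ in a game with strict lack-of-commitment; $C_3\geq 0$ by inequality (\ref{B.1}); and $C_1(\delta)\to u_1(a^*,b^*)-u_1(a',b')$ as $\delta\to 1$, which is strictly positive since $(a^*,b^*)\neq(a',b')$ and the Stackelberg payoff weakly dominates every pure-strategy Nash payoff. Hence there are constants $c>0$ and $\underline{\delta}_0<1$ with $C_1(\delta)\geq c$ for all $\delta\in(\underline{\delta}_0,1)$, while $C_2,C_3$ are fixed. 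Then $g(0)=-(1-\delta)C_2<0$, and for $\delta$ close enough to $1$ the quantities $(1-\delta)C_3$ and $(1-\delta)(1-\gamma)C_2$ are dominated by $\tfrac14\delta(1-\gamma)c$ and $\tfrac14\gamma\delta(1-\gamma)c$ respectively, so that $g(\gamma)=\gamma\big[\delta(1-\gamma)C_1(\delta)-(1-\delta)C_3\big]-(1-\delta)(1-\gamma)C_2>0$. By the intermediate value theorem there is $x^\star\in(0,\gamma)$ with $g(x^\star)=0$ (and, since $g$ is a downward parabola with $g(0)<0$, this root is unique in $(0,\gamma)$); setting $\beta\equiv 1-x^\star\in(1-\gamma,1)$ and defining $V_1$ by the closed form above, the pair $(V_1,\beta)$ satisfies (\ref{B.4}) by construction and (\ref{B.5}) because (\ref{B.5})$-$(\ref{B.4}) reduces exactly to $g(x^\star)=0$.

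The only delicate point — and hence the main obstacle — is verifying that $C_1(\delta)$ stays bounded away from zero uniformly as $\delta\to 1$, i.e.\ that $u_1(a^*,b^*)>u_1(a',b')$ strictly; this is where the genericity embedded in Assumption \ref{Ass1}, together with the fact that the pure Stackelberg payoff dominates every pure Nash payoff under Assumption \ref{Ass2}, is used (absent strictness one would only get $\beta\to 1$ without a usable lower bound on $C_1$). Everything else is the elementary observation that a concave quadratic that is negative at $0$ and positive at $\gamma$ for $\delta$ large has precisely one root in $(0,\gamma)$, which delivers the required $\beta\in(1-\gamma,1)$ and, passing $\delta\to 1$, forces $x^\star\to 0$, i.e.\ $\beta\to 1$.
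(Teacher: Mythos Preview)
Your proof is correct and follows essentially the same route as the paper: both reduce the system (\ref{B.4})--(\ref{B.5}) to the same quadratic in $\beta$ (your $C_1,C_2,C_3$ are, up to the factor $\tfrac{\delta}{1-\delta}$, exactly the paper's $X,Y,Z$), and both rely on $u_1(a^*,b^*)>u_1(a',b')$ to pin down the sign of the leading coefficient. The only cosmetic difference is that the paper writes out the quadratic formula and shows the larger root is in $(1-\gamma,1)$, whereas you use the intermediate value theorem on $g$ at $0$ and $\gamma$; both yield the same conclusion.
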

The proof requires some algebra and is relegated to Appendix \ref{subB.4}.
Since $a''$ best replies against $b^*$, Assumption \ref{Ass1} implies that $a''$ is a strict best reply against $b^*$. Lemma \ref{LB.2} implies that there exists $\delta$ large enough such that $a''$ best replies against $\beta a^* +(1-\beta)a''$. This suggests that player $1$ receives higher payoff by playing $a''$ compared to actions other than $a^*$ and $a''$.

\paragraph{Reputation Building Phase:} First, I describe player $2$s' equilibrium strategy and verifies strategic-type player $1$'s incentive
to mix between $a^*$ and $a'$ at every private history of the reputation building phase. Second, I describe strategic-type player $1$'s equilibrium strategy, and verifies player $2$s' incentives to mix between $b^{**}$ and $b'$ at every private history of the reputation building phase.

\paragraph{Player $2$s' Strategy:} Player $2$ plays $b'$ if (1) $t=0$, or (2) $t-1 \notin N_t$, or (3) $t-1 \in N_t$ but $a_{t-1} \neq a^*$.
Player $2$ plays $\widetilde{\rho}_t b^{**} +(1-\widetilde{\rho}_t) b'$ if $t-1 \in N_t$ and $a_{t-1}=a^*$.
Let $\rho$ be the probability with which player $2$ plays $b^{**}$ conditional on $a_{t-1}=a^*$ but unconditional on the realization of $\mathcal{N}_t$. This probability $\rho$ is pinned down by the following equation:
\begin{equation}\label{B.12}
        V_1'
        =
        (1-\delta) u_1(a^*,\rho b^{**}+(1-\rho)b')
        +\delta \rho u_1(a^*,b^*)
        +\delta (1-\rho) V_1'.
\end{equation}
where $V_1'$ is given by:
\begin{equation}\label{B.13}
u_1(a',b')= (1-\delta) u_1(a^*,b') +\delta V_1'.
\end{equation}
Player $2$'s mixing probability conditional on
$t-1 \in N_t$ and $a_{t-1}=a^*$, which is $\widetilde{\rho}_t$,
satisfies:
\begin{equation}\label{B.11}
        \rho=\widetilde{\rho}_t \Pr(t-1 \in N_t),
\end{equation}
Since $\Pr(t-1 \in N_t)$ is uniformly bounded from below by $\gamma$, $\widetilde{\rho}_t \in (0,1)$ as long as $\rho \in (0,\gamma)$.  Lemma \ref{LB.3} shows that $\rho \rightarrow 0$ as $\delta \rightarrow 1$.  In what follows, I show that such mixing probabilities give strategic-type player $1$ an incentive to mix between $a^*$ and $a'$ at every reputation-building phase history.

First, $u_1(a',b')$ is player $1$'s continuation value in the reputation building phase when $t=0$ or $a_{t-1} \neq a^*$.
Equation (\ref{B.13}) implies that player $1$ is indifferent between $a^*$ and $a'$ when $t=0$ or $a_{t-1} \neq a^*$, and moreover, given that $a'$ is player $1$'s stage-game best reply against $b'$, $a'$ yields strictly higher payoff compared to actions other than  $a^*$ and $a'$.

Next, recall that $u_1(a^*,b^*)$ is player $1$'s continuation payoff in the reputation maintenance phase conditional on
player $1$ playing $a^*$ in the period before moving to the reputation maintenance phase; and $\overline{\xi} V_1 + (1-\overline{\xi}) u_1(a^*,b^*)$ is player $1$'s continuation payoff in the reputation maintenance phase conditional on
player $1$ not playing $a^*$ in the period before moving to the reputation maintenance phase, where
$V_1$ solves (\ref{B.4}) and (\ref{B.5}) and
$\overline{\xi}$ is given by (\ref{B.6}).
At every reputation-building phase history with $a_{t-1} =a^*$, player $1$'s expected payoff from playing $a^*$ is given by the RHS of (\ref{B.12}). His expected payoff from playing $a'$ is:
\begin{equation}\label{B.14}
   (1-\delta) u_1(a', \rho b^{**} +(1-\rho) b') +\delta \rho \Big( \overline{\xi} V_1 + (1-\overline{\xi}) u_1(a^*,b^*) \Big)
   +\delta (1-\rho) u_1(a',b').
\end{equation}
Let $V_1^{**} \equiv \overline{\xi} V_1 + (1-\overline{\xi}) u_1(a^*,b^*)$.
Subtracting  the RHS of (\ref{B.12}) from (\ref{B.14}), we obtain:
\begin{equation}\label{B.16}
    (1-\delta)\rho (u_1(a^*,b^{**})-u_1(a',b^{**}))
    +(1-\delta) (1-\rho) (u_1(a^*,b')-u_1(a',b'))
    +\delta \rho (u_1(a^*,b^*)-V_1^{**})
    +\delta (1-\rho) (V_1'-u_1(a',b'))
\end{equation}
According to (\ref{B.13}), (\ref{B.16}) reduces to:
\begin{equation*}
    (1-\delta)\rho (u_1(a^*,b^{**})-u_1(a',b^{**}))
    +\delta \rho (u_1(a^*,b^*)-V_1^{**}),
\end{equation*}
which equals $0$ according to (\ref{B.6}).
This suggests that (\ref{B.14}) equals the RHS of (\ref{B.12}), and that
the strategic-type player $1$ is indifferent between $a^*$ and $a'$. To show that player $1$ strictly prefers $a'$ to actions other than $a^*$ and $a'$, I establish the following lemma:
\begin{Lemma}\label{LB.3}
For every $\gamma \in (0,1)$, there exists $\underline{\delta} \in (0,1)$, such that
for every $\delta>\underline{\delta}$, there exists
$\rho \in (0,\gamma)$ that solves (\ref{B.12}) and (\ref{B.13}).
\end{Lemma}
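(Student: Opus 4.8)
\textbf{Proof proposal for Lemma \ref{LB.3}.} The plan is to exploit the fact that the system formed by (\ref{B.12}) and (\ref{B.13}) is \emph{triangular}: equation (\ref{B.13}) pins down $V_1'$ as a function of $\delta$ alone, and then (\ref{B.12}) is a single affine equation in $\rho$, so $\rho$ has a closed form. First I would rewrite (\ref{B.13}) as $V_1' = \delta^{-1}\big(u_1(a',b') - (1-\delta)u_1(a^*,b')\big)$. As in the proof of Theorem \ref{Theorem1}, I restrict attention to the nontrivial case $u_1(a^*,b^*) > u_1(a',b') > u_1(a^*,b')$ (cf. (\ref{3.2}); the left inequality holds since $\underline{v}_1 = u_1(a',b') < u_1(a^*,b^*)$, the right one since $(a',b')$ is a strict stage-game equilibrium with $a' \neq a^*$ under Assumption \ref{Ass1}). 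Hence $V_1' \to u_1(a',b')$ as $\delta \to 1$, so that $u_1(a^*,b^*) - V_1' \to u_1(a^*,b^*) - u_1(a',b') > 0$, while $V_1' - u_1(a^*,b') = \delta^{-1}\big(u_1(a',b') - u_1(a^*,b')\big) > 0$.

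Second, I would substitute this expression for $V_1'$ into (\ref{B.12}), expand $u_1(a^*,\rho b^{**}+(1-\rho)b') = \rho u_1(a^*,b^{**}) + (1-\rho)u_1(a^*,b')$, and collect the terms in $\rho$. Rearranging yields
\begin{equation*}
(1-\delta)\big(V_1' - u_1(a^*,b')\big) = \rho\Big[(1-\delta)\big(u_1(a^*,b^{**}) - u_1(a^*,b')\big) + \delta\big(u_1(a^*,b^*) - V_1'\big)\Big],
\end{equation*}
and therefore, using $V_1' - u_1(a^*,b') = \delta^{-1}\big(u_1(a',b') - u_1(a^*,b')\big)$,
\begin{equation*}
\rho = \frac{\delta^{-1}(1-\delta)\big(u_1(a',b') - u_1(a^*,b')\big)}{(1-\delta)\big(u_1(a^*,b^{**}) - u_1(a^*,b')\big) + \delta\big(u_1(a^*,b^*) - V_1'\big)}.
\end{equation*}

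Third, I would read off positivity and the limit from this formula. The numerator is strictly positive for every $\delta < 1$ because $u_1(a',b') > u_1(a^*,b')$. The denominator is strictly positive once $\delta$ is close enough to $1$: its leading term $\delta\big(u_1(a^*,b^*) - V_1'\big)$ converges to $u_1(a^*,b^*) - u_1(a',b') > 0$, while the coefficient of the remaining term is bounded and multiplied by $1-\delta$. So $\rho$ is well defined and strictly positive for $\delta$ near $1$; and since the numerator is $O(1-\delta)$ while the denominator is bounded away from $0$, we get $\rho = O(1-\delta) \to 0$. Hence there is $\underline{\delta} \in (0,1)$ such that $\rho \in (0,\gamma)$ for all $\delta > \underline{\delta}$, which is the claim.

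The only point that needs care — the analogue here of the "algebra" invoked for Lemma \ref{LB.2} in Appendix \ref{subB.4} — is confirming that the denominator neither vanishes nor changes sign over the relevant range of $\delta$; this is precisely where the strict inequalities $u_1(a^*,b^*) > u_1(a',b') > u_1(a^*,b')$ are used, to dominate the lower-order term $(1-\delta)\big(u_1(a^*,b^{**}) - u_1(a^*,b')\big)$ whose sign is a priori ambiguous. Once that is established, the lemma follows by continuity, with $\rho \to 0$ here playing exactly the role that $\beta \to 1$ plays in Lemma \ref{LB.2}.
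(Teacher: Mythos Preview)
Your proposal is correct and follows essentially the same route as the paper. Both arguments first observe from (\ref{B.13}) that $V_1' \to u_1(a',b')$ and then analyze (\ref{B.12}); the paper does the latter via a short contradiction argument (if $\rho_n \to \rho^* > 0$ then both sides of the rearranged equation $(1-\delta+\delta\rho)V_1' = (1-\delta)u_1(a^*,\rho b^{**}+(1-\rho)b') + \delta\rho u_1(a^*,b^*)$ would have incompatible limits), whereas you derive the closed form for $\rho$ and read off $\rho = O(1-\delta)$ directly---your version is slightly more explicit about existence and positivity, but the content is the same.
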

\begin{proof}[Proof of Lemma B.3:]
According to (\ref{B.13}), $V_1'$ converges to $u_1(a',b')$ as $\delta \rightarrow 1$.
According to (\ref{B.12})
\begin{equation}\label{B.17}
   (1-\delta+\delta\rho)     V_1'
        =
        (1-\delta) u_1(a^*,\rho b^{**}+(1-\rho)b')
        +\delta \rho u_1(a^*,b^*).
\end{equation}
Suppose toward a contradiction that there exists a sequence of $\{\delta_n\}_{n =1}^{\infty}$ with $\lim_{n \rightarrow \infty} \delta_n =1$ such that $\lim_{n \rightarrow \infty} \rho_n =\rho^*>0$.
Then the LHS of (\ref{B.17}) converges to $\rho^* V_1'$ while the RHS converges to $\rho^* u_1(a^*,b^*)$. Since $V_1' \rightarrow u_1(a',b')$ and $u_1(a',b')<u_1(a^*,b^*)$ in games with strict lack-of-commitment. This yields a contradiction.
\end{proof}
Since $a'$ is a strict best reply against $b'$, there exists $\overline{\varepsilon}>0$ such that $a'$ best replies against $\varepsilon b^{**}+(1-\varepsilon) b'$ for every $\varepsilon \in (0,\overline{\varepsilon})$.
Lemma \ref{LB.3} implies that when $\delta$ is close to $1$, $a'$ is player $1$'s strict best reply against
$\rho b^{**}+(1-\rho) b'$, i.e.,  $a'$ yields player $1$ a strictly higher payoff compared to actions
other than
$a^*$ and $a'$.

\paragraph{Player $1$'s Strategy:} I start from the following Lemma:
\begin{Lemma}\label{LB.4}
For every $K \in \mathbb{N}$, there exists  $M \in \mathbb{N}$, such that
\begin{equation}\label{B.9}
   2^K \sum_{j=0}^{K} {n \choose j} < 2^n, \quad \textrm{for every } n \geq M.
\end{equation}
\end{Lemma}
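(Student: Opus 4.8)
The plan is to exploit the fact that, for fixed $K$, the left-hand side of \eqref{B.9} grows only polynomially in $n$ whereas $2^n$ grows exponentially, so the inequality must hold once $n$ is large enough. The argument is elementary and proceeds in two short steps.

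First I would bound the truncated sum of binomial coefficients by a polynomial in $n$ whose degree and leading constant depend only on $K$. For every $j \in \{0,1,\dots,K\}$ and every $n \ge 1$ one has $\binom{n}{j} \le n^j \le n^K$, and therefore
\[
2^K \sum_{j=0}^{K} \binom{n}{j} \;\le\; 2^K (K+1)\, n^K .
\]
The key point here is only that the dependence on $n$ is captured by $n^K$ with a multiplicative constant $2^K(K+1)$ that does not involve $n$; no sharper estimate is needed.

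Second, since $\lim_{n \to \infty} n^K / 2^n = 0$ for any fixed $K$ (exponentials dominate polynomials), the ratio $2^K (K+1) n^K / 2^n \to 0$ as $n \to \infty$. Hence there exists $M = M(K) \in \mathbb{N}$ such that $2^K (K+1) n^K < 2^n$ for all $n \ge M$. Chaining this with the displayed bound gives $2^K \sum_{j=0}^{K} \binom{n}{j} < 2^n$ for all $n \ge M$, which is \eqref{B.9}. If an explicit $M$ were desired one could, for instance, take $M$ to be any integer exceeding $2K\log_2(2K) + 2K$ and verify the inequality by a routine calculation, but for the purposes of Theorem~3' only the existence of some such $M$ matters.

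This lemma carries no genuine obstacle: the only thing to be slightly careful about is that the polynomial bound on $\sum_{j \le K}\binom{n}{j}$ must be uniform in the sense that its $n$-dependence is $O_K(n^K)$ with the constant depending on $K$ alone, which the crude estimate $\binom{n}{j} \le n^K$ already supplies. The role of the lemma downstream is to guarantee that, beyond calendar time $M$, the number $2^t$ of player~$1$'s private histories strictly exceeds the number $2^K\sum_{j=0}^K\binom{t}{j}$ of player~$2$'s private histories, so that the belief-based linear system in the proof of Theorem~3' is under-determined and admits a solution.
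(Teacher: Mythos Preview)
Your proposal is correct. The argument via the crude bound $\binom{n}{j}\le n^K$ followed by the elementary fact that $n^K/2^n\to 0$ is entirely sound, and your final paragraph correctly identifies how the lemma is used downstream in the proof of Theorem~3'.

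The paper takes a slightly different, more combinatorial route: it uses the unimodality of the binomial coefficients (i.e., $\binom{n}{j}$ is increasing for $j<n/2$) to argue that for $n>mK$ the tail $\sum_{j=0}^{K}\binom{n}{j}$ is at most a $1/m$ fraction of the total $2^n$, and then sets $m=2^K$ to obtain an explicit threshold $M=2^K K$. The advantage of the paper's approach is that it produces a concrete value of $M$ in one line; the advantage of yours is that it is more transparent and requires no combinatorial estimate beyond $\binom{n}{j}\le n^j$. Either argument is adequate for the purpose, since only the existence of $M$ is needed in the belief-based construction.
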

\begin{proof}[Proof of Lemma B.4:]
To start with, $\sum_{j=0}^{n} {n \choose j}=2^n$ for every $n \in \mathbb{N}$.
Moreover, ${n \choose j}$ is increasing in $j$ when $j<n/2$ and is decreasing in $j$ when $j > n/2$.
This suggests that for every $m \geq 2$ and $n > mK$,
\begin{equation*}
    \sum_{j=0}^{K} {n \choose j} \leq \frac{2^n}{m}
\end{equation*}
Let $M \equiv 2^K K$, we have:
\begin{equation*}
    2^K \sum_{j=0}^{K} {n \choose j} < \frac{2^{K+n}}{2^K}=2^n.
\end{equation*}
\end{proof}
Let $\eta \in (0,1/2)$ be a small enough real number which will be determined by the end of the construction (Lemma \ref{LB.5}).
Pick $\overline{\pi}_0(\eta) \in (0,1)$ small enough such that:
\begin{equation}\label{B.10}
    \frac{\overline{\pi}_0(\eta)}{1-\overline{\pi}_0(\eta)} \Big(\frac{1}{\eta q^*} \Big)^M < \frac{\eta q^*}{1-\eta q^*}.
 \end{equation}

Recall the definitions of $a'$, $b'$, $b^{**}$, and $q^*$ in Appendix \ref{subB.1}.
Player $1$'s strategy in the reputation building phase depends on \textit{calendar time}, and in particular, the comparison between $t$ and $M$.
Let $\boldsymbol{\widetilde{\pi}_t} \in \Delta (\Omega)$ be the posterior belief of a hypothetical observer who shares the same prior belief as player $2$s, but can observe the entire history of actions and public randomization devices, i.e., he observes $\{a_0,...,a_{t-1},b_0,...,b_{t-1},\xi_0,...,\xi_{t}\}$.
Let $\widetilde{\pi}_t$ be the probability
$\boldsymbol{\widetilde{\pi}_t}$ attaches to the commitment type. Player $1$ can compute $\widetilde{\pi}_t$ based on his private history, but player $2$ cannot.

At every reputation-building phase history $h^t$ with $t \leq M$, strategic-type player $1$ mixes between $a^*$ and $a'$, with the probability of playing $a^*$ being $q(h_1^t)$, satisfying:
\begin{equation}\label{B.18}
 (1-\widetilde{\pi}_t)   q(h_1^t) +\widetilde{\pi}_t =q^*.
\end{equation}
i.e., player $2$ is indifferent between $b^{**}$ and $b'$ when she observes the entire history of actions and public randomization devices.

I show that when $\pi_0 < \overline{\pi}_0 (\eta)$, we have
$\widetilde{\pi}_t <\eta q^*$ for every $\{a_0,...,a_{t-1},b_0,...,b_{t-1},\xi_0,...,\xi_{t}\}$ with $t \leq M$.
This is because $\widetilde{\pi}_t$ is bounded from above by an observer's belief who observes
$\{a_0,...,a_{t-1}\}=\{a^*,a^*,...,a^*\}$. According to (\ref{B.18}),
$q(h_1^t) > \eta q^*$ when
$\widetilde{\pi}_t< \eta q^*$, and therefore, $\widetilde{\pi}_{t+1} \leq \frac{\widetilde{\pi}_t}{\eta q^*}$.
Applying (\ref{B.10}), one can then show inductively that $\widetilde{\pi}_t< \eta q^*$ for every $t \leq M$.
The above conclusion also suggests that $q(h_1^t) > \eta q^*$ for every $h_1^t$ with $t \leq M$.

At every reputation-building phase history $h^t$ with $t > M$, strategic-type player $1$ mixes between $a^*$ and $a'$.
The probability with which he plays $a^*$ depends on his private history only through $\chi^t \equiv \{\chi_0,...,\chi_{t-1}\} \in X^t \equiv \{0,1\}^{t}$,
with
\begin{displaymath}
\chi_s = \left\{ \begin{array}{ll}
1 & \textrm{if } a_s=a^* \\
0 & \textrm{if } a_s \neq a^*.
\end{array} \right.
\end{displaymath}
Let $q(\chi^t)$ be the probability with which strategic-type player $1$ plays $a^*$, and let $\mathbf{q}_t \equiv \{q(\chi^t)\}_{\chi^t \in \{0,1\}^t}$. For every $h_2^t$, let $\kappa(h_2^t) \in \Delta(X_t)$ be player $2$'s belief about $\chi^t$ conditional on player $1$ being the strategic type. For every $t >M$,
recall that $\pi(h_2^t)$ is the probability player $2$'s posterior belief attaches to the commitment type when her private history is $h_2^t$.
Let $\mathbf{q}_t$ be such that:
\begin{equation}\label{B.19}
 \pi(h_2^t)+(1-\pi(h_2^t))  \kappa(h_2^t) \cdot \mathbf{q}_t   = q^* \textrm{ for every } h_2^t \in \mathcal{H}_2^t.
\end{equation}
When (\ref{B.19}) is satisfied, player $2$ believes that player $1$ plays $a^*$ with probability $q^*$ and $a'$ with probability $1-q^*$ at private history $h_2^t$, and is therefore, indifferent between $b^{**}$ and $b'$.
I show the following lemma, which verifies player $2$'s incentive constraints in the reputation building phase.
\begin{Lemma}\label{LB.5}
There exists $\eta \in (0,1/2)$ such that for every $t >M$,
if $q(\chi^s) \in[ \eta q^*,1-\eta q^*]$ for every $\chi^s \in \{0,1\}^s$ with $s \leq t-1$, then there exists $\mathbf{q}_t \in [0,1]^{2^t}$ that solves (\ref{B.19})
and satisfies $q(\chi^t) \in [\eta q^*, 1-\eta q^*]$ for every $\chi^t \in \{0,1\}^t$.
\end{Lemma}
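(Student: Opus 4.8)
The plan is to read (\ref{B.19}) as a linear feasibility problem in the $2^t$ unknowns $\{q(\chi^t)\}_{\chi^t\in\{0,1\}^t}$ and to exploit that, for $t>M$, it has far fewer equations than unknowns. Each reputation-building history $h_2^t$ contributes the equation $\kappa(h_2^t)\cdot\mathbf{q}_t=c(h_2^t)$ with $c(h_2^t)\equiv\frac{q^*-\pi(h_2^t)}{1-\pi(h_2^t)}$; when it holds, player $2$ at $h_2^t$ believes $a^*$ is played with probability exactly $q^*$ and is indifferent between $b^{**}$ and $b'$, as the phase requires. Within the phase all $b_s=b'$ and $\xi_t$ is belief-irrelevant, so two histories give the same equation unless they differ in the sampled set $N_t$ or the realised actions $(a_s)_{s\in N_t}\in\{a^*,a'\}^{N_t}$; hence the number of distinct equations is at most $\sum_{j=0}^{K}\binom{t}{j}2^{j}\le 2^{K}\sum_{j=0}^{K}\binom{t}{j}$, which by Lemma \ref{LB.4} is strictly below $2^t$ for $t>M$. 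The system is therefore underdetermined, and the constant vector $q^*\mathbf{1}$ — which sends every equation to $q^*$, since $\kappa(h_2^t)$ is a probability vector — serves as the base point of the construction.

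Second, I would bound the right-hand sides. Conditional on the reputation-building phase (hence on $b_s=b'$ for all $s<t$), player $2$'s posterior likelihood ratio for $\omega^c$ is the prior ratio times the product of at most $|N_t|\le K$ per-period likelihood ratios from her observed actions: an observed $a^*$ contributes a factor at most $1/(\eta q^*)$, since the strategic type plays $a^*$ with probability at least $\eta q^*$ in every such period by the inductive hypothesis, while conditioning on the $b_s=b'$'s and on any observed non-$a^*$ action only lowers the ratio, because the commitment type (always playing $a^*$) triggers $b^{**}$, and thus exits the phase, weakly more often than the strategic type. With $K\le M$, $1/(\eta q^*)>1$, and (\ref{B.10}), $\frac{\pi(h_2^t)}{1-\pi(h_2^t)}\le\frac{\pi_0}{1-\pi_0}\big(\tfrac{1}{\eta q^*}\big)^{M}<\frac{\eta q^*}{1-\eta q^*}$, so $\pi(h_2^t)<\eta q^*$, and taking $\overline\pi_0(\eta)$ even smaller than the bound in (\ref{B.10}) makes it as small as desired. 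Consequently each $c(h_2^t)$ lies in $\big(\tfrac{q^*(1-\eta)}{1-\eta q^*},\,q^*\big]$, which for $\eta$ small is a subinterval of the open box $(\eta q^*,1-\eta q^*)$; so coordinatewise the target is interior to the required box.

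Third, I would upgrade coordinatewise feasibility to a \emph{simultaneous} solution in $[\eta q^*,1-\eta q^*]^{2^t}$, with $\eta$ and $\overline\pi_0$ fixed once and for all (independently of $t$). Since $q^*\mathbf{1}$ lies strictly inside that box whenever $\eta<\min\{1/2,\,1/q^*-1\}$ and is a solution when $\pi_0=0$, it suffices to find a solution within sup-distance $\min\{q^*-\eta q^*,\,1-q^*-\eta q^*\}$ of $q^*\mathbf{1}$. The natural route is to solve the system in an order dictated by the refinement structure of $\{(N,\mathbf{a}_N)\}$ — finest $N$, of size $K$, first — adjusting at each step only the probabilities $q(\chi^t)$ not already pinned down, and using the inductive hypothesis $q(\chi^s)\in[\eta q^*,1-\eta q^*]$ for $s\le t-1$ to keep the conditional probabilities $\Pr_\sigma(\chi^t|_{N'}=\mathbf{a}_{N'}\mid\chi^t|_N=\mathbf{a}_N)$ away from degeneracy; after shrinking $\overline\pi_0(\eta)$ enough, the cumulative displacement of each coordinate is an order of magnitude below the box half-width. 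The base case $t\le M$ is already in the text, where $q(h_1^t)=\frac{q^*-\widetilde\pi_t}{1-\widetilde\pi_t}\in[\eta q^*,1-\eta q^*]$ was verified, so this closes the induction.

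The main obstacle is making the third step genuinely uniform in $t$: the belief vectors $\kappa(h_2^t)$ spread their mass over exponentially many private histories $\chi^t$, so the coefficient matrix of (\ref{B.19}) can become increasingly ill-conditioned as $t$ grows, and a careless resolution would force $\eta$ or $\overline\pi_0$ to depend on $t$. Controlling the solve with $t$-independent constants relies precisely on the two structural features at hand — the bounded sample size $K$, which caps the number of equations at a polynomial in $t$, and the inductive confinement of all one-period play probabilities to $[\eta q^*,1-\eta q^*]$, which keeps every likelihood ratio in Bayes' rule and every conditional distribution $\kappa(h_2^t)$ non-degenerate. Making this quantitative, and checking that the accumulated corrections never push any $q(\chi^t)$ outside $[\eta q^*,1-\eta q^*]$, is the technical heart of Lemma \ref{LB.5}; the counting bound, the Bayes estimate, and the indifference reading of (\ref{B.19}) are routine given the earlier parts of the proof.
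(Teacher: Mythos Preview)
Your first two paragraphs are correct and track the paper's argument: the counting bound on the number of distinct equations and the Bayes estimate $\pi(h_2^t)<\eta q^*$ are exactly what the paper does (the paper also reduces to $|N_t|=K$ via a Blackwell-dominance remark, which matches your observation that coarser samples give redundant equations).

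The divergence is in your third step. The paper does not attempt any iterative solve ordered by refinement of samples. Instead it argues existence abstractly: the vectors $\kappa(h_2^t)$ are ``convex independent'' so the coefficient matrix has full row rank, Rouch\'{e}--Capelli gives a solution, and then continuity of the solution correspondence in the right-hand side (citing Proposition~10 of Fefferman and Koll\'{a}r) yields a solution near $q^*\mathbf{1}$ once all $\pi(h_2^t)$ are small. This is shorter and avoids designing an explicit algorithm; your proposed iterative construction, by contrast, would make the dependence on the data more transparent if it could be carried out.

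You have, however, put your finger on a point the paper's proof glosses over: uniformity in $t$. As written, the continuity argument produces for each fixed $t$ some $\eta_t$ that works, but the modulus of continuity of the solution correspondence a priori depends on the coefficient matrix $\boldsymbol{\kappa}_t$, which changes with $t$; the paper simply asserts a single $\eta$ suffices. Your sketch names this difficulty honestly but does not resolve it either---the phrase ``after shrinking $\overline\pi_0(\eta)$ enough, the cumulative displacement of each coordinate is an order of magnitude below the box half-width'' is exactly the estimate that needs to be supplied, and it is not clear from your outline how the refinement-ordered solve keeps corrections bounded independently of $t$ when the $\kappa(h_2^t)$ spread mass over exponentially many coordinates. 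So your proposal is not wrong, but at the decisive point it is a plan rather than a proof, and the paper's route---while also incomplete on uniformity---at least isolates the missing ingredient as a single continuity statement.
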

\begin{proof}[Proof of Lemma B.5:]
First, I show that linear system (\ref{B.19}), which is equivalent to
\begin{equation*}
    \kappa(h_2^t) \cdot \mathbf{q}_t = \frac{q^*-\pi(h_2^t)}{1-\pi(h_2^t)}\textrm{ for every } h_2^t \in \mathcal{H}_2^t
\end{equation*}
admits a solution. To start with, it is without loss to focus on $h_2^t$ with $|N_t|=K$. This is because for every $|N_t'|<K$, there exists
$|N_t''|=K$ such that an agent
who observes $\{a_s\}_{s \in N_t''}$
has a Blackwell more informative information structure compared to an agent
who observes $\{a_s\}_{s \in N_t'}$. As a result, (\ref{B.20}) is satisfied for all
$h_2^t$ with $|N_t|=K$ implies that (\ref{B.20}) is satisfied for all
$h_2^t$ with $|N_t|\leq K$.

Second, the definition of $M$ suggests that $2^t> 2^K {t \choose K}$ for every $t \geq M$. This suggests that linear system:
\begin{equation}\label{B.20}
      \kappa(h_2^t) \cdot \mathbf{q}_t = \frac{q^*-\pi(h_2^t)}{1-\pi(h_2^t)} \textrm{ for every } h_2^t \in \mathcal{H}_2^t \textrm{ with } |N_t|=K
\end{equation}
is underdetermined. An important observation is that the following set of vectors:
\begin{equation*}
    \{\kappa (h_2^t)\}_{h_2^t \in \mathcal{H}_2^t \textrm{ with } |N_t|=K}
\end{equation*}
are convex independent.
This is because (1) when $s \in N_t$,
player $2$ knows the value of $\chi_s$ and (2)
when $s \notin N_t$, player $2$'s belief about $\chi_s$ is not degenerate.
Let $\boldsymbol{\kappa}_t$ be the coefficient matrix of linear system (\ref{B.20}).
Convex independence of $\kappa(h_2^t)$ suggests
that the rank of the  $\boldsymbol{\kappa}_t$ is ${t \choose K}$.
Let
\begin{equation}\label{B.21}
    \boldsymbol{\widetilde{q}_t} \equiv \Big\{\frac{q^*-\pi(h_2^t)}{1-\pi(h_2^t)}\Big\}_{h_2^t \in \mathcal{H}_2^t \textrm{ with } |N_t|=K}.
\end{equation}
Since (\ref{B.20}) is underdetermined, the rank of the augmented matrix $(\boldsymbol{\kappa}_t,  \boldsymbol{\widetilde{q}_t})$
is also ${t \choose K}$. The Rouch\'{e}-Capelli theorem implies that (\ref{B.20}) admits at least one solution.

Third, I show there exists a solution where each entry of $\mathbf{q}_t$ belongs to the interval $[\eta q^*, 1-\eta q^*]$. This is because the RHS of (\ref{B.20}) converges to $q^*$ as $\pi(h_2^t) \rightarrow 0$, and  linear system (\ref{B.20})
admits a solution $\mathbf{q}_t=(q^*,...,q^*)$
when $\pi(h_2^t)=0$. Proposition 10 in Fefferman and Koll\'{a}r (2013) suggests that the solution correspondence is continuous with respect to $\pi(h_2^t)$. This suggests the existence of $\eta \in (0,1/2)$ such that for every $\pi(h_2^t)< \eta q^*$, there exists a solution to (\ref{B.20}) $\mathbf{q}_t$ such that
\begin{equation*}
    ||\mathbf{q}_t-(q^*,...,q^*)||_{L^{\infty}}< \eta.
\end{equation*}
When $\eta$ is small enough, all entries of $\mathbf{q}_t$ belong to the interval $[q^* \eta ,1-q^* \eta]$.

Fourth, I show by induction that $\pi(h_2^t)< \eta q^*$. In period $t=M$, since the strategic-type player $1$ plays $a^*$ with probability at least $\eta q^*$ at every history of the reputation-building phase before period $M$ and
player $2$'s prior belief attaches probability no more than $\overline{\pi}(\eta)$ to the commitment type, her posterior belief attaches probability at most $\eta \pi^*$ to the commitment type given inequality (\ref{B.10}). The third step then implies the existence of $\mathbf{q}_t$ such that $q(\chi^t) \in [\eta q^*, 1-\eta q^*]$ for every $\chi^t \in \{0,1\}^t$.
Suppose
$\pi(h_2^t)< \eta q^*$ for every $h_2^t \in \mathcal{H}_2^t$ and every $t \leq T$. The same reasoning implies that $\pi(h_2^{T+1})< \eta q^*$ for every $h_2^{T+1} \in \mathcal{H}_2^{T+1}$ given that each player $2$ observes at most $K$ realizations among $\{a_t\}_{t=0}^T$. This also suggests the existence of $\mathbf{q}_{T+1}$ with
$q(\chi^{T+1}) \in [\eta q^*, 1-\eta q^*]$ for every $\chi^{T+1} \in \{0,1\}^{T+1}$.
\end{proof}

\paragraph{Promising Keeping Constraint:} I conclude the proof by verifying player $1$'s promise keeping constraint, i.e., the continuation play delivers strategic-type player $1$ his continuation value in period $0$, which is $u_1(a',b')$. Since strategic-type player $1$ plays $a^*$ with probability at least $\eta q^*$ in the reputation building phase, and conditional on $a_{t-1}=a^*$, play transits to the reputation maintenance phase with probability $\rho$, the equilibrium play belongs to the reputation maintenance phase or punishment phase with probability $1$ as $t \rightarrow \infty$. This suggests that on the equilibrium path, play either converges to $(a^*,b^*)$ in every period, or $(a',b')$ in every period, and player $1$'s continuation value in those cases are $u_1(a^*,b^*)$ and $u_1(a',b')$, respectively. This verifies the promise keeping constraints.
\subsection{Proof of Theorem 3': Generalized Coordination Games}\label{subB.3}
First, consider the trivial case in which there exists a unique pure strategy Nash Equilibrium, i.e.,
$a'=a^*$. Player $1$'s payoff is $u_1(a',b')$ in an equilibrium where player $1$ plays $a^*$ at every history and player $2$ plays $b^*$ at every history.

Next, consider the nontrivial case in which $a''=a^*$ but $a' \neq a^*$, i.e., the Stackelberg outcome is a pure-strategy
Nash Equilibrium in the stage game, but there also exists another stage-game Nash Equilibrium which results in strictly lower payoff for player $1$.

I construct an equilibrium that
consists of three phases: a \textit{reputation building phase}, a \textit{reputation maintenance phase}, and a \textit{punishment phase}, which
\textit{depends only} on the history of player $2$'s actions that is commonly observed by both players. I only highlight the differences between this construction and the one in Appendix \ref{subB.2}, in order to avoid repetition.
In period $t$,
\begin{itemize}
  \item play is in the reputation building phase if $t=0$ or $(b_0,...,b_{t-1})=(b',...,b')$;
  \item play is in the reputation maintenance phase if (1) there exists $s \leq t-1$ such that $b_s=b^{**}$, and (2)
  $b_{s^*+1} =b^*$ where $s^*$ is the smallest $s \in \mathbb{N}$ such that $b_s=b^{**}$.
  \item play is in the punishment phase if (1) there exists $s \leq t-1$ such that $b_s=b^{**}$, and (2)
  $b_{s^*+1} \neq b^*$ where $s^*$ is the smallest $s \in \mathbb{N}$ such that $b_s=b^{**}$.
\end{itemize}
Play starts from the reputation building phase, and eventually ends up in the reputation maintenance phase or the punishment phase.
Different from the construction in Appendix \ref{subB.2}:
\begin{enumerate}
  \item The set of reputation maintenance phase histories is larger in the current construction. In particular, player $2$ only checks her predecessor's action in the next period after $b^{**}$ occurs, and as long as it is $b^*$, play remains in the reputation maintenance phase regardless of players' actions after period $s^*+1$.
  \item Players' strategies differ in the reputation maintenance phase, and player $2$'s strategy is different in the reputation building phase.
\end{enumerate}
Players' strategies in the punishment phase and player $1$'s strategy in the reputation building phase remain the same as in Appendix \ref{subB.2}, which I omit to avoid repetition.
The differences are in players' strategies in the reputation maintenance phase.

\paragraph{Reputation Maintenance Phase:}  Let $s^*$ be the smallest $s$ such that $b_s=b^{**}$. At every history of the reputation maintenance phase with $t \geq s^* +2$, strategic-type player $1$ plays $a^*$ and player $2$ plays $b^*$. In period $s^* +1$,
\begin{itemize}
  \item If $a_{s^*}=a^*$, then strategic-type player $1$ plays $a^*$ and player $2$ plays $b^*$.
  \item[] If $a_{s^*} \neq a^*$ and
  \begin{equation}\label{B.22}
      \xi_{s^*} > \overline{\xi}' \equiv \frac{1-\delta}{\delta} \cdot \frac{u_1(a',b')-u_1(a^*,b')}{u_1(a^*,b^*)-u_1(a',b')}.
  \end{equation}
  then strategic-type player $1$ plays $a^*$.
  \item[] If $a_{s^*} \neq a^*$ and $\xi_{s^*} \leq \overline{\xi}'$,
  then strategic-type player $1$ plays $a'$ and player $2$ plays $b'$.
\end{itemize}
\paragraph{Reputation Building Phase:} Player $1$'s strategy remains the same as Appendix \ref{subB.2}. Player $2$s' strategy differs
in terms of $\rho$, the probability of playing $b^{**}$ in period $t$ conditional on $a_{t-1}=a^*$ but unconditional on the realization of $\mathcal{N}_t$. This $\rho$ is pinned down by the following equality:
\begin{equation}\label{B.23}
    V_1'=(1-\delta)u_1(a^*,\rho b^* +(1-\rho)b') +\delta \rho u_1(a^*,b^*) +\delta (1-\rho) V_1',
\end{equation}
with
\begin{equation*}
    u_1(a',b')=(1-\delta) u_1(a^*,b')+\delta V_1'.
\end{equation*}
The rest of the construction remains the same.

\subsection{Proof of Lemma B.2}\label{subB.4}
\begin{proof}[Proof of Lemma B.2:]
According to (\ref{B.8}), we have:
\begin{equation*}
    V_1=u_1(a^*,b^*)- \frac{1-\delta}{\delta} \Big(u_1(a'',b^*)-u_1(a^*,b^*)\Big)
    -\frac{1-\delta}{\delta}\frac{1-\beta}{\beta} \Big(u_1(a'',b'')-u_1(a^*,b'')\Big).
\end{equation*}
Plugging this into (\ref{B.4}), we obtain:
\begin{equation*}
    \frac{1}{1-\beta} \underbrace{\Big(u_1(a'',b^*)-u_1(a^*,b^*)\Big)}_{\equiv Y}
    + \frac{1}{\beta} \underbrace{\Big(u_1(a'',b'')-u_1(a^*,b'')\Big)}_{\equiv Z}
    =\underbrace{\frac{\delta}{1-\delta} \Big(
    u_1(a^*,b^*)-(1-\delta) u_1(a^*,b'') -\delta u_1(a',b')
    \Big)}_{\equiv X}.
\end{equation*}
The two roots are given by:
\begin{equation}
    \beta = \frac{X+Z-Y \pm \sqrt{(X+Z-Y)^2-4XZ}}{2X}
\end{equation}
When the stage-game features strict lack-of-commitment (see Definition 1), $X>0$ and $Y,Z \geq 0$.
As $\delta \rightarrow 1$, $X \rightarrow \infty$ while $Y$ and $Z$ remain constant, and therefore,
both roots are real. Moreover,
the larger of the two roots converge to $1$. To show that it is strictly less than $1$, it is equivalent to show that:
\begin{equation*}
    \sqrt{(X+Z-Y)^2-4XZ} < 2X -(X+Z-Y),
\end{equation*}
which is equivalent to:
\begin{equation*}
    X^2+Y^2+Z^2-2XZ-2YZ-2XY < X^2+Y^2+Z^2-2XZ-2YZ+2XY.
\end{equation*}
The last inequality is true given that both $X$ and $Y$ are strictly positive. This establishes the existence of a real root $\beta$ that is strictly less than $1$ and converges to $1$ as $\delta \rightarrow 1$.
\end{proof}

\section{Proof of Theorem 4}\label{secC}
\subsection{Proof of Theorem 4: Statement 1}\label{subC.1}
For every public history $h^t$,
let $g(h^t)$ be the probability that player $2$ plays $b^*$ at $h^t$. Let $g(h^t,\omega^c)$ be the probability that player $2$ plays $b^*$ at $h^t$ conditional on player $1$ is the commitment type. For any public history $h^t \equiv \{a_{\max\{0,t-K\}},...,a_{t-1}\}=\{a^*,....,a^*\}$, I derive a lower bound on:
    $\frac{g(h^t,\omega^c)}{g(h^t)}$
as a function of $g(h^t)$,
or equivalently, an upper bound on
\begin{equation}\label{4.3}
    \frac{1-g(h^t,\omega^c)}{1-g(h^t)}.
\end{equation}
Let $A \equiv \{a^*,a'\}$ and $S \equiv \{s^*,s_1,s_2,...,s_m\}$. Let
$r(h^t)$ be the probability that $a^*$ is played at $h^t$, let $\tau(s_i)(h^t)$ be the probability that signal $s_i$ occurs at $h^t$, and let $p(s_i)(h^t)$ be the posterior probability of $a^*$ conditional on observing $s_i$ at $h^t$.
I suppress the dependence on $h^t$ in order to simplify notation.
Since $\{b^*\} =\textrm{BR}_2(a^*)$ and $|A|=2$, we have the following two implications:
\begin{enumerate}
  \item there exists a cutoff belief $p^* \in (0,1)$ such that player $2$ has a strict incentive to play $b^*$ after observing $s_i$ if and only if $p(s_i) > p^*$.
  \item there exists a constant $C \in \mathbb{R}_+$ such that $1-r \geq C (1-g)$.
\end{enumerate}
According to the first implication, it is
without loss of generality to label the signal realizations such that $p(s_1) \geq p(s_2) \geq ... \geq p(s_m)$, and moreover, there exists $k \in \{1,2,...,m\}$ such that player $2$ plays $b^*$ for sure after observing $s_1,...,s_{k-1}$, and does not play $b^*$ otherwise.\footnote{Ignoring the possibility that player $2$ plays a mixed action following certain signal realizations is without loss of generality in proving the current theorem. This is because when player $2$ mixes between $n$ actions after one signal realization, we can split this signal realization into $n$ signal realizations with the same posterior belief, such that player $2$ plays a pure action following each of these signal realizations.} Therefore,
\begin{equation*}
    r (1-f(s^*|a^*)) = \sum_{i=1}^m \tau(s_i) p(s_i), \quad 1-r=\sum_{i=1}^m \tau(s_i) (1- p(s_i)), \quad \textrm{and } \sum_{i=k}^m \tau(s_i)=1-g.
\end{equation*}
Using the fact that $p(s_1) \geq p(s_2) \geq ... \geq p(s_m)$, we know that:
\begin{equation}\label{4.4}
    \frac{\sum_{i=1}^{k-1} \tau (s_i) p(s_i)}{\sum_{i=1}^{k-1} \tau (s_i) (1-p(s_i))}
    \geq \frac{r (1-f(s^*|a^*))}{1-r} \geq \frac{\sum_{i=k}^{m} \tau (s_i) p(s_i)}{\sum_{i=k}^{m} \tau (s_i) (1-p(s_i))}.
\end{equation}
As a result,
\begin{equation}\label{4.5}
 \sum_{i=k}^{m} \tau (s_i) p(s_i) \leq \frac{r (1-f(s^*|a^*))}{1-r f(s^*|a^*)} (1-g),
\end{equation}
and
\begin{equation}\label{4.6}
    \sum_{i=k}^{m} \tau (s_i) (1- p(s_i)) \geq \frac{1-r}{1-r f(s^*|a^*)} (1-g).
\end{equation}
Therefore,
\begin{equation}\label{4.7}
   \frac{ 1-g(\omega^c)}{1-g} \leq \frac{1-f(s^*|a^*)}{1-r f(s^*|a^*)},
\end{equation}
Using the second implication, namely, $r \leq 1-C (1-g)$, we have:
\begin{equation}\label{4.8}
    \frac{ 1-g(\omega^c)}{1-g} \leq \frac{1-f(s^*|a^*)}{1-f(s^*|a^*)+C f(s^*|a^*) (1-g)}.
\end{equation}
Similarly, the lower bound on the likelihood ratio with which $b^*$ occurs is given by:
\begin{equation}\label{4.9}
    \frac{g(\omega^c)}{g} \geq 1+ \frac{f(s^*|a^*) (1-g(h^t))}{g- r f(s^*|a^*)}
    \geq 1+ \frac{f(s^*|a^*) (1-g)}{g-f(s^*|a^*) (1-C(1-g))}
\end{equation}
Let $\beta(h^t) \in \Delta (B)$ be the distribution over player $2$'s action at $h^t$, and let $\beta(h^t,\omega^c) \in \Delta (B)$ be the distribution over player $2$'s action at $h^t$ conditional on player $1$ being the commitment type. Inequalities (\ref{4.8}) and (\ref{4.9}) imply the following lower bound on the KL divergence between $\beta(h^t)$ and $\beta(h^t,\omega^c)$:
\begin{equation}\label{4.10}
    d \Big( \beta(h^t)
    \Big| \beta(h^t,\omega^c)
    \Big) \leq \mathcal{L} \big(1-g(h^t)\big).
\end{equation}
The Pinsker's inequality implies that
$\mathcal{L}(\cdot)$ is of the magnitude $(1-g(h^t))^2$.

This lower bound on the KL divergence bounds the speed of learning at $h^t$ from below, as a function of the probability with which player $2$ at $h^t$ does not play $b^*$. This implies a lower bound on the speed of learning when player $2$ in the future observes $b^*$ in period $t$, \textit{given that he knew} that the probability with which player $2$ plays $b^*$ at $h^t$ is no more than $g(h^t)$. However, unlike models with unbounded memory, future player $2$'s information does not nest that of player $2$'s in period $t$. This is because future player $2$s may not observe $\{a_{t-K},...,a_{t-1}\}$, and hence, cannot interpret the meaning of $b_t$ in the same way as player $2$ in period $t$ does.

For every $s,t \in \mathbb{N}$ with $s>t$, I provide a lower bound on the informativeness of $b_t$ about player $1$'s type from the perspective of player $2$ who arrives in period $s$, as a function of the informativeness of $b_t$ (about player $1$'s type) from the perspective of player $2$ who arrives in period $t$. This together with (\ref{4.10}) establishes a lower bound on the informativeness of $b_t$ from the perspective of future player $2$s as a function of the probability with which $b^*$ is not being played. Applying the result in Gossner (2011), one obtains the commitment payoff theorem.

Let $\pi(h^t)$ be player $2$'s belief about $\omega$ at $h^t$ before observing the period $t$ signal $s_t$. By definition, $\pi(h^0)=\pi_0$.
For every strategy profile $\sigma$, let  $\mathcal{P}^{\sigma}$ be the probability measure over $\mathcal{H}$ induced by $\sigma$,
let $\mathcal{P}^{\sigma,\omega^c}$ be the probability measure induced by $\sigma$ conditional on player $1$ being the commitment type, and
let $\mathcal{P}^{\sigma,\omega^s}$ be the probability measure induced by $\sigma$ conditional on player $1$ being the strategic type.
One can the write the posterior likelihood ratio as the product of the likelihood ratio of the signals observed in each period:
\begin{equation*}
   \frac{ \pi(h^t)}{1-\pi(h^t)} \Big/ \frac{\pi_0}{1-\pi_0}
   \end{equation*}
 \begin{equation}\label{4.11}
   =\frac{\mathcal{P}^{\sigma,\omega^c}(b_0)}{\mathcal{P}^{\sigma,\omega^s}(b_0)} \cdot \frac{\mathcal{P}^{\sigma,\omega^c}(b_1|b_0)}{\mathcal{P}^{\sigma,\omega^s}(b_1|b_0)} \cdot ... \cdot
   \frac{\mathcal{P}^{\sigma,\omega^c}(b_{t-1}|b_{t-2},...,b_0)}{\mathcal{P}^{\sigma,\omega^s}(b_{t-1}|b_{t-2},...,b_0)}\cdot
   \frac{\mathcal{P}^{\sigma,\omega^c}(a_{t-K},...,a_{t-1}|b_t,b_{t-1},...,b_0)}{\mathcal{P}^{\sigma,\omega^s}(a_{t-K},...,a_{t-1}|b_t,b_{t-1},...,b_0)}
\end{equation}
Furthermore, for every $\epsilon>0$ and every $t$, we know that:
\begin{equation}\label{4.12}
    \mathcal{P}^{\sigma,\omega^c} \Big(\pi^{\sigma}(b_0,b_1,...b_{t-1}) < \epsilon \pi_0 \Big) \leq \epsilon \frac{1-\pi_0}{1-\pi_0 \epsilon},
\end{equation}
in which $\pi^{\sigma}(b_0,b_1,...b_{t-1}) \in \Delta (\Omega)$ is player $2$'s belief about player $1$'s type after observing $(b_0,...,b_{t-1})$ but before observing player $1$'s actions and $s_t$. For every $\epsilon>0$, let $\rho^*(\epsilon)$ be defined as:
\begin{equation}\label{4.13}
   \rho^*(\epsilon) \equiv \frac{\epsilon \pi_0}{1-C\epsilon}.
\end{equation}
Next, if $\pi^{\sigma}(b_0,b_1,...b_{t-1}) \geq \epsilon \pi_0$,
and player $2$ in period $t$ believes that $b_t=b^*$
occurs with probability less than $1-\epsilon$
after observing $(a_{t-K},...,a_{t-1})=(a^*,...,a^*)$,
then under probability measure $\mathcal{P}^{\sigma}$, the probability of
$\{a_{t-K},...,a_{t-1}\}=\{a^*,...,a^*\}$ conditional on $(b_0,...,b_{t-1})$ is at least  $\rho^*(\epsilon)$.

To see this, suppose towards a contradiction that the probability with which $(a_{t-K},...,a_{t-1})=(a^*,...,a^*)$ is strictly less than $\rho^*(\epsilon)$ conditional on $(b_0,...,b_{t-1})$.
According to (\ref{4.13}), after observing $(a_{t-K},...,a_{t-1})=(a^*,...,a^*)$ in period $t$ and given that
 $\pi^{\sigma}(b_0,b_1,...b_{t-1}) \geq \epsilon \pi_0$, $\pi(h^t)$
attaches probability strictly more than $1-C\epsilon$ to the commitment type. As a result, player $2$ in period $t$ believes that $a^*$ is played with probability at least $1-C\epsilon$ at $h^t$. This contradicts presumption that she plays $b^*$ with probability less than $1-\epsilon$.

Next, I study the believed distribution of $b_t$
from the perspective of player $2$ in period $s$ in the event that $\pi^{\sigma}(b_0,b_1,...b_{t-1}) \geq \epsilon \pi_0$.
Let $\mathcal{P}(\sigma,t,s) \in \Delta (\Delta (A^K))$ be player $2$'s signal structure in period $s (\geq t)$ about
$\{a_{t-K},...,a_{t-1}\}$ under equilibrium $\sigma$.
For every small enough $\eta>0$, given that $\mathcal{P}(\sigma,t)$ attaches probability at least $\rho^*(\epsilon)$ to
$\{a_{t-K},...,a_{t-1}\}=\{a^*,...,a^*\}$,
the probability with which $\mathcal{P}(\sigma,t,s)$ attaches to the event that
$\{a_{t-K},...,a_{t-1}\}=\{a^*,a^*,...,a^*\}$ occurs with probability less than $\eta \rho^*(\epsilon)$ conditional on $\{a_{t-K},...,a_{t-1}\}=\{a^*,a^*,...,a^*\}$ is bounded from above by:
\begin{equation}\label{4.14}
    \frac{\eta \rho^*(\epsilon) (1-\rho^*(\epsilon)) }{(1-\eta\rho^*(\epsilon)) \rho^*(\epsilon)}= \eta \frac{1-\rho^*(\epsilon)}{1-\rho^*(\epsilon)\eta}.
\end{equation}
Let $g(t|h^s)$ be player $2$'s belief about the probability with which $b^*$ is played in period $t$ when she observes $h^s$. Let
$g(t,\omega^c|h^s)$ be her belief about
the probability with which $b^*$ is played in period $t$ conditional on player $1$ being committed. The conclusions in (\ref{4.8}) and (\ref{4.9}) also apply in this setting, namely,
\begin{equation}\label{4.15}
    \frac{1-g(t,\omega^c|h^s)}{1-g(t|h^s)} \leq  \frac{1-f(s^*|a^*)}{1-f(s^*|a^*)+C f(s^*|a^*) (1-g(t|h^s))}
\end{equation}
and
\begin{equation}\label{4.16}
\frac{g(t,\omega^c|h^s)}{g(t|h^s)} \geq  1+ \frac{f(s^*|a^*) (1-g(t|h^s))}{g(t|h^s)-f(s^*|a^*) (1-C(1-g(t|h^s)))}
\end{equation}
Whenever player $2$ in period $s$ believes that
$\{a_{t-K},...,a_{t-1}\}=\{a^*,a^*,...,a^*\}$ occurs with probability more than $\eta \cdot \rho^* (\epsilon)$, we have:
\begin{equation}\label{4.17}
    g (t|h^s) \leq 1-  \epsilon \eta \rho^*.
\end{equation}
Applying (\ref{4.17}) to (\ref{4.15}) and (\ref{4.16}), we obtain a lower bound on the KL divergence between
$g(t,\omega^c|h^s)$ and $g(t|h^s)$. This is the lower bound on the speed with which player $2$ at $h^s$ will learn through $b_t=b^*$ about player $1$'s type, which applies to all events except for one that occurs with probability less than $\eta \frac{1-\rho^*}{1-\rho^*\eta}$.
Therefore, for every $\epsilon$ and $\pi_0$, there exists $\underline{\delta}$ such that when $\delta > \underline{\delta}$, the strategic player $1$'s payoff by playing $a^*$ in every period is at least:
\begin{equation}\label{4.18}
    \Big(1-\epsilon - \epsilon \frac{1-\pi_0}{1-\pi_0 \epsilon}\Big)
    u_1(a^*,b^*)
    +
    \Big(\epsilon  + \epsilon \frac{1-\pi_0}{1-\pi_0 \epsilon} \Big) \min_{a,b}u_1(a,b)
    -\epsilon.
\end{equation}
Taking $\epsilon \rightarrow 0$ and $\delta \rightarrow 1$, (\ref{4.18}) implies the commitment payoff theorem.

\subsection{Proof of Theorem 4: Statement 2}\label{subC.2}
Recall the definitions of $(a^*,b^*)$ and $(a',b')$ in the proof of Theorem \ref{Theorem1}.
I omit the trivial case in which $a^*=a'$, i.e., $a^*$ is player $1$'s action in his worst pure-strategy Nash Equilibrium.
I focus on the interesting case in which $a^* \neq a'$. Since $|A|=2$, we have $A=\{a^*,a'\}$.
Let
\begin{equation}\label{5.1}
  l^*(\mathbf{f}) \equiv  \max_{s \in S} \frac{f(s|a^*)}{f(s|a')}
\end{equation}
Consider the construction in the proof of Theorem \ref{Theorem1} with one modification: the overall probability with which player $1$ plays $a^*$ is:
\begin{equation}\label{5.2}
   \widehat{q} \equiv \frac{q^*}{q^*+(1-q^*)l^*(\mathbf{f})},
\end{equation}
and the probability with which he plays $a'$ is $1-\widehat{q}$. Let $\overline{\pi}_0=\widehat{q}^K$. Player $2$ has an incentive to play $b'$ in the reputation building phase, regardless of her observation of player $1$'s action in the past $K$ periods, and regardless of the signal she receives about player $1$'s action in the current period. When $K \geq 1$, the rest of the constructive proof follows from that of Theorem \ref{Theorem1}. When $K=0$, strategic-type player $1$ plays $a'$ at every history and player $2$ plays $b'$ at every history. Such a strategy profile is an equilibrium when the prior probability of commitment type is small enough.
\subsection{Proof of Theorem 4'}\label{subC.3}
For every $\alpha \in \Delta (A)$ and
$\beta: S \rightarrow \Delta (B)$, let $\pi(\alpha,\beta) \in \Delta (B)$ be the distribution over $b$ induced by $(\alpha,\beta)$.
When players' payoffs are monotone-supermodular and $\mathbf{f}$ is unboundedly informative about $a^*$ and satisfies MLRP, I establish the existence of $C>0$ such that for every $\alpha \in \Delta (A)$ with $a^* \in \textrm{supp}(\alpha)$, and every $\beta: S \rightarrow \Delta (B)$ that is player $2$'s stage-game best reply against $\alpha$ after observing the realization of $s$, if $\pi(\alpha,\beta)[b^*]<1-\varepsilon$, then
\begin{equation}\label{5.3}
    d\Big( \pi(\alpha,\beta) \Big\| \pi (a^*,\beta) \Big) > C \varepsilon^2.
\end{equation}
The rest of the proof follows from that of Theorem \ref{Theorem4} in Appendix \ref{subC.1} and is omitted to avoid repetition.

Let $\overline{A}$ be the set of actions that are strictly higher than $a^*$ and let $\underline{A}$ be the set of actions that are strictly lower
than $a^*$. Since $\mathbf{f}$ is unboundedly informative about $a^*$, there exists $s^* \in S$
such that $f(s^*|a)>0$ if and only if $a=a^*$. Let $\overline{S}$
be the set of signal realizations that are strictly higher than $s^*$ and let $\underline{S}$ be the set of signal realizations that are strictly lower than $s^*$. Since $\mathbf{f}$ satisfies MLRP, for every $s \in \overline{S}$, $f(s|a)>0$ only if $a \succeq a^*$, and for every $s \in \underline{S}$, $f(s|a)>0$ only if $a \preceq a^*$.

For every $\alpha \in \Delta (A)$, let $\alpha' \in \Delta (A)$ be the distribution over $A$ conditional on $a \neq a^*$.
If $\textrm{supp}(\alpha) \cap \overline{A} \neq \{\varnothing\}$, then
let $\overline{\alpha} \in \Delta (A)$ be the distribution over $A$ conditional on $a \in \textrm{supp}(\alpha) \cap \overline{A}$;
if $\textrm{supp}(\alpha) \cap \underline{A} \neq \{\varnothing\}$, then
let $\underline{\alpha} \in \Delta (A)$ be the distribution over $A$ conditional on $a \in \textrm{supp}(\alpha) \cap \underline{A}$.
By definition, there exists $\lambda \in [0,1]$
such that $\alpha'=\lambda \overline{\alpha}+(1-\lambda) \underline{\alpha}$.
When $\pi(\alpha,\beta)[b^*]<1-\varepsilon$, either
$\overline{\alpha}$ is well-defined and
$\pi(\overline{\alpha},\beta)[b^*]<1$, or
$\underline{\alpha}$ is well-defined and
$\pi(\underline{\alpha},\beta)[b^*]<1$, or both. Since players' payoffs are monotone-supermodular, and $\beta$ best replies against $\alpha$, there exist $\overline{s} \in \overline{S}$ and $\underline{s} \in \underline{S}$ such that $\beta(s)[b^*]=1$ only if $\overline{s} \succ s \succ \underline{s}$, and
$\beta(s)[b^*]>0$ only if $\overline{s} \succeq s \succeq \underline{s}$.

Suppose toward a contradiction that  $d\big( \pi(\alpha,\beta) \big\| \pi (a^*,\beta) \big) =0$, then $d\big( \pi(\alpha',\beta) \big\| \pi (a^*,\beta) \big) =0$, which suggests:
\begin{equation}\label{5.4}
    \lambda \Big(\beta(\overline{s})[b^*] \cdot f(\overline{s}|\overline{\alpha}) + \sum_{s \succ \overline{s}} f(s|\overline{\alpha}) \Big)
    =\beta(\overline{s})[b^*] \cdot f(\overline{s}|a^*)+ \sum_{s \succ \overline{s}} f(s|a^*)
\end{equation}
and
\begin{equation}\label{5.5}
  (1-  \lambda) \Big(\beta(\underline{s})[b^*] \cdot f(\underline{s}|\underline{\alpha}) + \sum_{s \prec \underline{s}} f(s|\underline{\alpha}) \Big)
    =\beta(\underline{s})[b^*]\cdot f(\underline{s}|a^*) + \sum_{s \prec \underline{s}} f(s|a^*).
\end{equation}
Since $f(s^*|a)>0$ if and only if $a=a^*$, (\ref{5.4}) and (\ref{5.5}) together imply that either
\begin{equation}\label{5.6}
    \lambda \Big( (1-\beta(\overline{s})[b^*]) \cdot f(\overline{s}|\overline{\alpha}) + \sum_{\overline{s} \succ s \succ s^*} f(s|\overline{\alpha}) \Big)
    > (1-\beta(\overline{s})[b^*]) \cdot f(\overline{s}|a^*) + \sum_{\overline{s} \succ s \succ s^*} f(s|a^*)
\end{equation}
or
\begin{equation}\label{5.7}
  (1-  \lambda) \Big((1-\beta(\underline{s})[b^*]) \cdot f(\underline{s}|\underline{\alpha}) + \sum_{\underline{s} \prec s \prec s^*} f(s|\underline{\alpha}) \Big)
    > (1-\beta(\underline{s})[b^*]) \cdot f(\underline{s}|a^*) + \sum_{\underline{s} \prec s \prec s^*} f(s|a^*).
\end{equation}
If (\ref{5.6}) is true, then (\ref{5.4}) and (\ref{5.6}) violate MLRP of $\mathbf{f}$.
If (\ref{5.7}) is true, then (\ref{5.5}) and (\ref{5.7}) violate MLRP of $\mathbf{f}$. Since the number of signal realizations is finite, there exists $C>0$ such that
$|| \pi(\alpha,\beta) - \pi (a^*,\beta)||> \frac{C}{2} \varepsilon$ whenever $\pi(\alpha,\beta)[b^*]<1-\varepsilon$. The Pinsker's inequality then implies that $ d\Big( \pi(\alpha,\beta) \Big\| \pi (a^*,\beta) \Big) > C \varepsilon^2$.

\subsection{Bounded Informativeness \textit{vs} Full Support}\label{subC.4}
I provide an example that explains why the full support condition in statement 2 of Theorem 4' cannot be replaced
by $\mathbf{f}$ being boundedly informative about $a^*$. Players' stage game payoffs are given by:
\begin{center}
\begin{tabular}{| c | c | c |}
\hline
  - & $b^*$ & $b'$  \\
  \hline
  $\overline{a}$ & $1,4$ & $-2,0$ \\
  \hline
  $a^*$ & $2,1$ & $-1,0$  \\
  \hline
    $\underline{a}$ & $3,-2$ & $0,0$ \\
  \hline
\end{tabular}
\end{center}
Let $S \equiv \{\overline{s},s^*,\underline{s}\}$, with $f(\overline{s}|\overline{a})=2/3$,
$f(s^*|\overline{a})=1/3$,  $f(\overline{s}|a^*)=1/3$,  $f(s^*|a^*)=2/3$,
and $f(\underline{s}|\underline{a})=1$. One can verify that players' stage-game payoffs are monotone-supermodular when player $1$'s actions are ranked according to $\overline{a} \succ a^* \succ \underline{a}$, and player $2$'s actions are ranked according to $b^*\succ b'$. When signal realizations are ranked according to $\overline{s} \succ s^* \succ \underline{s}$, $\mathbf{f}$ satisfies MLRP. It is easy to show that player $1$ can guarantee payoff $2$ in every Bayes Nash equilibrium. The reason is: if player $1$ plays $a^*$
in every period, player $2$ observes signal $s^*$ or $\overline{s}$, and has a strict incentive to play $b^*$. As a result,
player $1$'s payoff from playing $a^*$ in every period is at least $2$.

\end{spacing}
\newpage

\end{document}